\DeclareFontFamily{U}{wncy}{}
\DeclareFontShape{U}{wncy}{m}{n}{<->wncyr10}{}
\DeclareSymbolFont{mcy}{U}{wncy}{m}{n}
\DeclareMathSymbol{\SShuffle}{\mathord}{mcy}{"58}
\newcommand {\Shuffle}[2]{\mathop{\SShuffle}^{#2}_{#1}}
\theoremstyle{plain}
\newtheorem{proposition}[theorem]{Proposition}
\newcommand {\set}[1]{ \{ #1 \} }
\newcommand {\Set}[2]{ \{ #1 \mid #2 \} }
\newcommand {\Nat}{\mathbb{N}}
\newcommand {\Z}{\mathbb{Z}}
\newcommand {\Powerset}{\mathcal{P}}
\newcommand {\Move}[1]{ \xrightarrow{#1} }
\DeclareMathOperator {\Context}{Context}
\DeclareMathOperator {\Int}{Int}
\newcommand {\sync}[5]{ L(M(#1,#2)) \cap L(A_{#3}(#4,#5))}
\newcommand {\APi}[2]{\Int_{#1}(#2)}
\newcommand {\pchoice}[1]{p^{ch}_{#1}}
\newcommand {\plock}[1]{p^{lock}_{#1}}
\newcommand {\qchoice}{q^{ch}}
\newcommand {\ifmap}[1]{\lambda_{#1}}
\newcommand {\execTree}{\mathcal{T}}
\newcommand {\nodeLab}{\lambda}
\newcommand {\calO}{\mathcal{O}}
\newcommand {\bigO}[1]{ \calO ( #1 ) }
\newcommand {\bigOS}[1]{ \calO^* \! ( #1 ) }
\newcommand{\smallo} [1] {o(#1)}
\newcommand {\fptred}{\leq^{fpt}}
\newcommand {\ETH}{\ComplexityFont{ETH}}
\newcommand {\SETH}{\ComplexityFont{SETH}}
\newcommand {\SM}{\ComplexityFont{Shuff}}
\newcommand {\STMA}{\ComplexityFont{STMA}}
\newcommand {\BDFAI}{\ComplexityFont{BDFAI}}
\newcommand {\kSAT}[1]{#1 \text{-} \SAT}
\newcommand {\SetCov}{\ComplexityFont{Set~Cover}}
\newcommand {\Steiner}{\ComplexityFont{Steiner~Tree}}
\newcommand {\DomSet}{\ComplexityFont{Dominating~Set}}
\newcommand {\kClique}{\ComplexityFont{k \text{-} Clique}}
\newcommand {\kkClique}{\ComplexityFont{k \times k~Clique}}
\newcommand {\BCS}{\ComplexityFont{BCS}}
\newcommand {\CS}{\ComplexityFont{CS}}
\newcommand {\cs}{\mathit{cs}}
\newcommand {\m}{\mathit{m}}
\newcommand {\SMS}[5]{ (#1, #2, (#3_{#4})_{#4 \in [1..#5]}) }
\newcommand {\Formulas}{\mathcal{F}}
\newcommand {\ilang}{\mathit{IF}}
\newcommand {\ilangof}[1]{\ilang(#1)}
\newcommand {\iilang}{\mathit{IIF}}
\newcommand {\iilangof}[1]{\iilang(#1)}
\newcommand{\threadid}{\mathit{id}}
\newcommand {\kernel}[1]{\mathcal{#1}}
\newcommand {\equivR}{\mathcal{R}}
\newcommand {\bcslsd}{\ComplexityFont{BCS}\textsf{-}\ComplexityFont{L}}
\newcommand {\bcslany}{\ComplexityFont{BCS}\textsf{-}\ComplexityFont{L}\textsf{-}\ComplexityFont{ANY}}
\newcommand {\bcslrr}{\ComplexityFont{BCS}\textsf{-}\ComplexityFont{L}\textsf{-}\ComplexityFont{RR}}
\newcommand {\bcslfix}{\ComplexityFont{BCS}\textsf{-}\ComplexityFont{L}\textsf{-}\ComplexityFont{FIX}}
\newcommand {\closure}{\downarrow}
\newcommand {\mergeop}{\otimes}
\newcommand {\omerge}[2]{\odot_{(#1,#2)}}
\newcommand {\mergeoppar}[1]{\mergeop^{#1}}
\newcommand {\contract}[2]{[#1\mapsto #2]}
\newcommand {\mgraph}{\mathit{G}}
\newcommand {\csgraphof}[1]{\mgraph(#1)}
\newcommand {\contractionprocess}{\pi}
\newcommand {\indegreeof}[1]{\mathit{indeg}(#1)}
\newcommand {\indegreeofgraph}[2]{\mathit{indeg}_{#1}(#2)}
\newcommand {\outdegreeof}[1]{\mathit{outdeg}(#1)}
\newcommand {\outdegreeofgraph}[2]{\mathit{outdeg}_{#1}(#2)}
\newcommand {\degreeof}[1]{\mathit{deg}(#1)}
\newcommand {\degreeofgraph}[2]{\mathit{deg}_{#1}(#2)}
\newcommand {\sdim}{\mathit{sdim}}
\newcommand {\sdimof}[1]{\mathit{sdim}(#1)}
\newcommand {\sdimlang}{\mathit{SDL}(\Sigma, t, \sdim)}
\newcommand {\SGI}{\ComplexityFont{SGI}}
\newcommand {\abs}[1]{\left| #1 \right|}
\newcommand {\projectto}[2]{#1\downarrow#2}
\newcommand {\maximum}{\mathit{max}}
\newcommand {\minimum}{\mathit{min}}
\DeclareMathOperator {\cw}{cw}
\newcommand {\cwof}[1]{\cw(#1)}
\DeclareMathOperator {\wi}{width}
\newcommand {\widthof}[1]{\wi(#1)}
\newcommand {\carvingdecomp}{(T,\varphi)}
\DeclareMathOperator {\Leaf}{Leaf}
\newcommand {\Leafof}[1]{\Leaf(#1)}
\DeclareMathOperator {\Pos}{Pos}
\newcommand {\TM}{\mathcal{TM}}
\newcommand {\qinit}{q_{init}}
\newcommand {\qrej}{q_{rej}}
\newcommand {\qacc}{q_{acc}}
\newcommand {\fS}{\textbf{S}}
\newcommand {\fX}{\textbf{X}}
\newcommand {\switch}{\textsc{switch}}
\newcommand {\work}{\textsc{work}}
\newcommand {\accept}{\textsc{accept}}
\newcommand {\problemtitle}[1]{\gdef\@problemtitle{#1}}
\newcommand {\problemshort}[1]{\gdef\@problemshort{#1}}
\newcommand {\probleminput}[1]{\gdef\@probleminput{#1}}
\newcommand {\problemparameter}[1]{\gdef\@problemparameter{#1}}
\newcommand {\problemquestion}[1]{\gdef\@problemquestion{#1}}
  \par\addvspace{.5\baselineskip}
    \normalsize \textbf{Input:} & \normalsize \@probleminput \\% Input
    \normalsize \textbf{Parameter:} & \normalsize \@problemparameter \\ % Parameter
    \normalsize \textbf{Question:} & \normalsize \@problemquestion% Question
  \par\addvspace{.5\baselineskip}
  \par\addvspace{.5\baselineskip}
    \normalsize \textbf{Input:} & \normalsize \@probleminput \\% Input
    \normalsize \textbf{Question:} & \normalsize \@problemquestion% Question
  \par\addvspace{.5\baselineskip}
\begin{document}

\title{On the Complexity of Bounded Context Switching}

\author[1]{Peter Chini}
\author[1]{Jonathan Kolberg}
\author[2]{Andreas Krebs}
\author[1]{Roland Meyer}
\author[1]{Prakash Saivasan}

\affil[1]{
	TU Braunschweig, 
	\texttt{\{p.chini, j.kolberg, roland.meyer, p.saivasan\}@tu-bs.de}
}
\affil[2]{
	Universit\"at T\"ubingen,
	\texttt{krebs@informatik.uni-tuebingen.de}
}

\authorrunning{P. Chini, J. Kolberg, A. Krebs, R. Meyer and P. Saivasan}
\maketitle

\vspace{-0.5cm}
\begin{abstract}
	Bounded context switching ($\BCS$) is an under-approximate method for finding violations to safety properties in shared memory concurrent programs.  
	Technically, $\BCS$ is a reachability problem that is known to be $\NP$-complete.
	Our contribution is a parameterized analysis of $\BCS$.

	The first result is an algorithm that solves $\BCS$ when parameterized by the number of context switches ($\cs$) and the size of the memory ($m$) in $\bigOS{m^{cs} \cdot 2^{cs}}$.
	This is achieved by creating instances of the easier problem $\SM$ which we solve via fast subset convolution. 
	We also present a lower bound for $\BCS$ of the form $\m^{o(\cs / \log(\cs))}$, based on the exponential time hypothesis.
	Interestingly, closing the gap means settling a conjecture that has been open since FOCS'07. 
	Further, we prove that $\BCS$ admits no polynomial kernel.
	
	Next, we introduce a measure, called scheduling dimension, that captures the complexity of schedules. 
	We study $\BCS$ parameterized by the scheduling dimension ($\sdim$) and show that it can be solved in $\bigOS{(2m)^{4\sdim}4^{t}}$, where $t$ is the number of threads. 
	We consider variants of the problem for which we obtain (matching) upper and lower bounds.
\end{abstract}
\vspace{-1cm}
\section{Introduction}
\label{Section:Introduction}

Concurrent programs where several threads interact through a shared memory can be found essentially everywhere where performance matters, in particular in critical infrastructure like operating systems and libraries. 
The asynchronous nature of the communication makes these programs prone to programming errors. 
As a result, substantial effort has been devoted to developing automatic verification tools. 
The current trend for shared memory is bug-hunting: Algorithms that look for misbehavior in an under-approximation of the computations.

The most prominent method in the under-approximate verification of shared-memory concurrent programs is bounded context switching~\cite{QadeerR05}.
A context switch occurs when one thread leaves the processor for another thread to be scheduled.
The idea of bounded context switching ($\BCS$) is to limit the number of times the threads may switch the processor.
Effectively this limits the communication that can occur between the threads.
(Note that there is no bound on the running time of each thread.)
Bounded context switching has received considerable attention~\cite{TorreMP07,AtigBTA08,AtigBT08,Atig10,TorreMP10,TorreN11,AtigBKS14,LeonFHM17} for at least two reasons.
First, the under-approximation has been demonstrated to be useful in numerous experiments, in the sense that synchronization bugs show up in few context switches~\cite{MusuvathiQ07}. 
%So the restriction seems to capture program behavior that is interesting from a verification point of view. 
Second, compared to ordinary algorithmic verification, $\BCS$ is algorithmically appealing, with the complexity dropping from $\PSPACE$ to $\NP$ in the case of Boolean programs. 

The hardness of verification problems, also the $\NP$-hardness of $\BCS$, is in sharp contrast to the success that verification tools see on industrial instances. 
This discrepancy between the worst-case behavior and efficiency in practice has also been observed in other areas within algorithmics.  
The response was a line of research that refines the classical worst-case complexity.
Rather than only considering problems where the instance-size determines the running time, so-called parameterized problems identify further parameters that give information about the structure of the input or the shape of solutions of interest.
The complexity class of interest consists of the so-called fixed-parameter tractable problems. 
A problem is fixed-parameter tractable if the parameter that has been identified is indeed responsible for the non-polynomial running time or, phrased differently, the running time is $f(k)p(n)$ where $k$ is the parameter, $n$ is the size of the input, $f$ is a computable function and $p$ is a polynomial.

Within fixed-parameter tractability, the recent trend is a fine-grained analysis to understand the precise functions $f$ that are needed to solve a problem. 
From an algorithmic point of view, an exponential dependence on $k$, at best linear so that $f(k) = 2^k$, is particularly attractive. 
There are, however, problems where algorithms running in $2^{\smallo{k\log( k )}}$ are unlikely to exist.
As common in algorithmics, unconditional lower bounds are hard to achieve, and none are known that separate $2^k$ and $2^{k\log (k)}$.
Instead, one works with the so-called exponential time hypothesis ($\ETH$): After decades of attempts, $n$-variable $\kSAT{3}$ is not believed to admit an algorithm of running time $2^{\smallo{n}}$. 
To derive a lower bound for a problem, one now shows a reduction from $n$-variable $\kSAT{3}$ to the problem such that a running time in $2^{\smallo{k\log (k)}}$ means $\ETH$ breaks.

The contribution of our work is a fine-grained complexity analysis of the bounded context switching under-approximation. 
We propose algorithms as well as matching lower bounds in the spectrum $2^k$ to $k^k$.
This work is not merely motivated by explaining why verification works in practice.
Verification tasks have also been shown to be hard to parallelize.
Due to the memory demand, the current trend in parallel verification is lock-free data structures \cite{BarnatBCMRS06}.
So far, GPUs have not seen much attention.
With an algorithm of running time $2^kp(n)$, and for moderate $k$, say $12$, one could run in parallel 4096 threads each solving a problem of polynomial effort.

When parameterized only by the context switches, $\BCS$ is quickly seen to be $\W[1]$-hard and hence does not admit an $\FPT$-algorithm.
Since it is often the case that shared-memory communication is via signaling (flags), memory requirements are not high.
We additionally parameterize by the memory.
Our study can be divided into two parts.

We first give a parameterization of $\BCS$ (in the context switches and the size of the memory) that is \emph{global} in the sense that all threads share the budget of $\cs$ many context switches.
For the upper bound, we show that the problem can be solved in $\bigOS{m^{\cs}2^{\cs}}$. 
%The idea is to decompose the gproble. 
We first enumerate the sequences of memory states at which the threads could switch context, and there are $m^{\cs}$ such sequences where $m$ is the size of the memory.
For a given such sequence, we check a problem called $\SM$: Given a memory sequence, do the threads have computations that justify  the sequence (and lead to their accepting state).
Here, we use fast subset convolution to solve $\SM$ in $\bigOS{2^{\cs}}$.
Note that $\SM$ is a problem that may be interesting in its own right.
It is an under-approximation that still leaves much freedom for the local computations of the threads.
Indeed, related ideas have been used in testing~\cite{GibbonsKorach1997,Cantin2005,FaMa09,Furbach2015}.

For the lower bound, the finding is that the global parameterization of $\BCS$ is closely related to subgraph isomorphism.
Whereas the reduction is not surprising, the relationship is, with $\SGI$ being one of the problems whose fine-grained complexity is not fully understood. 
Subgraph isomorphism can be solved in $\bigOS{n^{k}}$ where $k$ is the number of edges in the graph that is to be embedded.
The only lower bound, however, is $n^{o(k/\log k)}$, and has, to the best of our knowledge, not been improved since FOCS'07 \cite{Marx07,Marx2010}.
However, the believe is that the $\log k$-gap in the exponent can be closed.
We show how to reduce $\SGI$ to the global version of $\BCS$, and obtain a $m^{o(\cs/\log \cs)}$ lower bound.
Phrased differently, $\BCS$ is harder than $\SGI$ but admits the same upper bound.
So once Marx' conjecture is proven, we obtain a matching bound.
If we proved a lower upper bound, we had disproven Marx' conjecture.

Our second contribution is a study of $\BCS$ where the parameterization is \emph{local} in the sense that every thread is given a budget of context switches.
Here, our focus is on the scheduling.
We associate with computations so-called scheduling graphs that show how the threads take turns. 
We define the scheduling dimension, a measure on scheduling graphs (shown to be closely related to carving width) that captures the complexity of a schedule.
Our main finding is a fixed-point algorithm that solves the local variant of $\BCS$ exponential only in the scheduling dimension and the number of threads.
We study variants where only the budget of context switches is given, the graph is given, and where we assume round robin as a schedule.
Verification under round robin has received quite some attention \cite{BouajjaniEP11, MusuvathiQ07, LalR08}. 
Here, we show that we get rid of the exponential dependence on the number of threads and obtain an $\bigOS{m^{4\cs}}$ upper bound.
We complement this by a matching lower bound.

The following table summarizes our results and highlights the main findings in gray.
\begin{wraptable}{r}{6.2cm}
    \centering
    \begin{tabular}{ |m{1.25cm} | m{1.8cm}| m{1.85cm}| }
        \hline
        \scriptsize{Problem} & \scriptsize{Upper Bound} & \scriptsize{Lower Bound} \\ 
        \hline
        \hline
        \scriptsize{\SM}  & \cellcolor{black!15}\scriptsize{$\bigOS{2^{k}}$} & \scriptsize{$(2 - \varepsilon)^{k}$} \\ 
        \hline
        \scriptsize{\BCS}  & \cellcolor{black!15}\scriptsize{$\bigOS{m^{\cs}2^{\cs}}$} &\cellcolor{black!15}\scriptsize{$m^{o(\cs/\log \cs)}$, no poly. kernel} \\ 
        \hline
        \scriptsize{\bcslrr}   & \scriptsize{$\bigOS{m^{4\cs}}$} & \scriptsize{$2^{o(\cs\log(m))}$} \\
        \hline        
        \scriptsize{\bcslfix}   & \cellcolor{black!15} \scriptsize{$\bigOS{(2m)^{4\sdim}}$} & \scriptsize{$2^{o(\sdim\log(m))}$} \\
        \hline
        \scriptsize{\bcslsd}  & \cellcolor{black!15} \scriptsize{$\bigOS{(2m)^{4\sdim}4^t}$} & \scriptsize{$2^{o(\sdim\log(m))}$} \\
        \hline
%        \hline
%        \scriptsize{CS-T}   & \cellcolor{black!15}\scriptsize{$P^{t}$} & \cellcolor{black!15}\scriptsize{$P^{t}$} \\ 
%        \hline
    \end{tabular}
\end{wraptable}
The organization is by expressiveness, measured in terms of the amount of computations that an analysis explores. 
Considering shuffle membership \SM\ as an under-approximate analysis in its own right,
\SM\ is less expressive than the globally parameterized \BCS. 
\BCS\ is less expressive than round robin \bcslrr, which is a special instance of fixing the scheduling graph \bcslfix.
The most liberal parameterization is via the scheduling dimension \bcslsd.
In the paper, we present algorithms for the case where threads are finite state. 
Our results also hold for more general classes of programs, notably recursive ones.
The only condition that we require is that the chosen automaton model for the threads has a polynomial time decision procedure for checking non-emptiness when intersected with a regular language.

There have been previous efforts in studying fixed-parameter tractable algorithms for automata and verification-related problems.
In \cite{EneaF16} , the authors introduced the notion of conflict serializability under TSO and gave an $\FPT$-algorithm for checking serializability.
In \cite{FarzanM09}, the authors studied the complexity of predicting atomicity violation on concurrent systems and showed that no $\FPT$ solution is possible for the same.
In \cite{DemriLS02}, various model checking problems for synchronized executions on parallel components were considered and proven to be intractable.
Parameterized complexity analyses for two problems on automata were given in \cite{Fernau2015}.
Also in \cite{Wareham2000}, a complete parameterized complexity analysis of the intersection non-emptiness problem was shown.

Verification of concurrent systems has received considerable attention. The parameterized verification of concurrent systems was studied in \mbox{\cite{Durand-Gasselin15, EsparzaGM13, FortinMW16, Hague11, TorreMP10}}.
%The model considered there is that of a distinguished leader and unboundedly many contributors. 
Concurrent shared memory system with fixed number of threads were also studied in \cite{AtigBKS14,AtigBT08,AtigBQ09}.

\section{Preliminaries}
\label{Section:Prelim}
We define the bounded context switching problem~\cite{QadeerR05} of interest and recall the basics on fixed-parameter tractability following~\cite{Downey2013, Flum2006}.
% -------------------------------------------------------------------------------
% -------------------------------------------------------------------------------
% -------------------------------------------------------------------------------
% -------------------------------------------------------------------------------
\vspace{-0.4cm}
\subparagraph*{Bounded Context Switching.}
We study the safety verification problem for shared memory concurrent programs. 
%Such a program consists of different threads and a shared memory that is used for communication activities between the threads.
To obtain precise complexity results, it is common to assume both the number of threads and the data domain to be finite. 
Safety properties partition the states of a program into \emph{unsafe} and \emph{safe} states.
Hence, checking safety amounts to checking whether no unsafe state is reachable.
In the following, we develop a language-theoretic formulation of the reachability problem that will form the basis of our study.

We model the shared memory as a (non-deterministic) finite automaton of the form $M=(Q, \Sigma, \delta_M, q_0, q_f)$. 
The states $Q$ correspond to the data domain, the set of values that the memory can be in.
The initial state $q_0\in Q$ is the value that the computation starts from.  
The final state $q_f\in Q$ reflects the reachability problem.
The alphabet $\Sigma$ models the set of operations. 
Operations have the effect of changing the memory valuation, formalized by the transition relation $\delta_M\ \subseteq Q\times \Sigma\times Q$. % (we use $q \Move{a} q'$ to denote $(q,a,q') \in \delta_M$).  
We generalize the transition relation to words $u\in \Sigma^*$. 
The set of valid sequences of operations that lead from a state $q$ to another state $q'$ is the language \mbox{$L(M(q, q')):=\Set{u\in \Sigma^*}{q'\in \delta_M(q, u)}$}. 
The language of $M$ is $L(M):=L(M(q_0, q_f))$. 
The size of $M$, denoted $\abs{M}$,  is the number of states. 

We also model the threads operating on the shared memory $M$ as finite automata \mbox{$A_{\threadid}=(P, \Sigma\times\set{\threadid}, \delta_A, p_0, p_f)$}.  
Note that they use the alphabet $\Sigma$ of the shared memory, indexed by the name of the thread. 
The index will play a role when we define the notion of context switches below. 
The automaton $A_{\threadid}$ is nothing but the control flow graph of the thread $\threadid$. 
Its language is the set of sequences of operations that the thread may potentially execute to reach the final state.  
As the thread language does not take into account the effect of the operations on the shared memory, not all these sequences will be feasible. 
Indeed, the thread may issue a command $\mathit{write}(x, 1)$ followed by $\mathit{read}(x, 0)$, which the automaton for the shared memory will reject.
The computations of $A$ that are actually feasible on the shared memory are given by the intersection $L(M)\cap L(A_{\threadid})$.
Here, we silently assume the intersection to project away the second component of the thread alphabet.

A concurrent program consists of multiple threads $A_1$ to $A_t$ that mutually influence each other by accessing the same memory $M$. 
We mimic this influence by interleaving the thread languages, formalized with the shuffle operator $\Shuffle{}{}$. 
Consider languages $L_1\subseteq \Sigma_1^*$ and $L_2\subseteq\Sigma_2^*$ over disjoint alphabets $\Sigma_1\cap\Sigma_2=\emptyset$.
The shuffle of the languages contains all words over the union of the alphabets where the corresponding projections ($\projectto{-}{-}$) belong to the operand languages, $L_1\Shuffle{}{}L_2:=\Set{u\in (\Sigma_1\cup \Sigma_2)^*}{\projectto{u}{\Sigma_i}\in L_i \cup \set{\varepsilon}, i=1,2}$.

With these definitions in place, a \emph{shared memory concurrent program (SMCP)} is a tuple $S=\SMS{\Sigma}{M}{A}{i}{t}$. 
Its language is $L(S)\ :=\ L(M)\ \cap\ (\ \Shuffle{i \in [1..t]}{} L(A_i)\ )$. The safety verification problem induced by the program is to decide whether $L(S)$ is non-empty. 
%This problem is algorithmically hard due to the complex interactions among the threads. 

We formalize the notion of context switching. 
Every word in the shuffle of the thread languages, $u\in \Shuffle{i \in [1..t]}{} L(A_i)$, has a unique decomposition into maximal infixes that are generated by the same thread. 
Formally, $u = u_1\ldots u_{\cs+1}$ so that there is a function \mbox{$\varphi:[1..\cs+1]\rightarrow [1..t]$} satisfying $u_i\in (\Sigma\times \set{\varphi(i)})^+$ and $\varphi(i)\neq \varphi(i+1)$ for all $i\in [1..\cs]$. 
We refer to the $u_i$ as contexts and to 
the thread changes between $u_i$ to $u_{i+1}$ as \emph{context switches}.
So $u$ has $\cs+1$ contexts and $\cs$ context switches.
Let $\Context(\Sigma, t, \cs)$ denote the set of all words (over $\Sigma$
with $t$ threads) that have at most $\cs$-many context switches.
The \emph{bounded context switching} under-approximation limits the safety
verification task to this language.
%the set of all computations in $S$ that have at most
%$\cs$ context switches.
%Formally, we define the language of $\cs$-bounded words to be
%$L(S, \cs):= L(S) \cap \Context(\Sigma, t, \cs)$ and define the problem as follows.
% -------------------------------------------------------------------------------
% -------------------------------------------------------------------------------
% -------------------------------------------------------------------------------
% -------------------------------------------------------------------------------
\begin{problem}
	\problemtitle{Bounded Context Switching}
	\problemshort{($\BCS$)}
	\probleminput{ An SMCP $S = \SMS{\Sigma}{M}{A}{i}{t}$ and a bound $\cs\in \Nat$. }
	\problemquestion{Is $L(S) \cap \Context(\Sigma, t, \cs)\neq \emptyset$ ?}
\end{problem}
% -------------------------------------------------------------------------------
% -------------------------------------------------------------------------------
% -------------------------------------------------------------------------------
% -------------------------------------------------------------------------------
\vspace{-0.3cm}
\subparagraph*{Fixed Parameter Tractability.} 
$\BCS$ is $\NP$-complete by~\cite{Esparza2014}, even for unary alphabets. 
Our goal is to understand which instances can be solved efficiently and, in turn, what makes an instance hard. 
%But why is it so hard? 
Parameterized complexity addresses these questions.
%The idea is to identify the influence of parameters characterizing an instance on the problem's complexity.

A \emph{parameterized problem} $L$ is a subset of $\Sigma^*\times\mathbb N$. 
The problem is \emph{fixed-parameter tractable (FPT)} if there is a deterministic algorithm that, given $(x,k)\in\Sigma^*\times\mathbb N$, decides $(x,k)\in L$ in time $f(k)\cdot \abs{x}^{O(1)}$. 
Here, $f$ is a computable function that only depends on the parameter~$k$.
It is common to denote the runtime by $\bigOS{f(k)}$ and suppress the polynomial part.

While many parameterizations of $\NP$-hard problems were proven to be fixed-parameter tractable, there are problems that are unlikely to be $\FPT$. 
A famous example that we shall use is $\kClique$, the problem of finding a clique of size $k$ in a given graph. 
%Despite substantial effort, no \mbox{$f(k) \cdot n^c$-time} algorithm was found for the problem.
%In fact, it is unlikely that an $\FPT$ algorithm exists: 
$\kClique$ is complete for the complexity class $\W[1]$, and $\W[1]$ hard problems are believed to lie outside $\FPT$. 
%which is to $\FPT$ what $\NP$ is to $\P$ --- a class \emph{intractable} problems.

A theory of relative hardness needs an appropriate notion of reduction. 
Given parameterized problems $L, L'\subseteq\Sigma^*\times\mathbb N$, we say that $L$ is \emph{reducible} to $L'$ via a \emph{parameterized reduction}, denoted by $L \leq^{\mathit{fpt}} L'$, if there is an algorithm that transforms an input $(x,k)$ to an input $(x',k')$ in time $g(k) \cdot n^{O(1)}$ so that $(x,k) \in L$ if and only if $(x',k') \in L'$.
Here, $g$ is a computable function and $k'$ is computed by a function only dependent on $k$.

For $\BCS$, a first result is that a parameterization by the number of context switches and additionally by the number of threads, denoted by $\BCS(\cs, t)$, is not sufficient for $\FPT$: The problem is $\W[1]$-hard.
It remains in $\W[1]$ if we only parameterize by the context switches.  
% -------------------------------------------------------------------------------------
% -------------------------------------------------------------------------------------
% -------------------------------------------------------------------------------------
% -------------------------------------------------------------------------------------
\begin{proposition}\label{Theorem:BCSIntractable}
	$\BCS(\cs)$ and $\BCS(\cs,t)$ are both $\W[1]$-complete.
\end{proposition}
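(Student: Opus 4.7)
I will prove $\W[1]$-hardness and $\W[1]$-membership separately. Since $\BCS(\cs,t)$ carries a stronger parameterization than $\BCS(\cs)$, any $\FPT$ algorithm for $\BCS(\cs)$ trivially works for $\BCS(\cs,t)$; hence $\W[1]$-hardness of $\BCS(\cs,t)$ entails that of $\BCS(\cs)$, and dually $\W[1]$-membership of $\BCS(\cs)$ yields that of $\BCS(\cs,t)$. So it suffices to address these two endpoints.

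For hardness I reduce from $\kkClique$. Given a graph $G=(V,E)$ partitioned into $k$ color classes $V_1,\ldots,V_k$, I construct an SMCP with $t=k$ threads. Thread $A_i$ first non-deterministically picks a vertex $v_i\in V_i$, stored in its local state, and is then restricted to emit only letters $\text{ann}_{i,j}(v_i)$ that announce its choice in a pair with thread $j$. The shared memory $M$ walks through the $\binom{k}{2}$ unordered pairs $\{i,j\}$ in a fixed order: at each pair it waits for an announcement by $i$, records $v_i$, then accepts an announcement by $j$ iff $\{v_i,v_j\}\in E$. The memory has $O(k^2|V|)$ states, polynomial in the input; each pair is verified in two consecutive contexts, yielding $\cs\in\bigO{k^2}$ context switches and $t=k$ threads, both functions of $k$ alone. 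An accepting bounded-context run exists iff $G$ contains a transversal $k$-clique.

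For membership I argue that any yes-witness of $\BCS(\cs)$ is fully specified by $\bigO{\cs}$ items from polynomial-size universes: the schedule $(i_1,\ldots,i_{\cs+1})$ of thread identifiers; the memory states $p_0,\ldots,p_{\cs+1}$ at the context boundaries; and, for every context $j$, the entry and exit local states of the thread $A_{i_j}$ running in it. All consistency conditions---per-context existence of a word in $L(M(p_{j-1},p_j))\cap L(A_{i_j})$ connecting the right local states, together with the chaining of each thread's local states across its at most $\cs+1$ contexts---are either local to a single slot or concern at most two slots. This matches the canonical $\W[1]$ witness shape and can be made explicit by an $\fptred$ reduction to $\kClique$: place the admissible values of each slot in its own part of a multipartite graph, and add an edge between two values in different parts whenever they are pairwise consistent; a clique whose size equals the number of slots then corresponds exactly to an accepting schedule with at most $\cs$ context switches.

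The main obstacle is on the hardness side: keeping the memory polynomial in the input. Storing the tuple $(v_1,\ldots,v_k)$ in a single state would cost $|V|^k$ states, which is not an $\FPT$-sized construction. Sequentializing the edge verification---checking one pair at a time and discarding its two endpoints before proceeding---is the key step that brings the memory down to $O(k^2|V|)$ while still spending only $\bigO{k^2}$ context switches, a function that depends solely on $k$.
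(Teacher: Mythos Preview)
Your hardness argument is sound and takes a different route from the paper. The paper reduces from Bounded DFA Intersection $\BDFAI(m,n)$---each DFA becomes a thread and the memory forces all threads to read the same letter in each of $m$ rounds---whereas you reduce from Multicolored Clique. Your observation that the memory must not store the whole vertex tuple but only one endpoint at a time (yielding $O(k^{2}\abs{V})$ states) is exactly what keeps the construction polynomial; the paper's $\BDFAI$ reduction sidesteps this issue because its memory is essentially just a round counter.

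Your membership argument, however, has a genuine gap. With the encoding you describe---slot $j$ carrying $(i_j,p_{j-1},p_j,s_j,s'_j)$---the thread-local chaining constraint is \emph{not} pairwise. Knowing that two slots $j<j'$ agree on the thread does not tell you whether $j'$ is the \emph{immediate} successor context of that thread or whether some intermediate slot $\ell$ with $j<\ell<j'$ also runs it; whether to impose $s'_j=s_{j'}$ therefore depends on the values of all intermediate slots, not only on the pair you are comparing. As written, your multipartite graph would admit cliques that correspond to inconsistent thread traces. One fix is to enrich each slot with an explicit pointer $\mathit{prev}(j)\in\{0,\ldots,j-1\}$ (and symmetrically $\mathit{next}(j)$) to the previous context of the same thread: then ``$i_\ell\neq i_j$ whenever $\mathit{prev}(j)<\ell<j$'' and ``$s'_{\mathit{prev}(j)}=s_j$'' are genuine binary constraints determined solely by the two slots involved, and initiality/finality become unary constraints on slots with $\mathit{prev}(j)=0$ resp.\ $\mathit{next}(j)=\cs+2$.

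The paper avoids this altogether by reducing to Short Turing Machine Acceptance rather than to $\kClique$: a single-tape NTM keeps one tape cell per thread storing that thread's current local state, so chaining is maintained automatically, and the machine runs for $O((\cs+1)\cdot t+t)$ steps. To cover $\BCS(\cs)$ the paper first shows $\BCS(\cs)\fptred\BCS(\cs,t)$ by replacing the $t$ threads with $\cs+1$ ``universal'' threads, each of which nondeterministically chooses one original thread to simulate; this bounds the thread count by a function of $\cs$ and feeds into the STMA reduction.
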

% -------------------------------------------------------------------------------------
% -------------------------------------------------------------------------------------
% -------------------------------------------------------------------------------------
% -------------------------------------------------------------------------------------

The runtime of an $\FPT$-algorithm is dominated by $f$.
The goal of \emph{fine-grained complexity theory} is to give upper and lower bounds on this non-polynomial function.  
For lower bounds, the problem that turned out to be hard is $n$-variable $\kSAT{3}$.
The \emph{Exponential Time Hypothesis} ($\ETH$) is that the problem does not admit a $2^{o(n)}$-time algorithm~\cite{Impagliazzo2001}. 
We will prove a number of lower bounds that hold, provided $\ETH$ is true.

In the remainder of the paper, we consider parameterizations of $\BCS$ that are $\FPT$. 
Our contribution is a fine-grained complexity analysis.

\section{Global Parametrization}
\label{Section:BCS}
Besides the number of context switches $\cs$, we now consider the size $m$ of the memory as a parameter of $\BCS$.  
This parameterization is practically relevant and, as we will show, algorithmically appealing. 
Concerning the relevance, note that communication over the shared memory is often implemented in terms of flags. 
Hence, when limiting the size of the memory we still explore a large part of the computations. 
\vspace{-0.3cm}
\subparagraph*{Upper Bounds.}
\label{Subsection:BCSUpper}
The idea of our algorithm is to decompose $\BCS$ into exponentially many
instances of the easier problem shuffle membership ($\SM$) defined below.
Then we solve $\SM$ with fast subset convolution. 
To state the result, let the given instance of $\BCS$ be $S=\SMS{\Sigma}{M}{A}{i}{t}$ with bound $\cs$. 
To each automaton $A_i$, our algorithm will associate another automaton $B_i$ of size polynomial in $A_i$.
Let  $b =\max_{i \in [1..t]} \abs{B_i}$.
Moreover, let $\SM(b, k,t ) = \bigO{2^k \cdot t \cdot k \cdot ( b^2 + k \cdot \mathit{bc}(k) )}$ be the complexity of solving the shuffle problem. 
The factor $\mathit{bc}(k)$ appears as we need to multiply $k$-bit integers (see below). 
The currently best known running time is $\mathit{bc}(k) = k \log k \cdot 2 ^{\bigO{\log^* \!\! k}}$~\cite{Furer2009, Harvey2016}. 
% --------------------------------------------------------------------------------
% --------------------------------------------------------------------------------
% --------------------------------------------------------------------------------
% --------------------------------------------------------------------------------
\begin{theorem}\label{Theorem:BCSupper}
	$\BCS$ can be solved in $\bigO{ m^{cs+1} \cdot \SM(b, \cs+1,t )  + t \cdot m^3 \cdot b^3}$. 
\end{theorem}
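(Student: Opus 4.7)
The plan is to reduce the given $\BCS$ instance to at most $m^{\cs+1}$ instances of the shuffle membership problem $\SM$, by enumerating all possible sequences of shared-memory states at the context-switch boundaries. The first ingredient is preprocessing: for each thread $A_i$, I would build a summary automaton $B_i$ over the alphabet $Q \times Q$. Its state space is that of $A_i$, and a transition $p \xrightarrow{(q, q')} p'$ is present exactly when, in the product automaton $M \times A_i$, the configuration $(q', p')$ is reachable from $(q, p)$. The table of these transitions can be filled in by a single all-pairs reachability computation in $M \times A_i$, an automaton with $O(m \cdot b)$ states; this costs $O(m^3 \cdot b^3)$ per thread and accounts for the additive $t \cdot m^3 \cdot b^3$ term.

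Next, I would iterate over every sequence $\vec{q} = (q_0, q_1, \ldots, q_{\cs+1})$ with $q_0$ the initial state of $M$ and $q_{\cs+1} = q_f$; there are at most $m^{\cs+1}$ such sequences. For each $\vec{q}$, I form the word $w(\vec{q}) = (q_0, q_1)(q_1, q_2) \cdots (q_{\cs}, q_{\cs+1})$ of length $\cs+1$ over $Q \times Q$ and invoke the $\SM$ solver on $w(\vec{q})$ and $B_1, \ldots, B_t$ to decide membership in the shuffle of their languages. The algorithm answers yes iff some $\vec{q}$ succeeds. Correctness is immediate in both directions: any $\BCS$ witness $u = u_1 \cdots u_{\cs'+1}$ with $\cs' \le \cs$ induces a unique memory-state sequence, and after padding with repeated states to length $\cs+2$ (absorbed by augmenting each $B_i$ with a loop on diagonal letters $(q,q)$), this sequence is enumerated by the algorithm and the thread-to-context assignment of $u$ realises a shuffle decomposition of $w(\vec{q})$; conversely, the transitions of $B_i$ are defined so that any shuffle decomposition of $w(\vec{q})$ can be lifted back to concrete thread words whose interleaving gives a word of $L(S) \cap \Context(\Sigma, t, \cs)$.

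The total running time is the preprocessing cost $O(t \cdot m^3 \cdot b^3)$ plus $m^{\cs+1}$ calls to the $\SM$ solver, each taking $\SM(b, \cs+1, t)$, matching the claimed bound. The real technical content of the overall algorithm sits inside the $\SM$ subroutine, whose $2^{\cs+1}$ factor relies on the fast subset convolution argument that belongs to the separate analysis of $\SM$. The present reduction itself is essentially syntactic, and the only subtle point to handle is computations with fewer than $\cs$ actual context switches, which is cleanly accommodated by permitting diagonal entries $q_j = q_{j+1}$ in the enumerated sequences together with the corresponding self-loops in the $B_i$.
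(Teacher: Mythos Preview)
Your proposal is correct and follows essentially the same approach as the paper: build the summary automata $B_i$ (the paper's Lemma~\ref{Lemma:MemorySequenceEquivalence}(ii)), enumerate the $\leq m^{\cs+1}$ valid interface sequences, and call the $\SM$ solver on each. One small remark: the self-loops on diagonal letters $(q,q)$ that you add for padding are already present in your $B_i$ by construction, since $(q,p)$ is trivially reachable from itself in $M\times A_i$; the paper avoids the padding discussion altogether by enumerating valid interface sequences of \emph{all} lengths up to $\cs+1$, but the two treatments are equivalent.
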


We decompose \BCS\ along \emph{interface sequences}. 
Such an interface sequence is a word \mbox{$\sigma=(q_1, q_1')\ldots (q_{k}, q_{k}')$} over pairs of states of the memory automaton $M$.
The length is $k$.
An interface sequence is \emph{valid} if $q_1$ is the initial state of the memory automaton, $q_k'$ the final state, and $q_i'=q_{i+1}$ for $i\in [1..k-1]$. 
%An \emph{interface language} is a set of interface sequences.
Consider a word $u \in L(S)$ with contexts $u = u_1 \dots u_{m}$.
An interface sequence $\sigma=(q_0, q_1)(q_1, q_2) \dots (q_{m-1}, q_{m})$ is \emph{induced} by $u$, if there is an accepting run of $M$ on $u$ such that for all $i \in [1..m]$, $q_i$ is the state reached by  $M$ upon reading $u_1 \dots u_i$. 
Note that we only consider the states that occur upon context switches.
Moreover, induced sequences are valid by definition.
Finally, note that a word with $\cs$-many context switches induces an interface sequence of length precisely $\cs+1$.
We define $\iilangof{S}\subseteq (Q\times Q)^*$ to be the \emph{language of all induced interface sequences}.

Induced interface sequences witness non-emptiness of $L(S)$:
$L(S) \neq \emptyset$ iff $\iilangof{S} \neq \emptyset$.  
Since the number of context switches is bounded by $\cs$, we can thus iterate over all sequences in $(Q\times Q)^{\leq \cs+1}$ and test each of them for being an induced interface sequence, i.e. an element of $\iilangof{S}$.
Since induced sequences are valid, there are at most $m^{\cs+1}$ sequences to test.

Before turning to this test, we do a preprocessing step that removes the dependence on the memory automaton $M$.
To this end, we define the \emph{interface language} $\ilangof{A_{\threadid}}$ of a thread.
It makes visible the state changes on the shared memory that the contexts of this thread may induce. 
Formally, the interface language consists of all interface sequences $(q_1, q_1')\ldots (q_{k}, q_{k}')$ so that $L(A_{\threadid})\cap (\ L(M(q_1, q_1'))\ldots L(M(q_{k}, q_{k}'))\ )\neq \emptyset$. 
These sequences do not have to be valid as the thread may be interrupted by others. 
Below, we rely on the fact that $\ilangof{A_{\threadid}}$ is again a regular language, a representation of which is easy to compute.
% --------------------------------------------------------------------------------
% --------------------------------------------------------------------------------
\begin{lemma}\label{Lemma:MemorySequenceEquivalence}
(i)	We have $\iilangof{S}=\SShuffle_{i \in [1..t]} \ilangof{A_i}\cap \Set{\sigma\in (Q\times Q)^*}{\sigma\text{ valid}}$. (ii) One can compute in time 
$\bigO{\abs{A_{\threadid}}^3 \cdot \abs{M}^3}$ an automaton $B_{\threadid}$ with $L(B_{\threadid}) = \ilangof{A_{\threadid}}$.
\end{lemma}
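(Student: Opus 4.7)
The plan for (i) is to prove both inclusions separately. For the forward direction, given $\sigma \in \iilangof{S}$, I would take a witness $u \in L(S)$ with context decomposition $u = u_1 \cdots u_k$ and scheduling function $\varphi : [1..k] \to [1..t]$ inducing $\sigma$. For each thread $j$, the projection $v_j$ of $u$ to $\Sigma \times \set{j}$ lies in $L(A_j)$ and factors as $v_j = u_{i_1} \cdots u_{i_{k_j}}$ where $i_1 < \cdots < i_{k_j}$ are the $\varphi$-preimages of $j$, and the piece $u_{i_\ell}$ moves the memory from $q_{i_\ell}$ to $q'_{i_\ell}$. Hence the subsequence $(q_{i_1}, q'_{i_1}) \cdots (q_{i_{k_j}}, q'_{i_{k_j}})$ belongs to $\ilangof{A_j}$, and the way these per-thread subsequences interleave in $\sigma$ is precisely the interleaving induced by $\varphi$. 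Validity of $\sigma$ is immediate from the single witnessing run of $M$ on $u$. For the reverse direction, I would start from a valid $\sigma$ expressed as an interleaving of thread interface sequences $\sigma_j \in \ilangof{A_j}$, choose for each $j$ a witness $v_j \in L(A_j)$ with the decomposition guaranteed by $\sigma_j$, and interleave the resulting pieces in the order dictated by $\sigma$. Validity then glues the per-thread memory runs into a single accepting run of $M$, yielding a word $u \in L(S)$ that induces $\sigma$.

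For (ii), I would construct $B_{\threadid}$ with state set $P$, initial state $p_0$, final state $p_f$, alphabet $Q \times Q$, and one transition $p \xrightarrow{(q,q')} p'$ for each tuple $(p, p', q, q')$ such that there exists $w \in \Sigma^*$ with $p \xrightarrow{w}_{A_{\threadid}} p'$ and $q \xrightarrow{w}_M q'$. Correctness is by direct unfolding of definitions: an accepting run of $B_{\threadid}$ on $\sigma = (q_1, q'_1) \cdots (q_k, q'_k)$ exhibits a sequence of states $p_0 = r_0, \ldots, r_k = p_f$ of $A_{\threadid}$ together with a word $w = w_1 \cdots w_k \in L(A_{\threadid})$ with $w_i \in L(M(q_i, q'_i))$, which is precisely the defining condition of $\sigma \in \ilangof{A_{\threadid}}$.

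For the complexity, the transition relation of $B_{\threadid}$ is exactly the reachability relation in the label-erased product automaton of $A_{\threadid}$ and $M$, a directed graph on $\abs{A_{\threadid}} \cdot \abs{M}$ vertices. I would compute its transitive closure by Floyd--Warshall in time $\bigO{(\abs{A_{\threadid}} \cdot \abs{M})^3} = \bigO{\abs{A_{\threadid}}^3 \cdot \abs{M}^3}$ and read off all transitions of $B_{\threadid}$ within the same bound.

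The only delicate point is in (i): shuffling interface pairs is \emph{a priori} not the same as shuffling the underlying context words, since a single context of $M$ could in principle be split across threads so that the memory states seen by different threads fail to connect. Validity of $\sigma$ rules this out by forcing successive pairs to share a memory state, which is exactly what allows the per-thread memory runs to glue back into a single run of $M$; I would make this gluing step explicit when carrying out the reverse inclusion.
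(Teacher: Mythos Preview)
Your proposal is correct and follows essentially the same approach as the paper's proof: the same two-inclusion argument for (i), and for (ii) the same construction of $B_{\threadid}$ on the state set of $A_{\threadid}$ with transitions given by nonemptiness of $L(M(q,q')) \cap L(A_{\threadid}(p,p'))$, which you correctly identify as reachability in the product automaton. Your explicit use of Floyd--Warshall to justify the $\bigO{\abs{A_{\threadid}}^3 \cdot \abs{M}^3}$ bound is a minor elaboration over the paper, which simply asserts the bound without naming an algorithm.
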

% -------------------------------------------------------------------------------
% -------------------------------------------------------------------------------
% -------------------------------------------------------------------------------
% -------------------------------------------------------------------------------
%\rfm{Move the proof to the appendix?}
\begin{comment}
%We use $\ilangof{A_{\threadid}(q, q')}$ for the language of all interface sequences that start in $q_1=q$ and end in $q_k'=q'$. 
The construction keeps the states of $A_{\threadid}$. 
For each pair of states $p, p'$ in $A_{\threadid}$ nad $q, q'$ in $M$, we check $ \sync{q}{q'}{i}{p}{p'} \neq \emptyset$.
If the interesct
In this case, we add a transition $(p, (q, q'), p')$ to $\APi{M}{A_{\threadid}}$.
\end{comment}

With the above reasoning, and since the analysis is restricted to $\cs$-many context switches, the task is to check whether a valid sequence $\sigma\in(Q\times Q)^{\cs+1}$ is included in the shuffle $\SShuffle_{i \in [1..t]} L(B_i)$. 
%Notice that $|w| \leq cs$ in our case, this also means that the number of tuples in $\APi{}{ w }$ is at most $cs+1$.
%It remains to decide whether a given word $w \in \Sigma^{k}$, where $\Sigma$ denotes an alphabet, is in the shuffle of given automata $B_1, \dots, B_t$.
This means we address the following problem:

\begin{problem}
	\problemtitle{Shuffle Membership}
	\problemshort{($\SM$)}
	\probleminput{NFAs $(B_i)_{i \in [1..t]}$ over the alphabet $\Gamma$, an integer $k$, and a word $w \in \Gamma^k$.}
	\problemquestion{Is $w$ in $\SShuffle_{i \in [1..t]} L(B_i)$ ?}
\end{problem}

We obtain the following upper bound, with $b$ and $\mathit{bc}(k)$ as defined above.

\begin{theorem}\label{Theorem:SMupper}
	$\SM$ can be solved in time $\bigO{2^k \cdot t \cdot k \cdot ( b^2 + k \cdot \mathit{bc}(k) )}$.
\end{theorem}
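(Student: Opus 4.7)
The plan is to encode shuffle membership as a subset-convolution problem over the Boolean lattice on the positions of the input word, and then invoke the fast subset convolution of Björklund, Husfeldt, Kaski and Koivisto (STOC'07). Fix $w = a_1 \ldots a_k \in \Gamma^k$. For each thread $i \in [1..t]$ and each subset $S \subseteq [1..k]$, write $w|_S$ for the subword of $w$ obtained by reading the positions of $S$ in increasing order, and define $f_i : 2^{[1..k]} \to \{0,1\}$ by $f_i(S) = 1$ iff $w|_S \in L(B_i)$. The central observation is that $w \in \SShuffle_{i \in [1..t]} L(B_i)$ holds precisely when $[1..k]$ can be partitioned into $S_1 \sqcup \cdots \sqcup S_t$ with $f_i(S_i) = 1$ for every $i$, which by definition of subset convolution is equivalent to $(f_1 * f_2 * \cdots * f_t)([1..k]) > 0$, where $(f*g)(S) = \sum_{T \subseteq S} f(T)\,g(S \setminus T)$.

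The algorithm then proceeds in two phases. In the first phase I would tabulate all values $f_i(S)$. For fixed $i$, simulating the NFA $B_i$ on $w|_S$ costs $O(|S| \cdot b^2)$ time by the usual subset construction on the fly, so across all $2^k$ subsets and all $t$ threads this phase runs in $O(t \cdot 2^k \cdot k \cdot b^2)$. In the second phase I would compute the $t$-fold subset convolution iteratively. The Björklund--Husfeldt--Kaski--Koivisto algorithm performs a single subset convolution in $O(2^k \cdot k^2)$ arithmetic operations; iterating $t-1$ times and accounting for the bit-complexity $\mathit{bc}(k)$ of multiplying the $O(k)$-bit intermediate counts contributes $O(t \cdot 2^k \cdot k^2 \cdot \mathit{bc}(k))$. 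Summing the two phases yields the claimed bound $O\bigl(2^k \cdot t \cdot k \cdot (b^2 + k \cdot \mathit{bc}(k))\bigr)$. Positivity of the final value at $[1..k]$ is the answer.

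The main obstacle is the second phase: fast subset convolution only works over a ring, so intermediate counts can grow. One must therefore argue that the bit-length stays linear in $k$, so that each ring operation costs $\mathit{bc}(k)$. This follows because the value of $(f_1 * \cdots * f_j)(S)$ is bounded by the number of ordered partitions of $S$ into $j$ blocks, which is at most $j^{|S|} \leq t^k$, giving $O(k \log t) = O(k)$-bit integers after absorbing $\log t$ into the polynomial factor; alternatively one can work in a ring $\mathbb{Z}/p\mathbb{Z}$ with a prime $p$ of $\Theta(k)$ bits and repeat with a few primes to detect positivity with high probability, or use the deterministic zeta/Möbius transform framework of BHKK directly. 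Aside from this bookkeeping, the correctness reduces to the straightforward bijection between accepting interleavings of $w$ and partitions $(S_1, \ldots, S_t)$ witnessing $f_i(S_i) = 1$ for all $i$.
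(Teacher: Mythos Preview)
Your approach is essentially identical to the paper's: define indicator functions $f_i$ on subsets of positions, reduce shuffle membership to positivity of the $t$-fold subset convolution at $[1..k]$, tabulate the $f_i$ in $O(t\cdot 2^k\cdot k\cdot b^2)$, and iterate fast subset convolution $t-1$ times. One small discrepancy: the paper sets $f_i(S)=1$ iff $w[S]\in L(B_i)\cup\{\varepsilon\}$, because its shuffle operator always permits a thread to contribute nothing; with your definition, $f_i(\emptyset)=1$ only when $\varepsilon\in L(B_i)$, so your equivalence fails when some $B_i$ rejects the empty word---just add the $\cup\{\varepsilon\}$. On the bit-length issue you are actually more careful than the paper (which simply asserts $O(k)$-bit integers throughout); the cleanest fix, which neither of you states explicitly, is to threshold each intermediate convolution back to $\{0,1\}$ by testing positivity, keeping all ranges constant and validating the stated bound exactly.
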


Our algorithm is based on \emph{fast subset convolution} \cite{Bjorklund2007}, an algebraic technique for summing up partitions of a given set.
Typically, fast subset convolution is applied to graph problems:
Björklund et al. \cite{Bjorklund2007} used it to present the first $\bigOS{2^k}$-time algorithm for the $\Steiner$ problem with $k$ terminals and bounded edge weights.
Cygan et al. incorporated a generalized version as a subprocedure in applications of their \emph{Cut} \& \emph{Count} technique~\cite{Cygan2011}.
%They solved connectivity problems on graphs in single-exponential time.
Variants of $\DomSet$ parameterized by treewidth were solved by van Rooij et al. in \cite{Rooij2009} using fast subset convolution.
We are not aware of an automata-theoretic application.

Let $f,g: \Powerset(B) \rightarrow \Z$ be two functions from the powerset of a $k$-element set $B$ to the ring of integers. 
The \emph{convolution} of $f$ and $g$ is the function $f \ast g : \Powerset(B) \rightarrow \Z$ that maps a subset $S \subseteq B$ to the sum $\sum_{U \subseteq S} f(U) g(S \setminus U)$. 
Note that the convolution is associative.
There is a close connection to partitions.
For $t \in \Nat$, a $t$-\emph{partition} of a set $S$ is a tuple $(U_1, \dots, U_t)$ of subsets of $S$ such that $U_1 \cup \dots \cup U_t = S$ and $U_i \cap U_j = \emptyset$ for all $i \neq j$.
Now it is easy to see that the convolution of $t$ functions $f_i : \Powerset(B) \rightarrow \Z, i \in [1..t]$, sums up all $t$-partitions of $S$:
\vspace{-0.1cm}
\begin{align*}\label{Formula:FSCt}
	(f_1 \ast \cdots \ast f_t) (S) = \sum_{\substack{(U_1, \dots, U_t)\\ \text{ is a $t$-parition of }S}} f_1(U_1) \cdots f_t(U_t)\ .
\end{align*}
\vspace{-0.3cm}

To apply the convolution, we give a characterization of $\SM$ in terms of partitions.
Let $((B_i)_{i\in[1..t]}, k, w)$ be an instance of $\SM$.
The following observation is crucial.
The word $w$ lies in the shuffle of the $L(B_i)$ if and only if there are non-overlapping, possibly empty (scattered) subwords $w_1, \dots, w_t$ of $w$ that decompose $w$ and that satisfy $w_i \in L(B_i) \cup \set{\varepsilon}$ for all $i \in [1..t]$. 
By scattered, we mean that the subwords do not have to form an infix of $w$. 
Such a decomposition induces a $t$-partition $(U_1, \dots, U_t)$ of the set of positions $\Pos = \set{1, \dots, k}$ of $w$, where each $U_i$ holds exactly the positions of $w_i$. 
In turn, given a $t$-partition $(U_1, \dots, U_t)$ of $\Pos$, we can derive a decomposition of $w$ by setting $w_i = w[U_i]$ for all $i \in [1..t]$.
Here, $w[U_i]$ is the projection of $w$ to the positions in $U_i$. 
Hence, $w$ lies in the shuffle if and only if there is a $t$-partition $(U_1, \dots, U_t)$ of $\Pos$ such that $w[U_i] \in L(B_i) \cup \set{\varepsilon}$ for all $i \in [1..t]$.

To express the language membership in $L(B_i)$ in terms of functions, we employ the characteristic functions $f_i : \Powerset(\Pos) \rightarrow \Z$ that map a set $S$ to $1$ if $w[S] \in L(B_i) \cup \set{\varepsilon}$, and to $0$ otherwise.
By the above formula, it follows that $(f_1 \ast \cdots \ast f_t)(\Pos) > 0$ if and only if there is a $t$-partition $(U_1, \dots, U_t)$ of $\Pos$ such that $f_i(U_i) = 1$ for $i \in [1..t]$.
Altogether, we have proven the following lemma:
\begin{lemma}
	The word $w \in \Gamma^k$ is in $\SShuffle_{i \in [1..t]} L(B_i)$ if and only if $(f_1 \ast \cdots \ast f_t)(\Pos) > 0$.
\end{lemma}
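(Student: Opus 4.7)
The plan is to combine the two equivalences already worked out in the text: (a) $w$ is in the shuffle iff there is a $t$-partition $(U_1,\dots,U_t)$ of $\Pos$ with $w[U_i] \in L(B_i)\cup\{\varepsilon\}$ for every $i$, and (b) the convolution $(f_1\ast\cdots\ast f_t)(\Pos)$ sums $f_1(U_1)\cdots f_t(U_t)$ over all $t$-partitions $(U_1,\dots,U_t)$ of $\Pos$. Since the $f_i$ are $\{0,1\}$-valued indicator functions, the crux is a simple positivity argument on a sum of non-negative integer summands.

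For the ($\Rightarrow$) direction, I would start from a decomposition of $w$ into scattered subwords $w_1,\dots,w_t$ with $w_i \in L(B_i)\cup\{\varepsilon\}$ guaranteed by the definition of shuffle. Reading off the positions of each $w_i$ yields a $t$-partition $(U_1,\dots,U_t)$ of $\Pos$ with $w[U_i]=w_i$, hence $f_i(U_i)=1$ for all $i$. Consequently $f_1(U_1)\cdots f_t(U_t)=1$, and since every term in the convolution sum is non-negative, $(f_1\ast\cdots\ast f_t)(\Pos)\geq 1>0$.

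For the ($\Leftarrow$) direction, I would use that all summands $f_1(U_1)\cdots f_t(U_t)$ lie in $\{0,1\}$; since their sum over all $t$-partitions of $\Pos$ is strictly positive, at least one partition $(U_1,\dots,U_t)$ must satisfy $f_i(U_i)=1$ for every $i\in[1..t]$. By definition of $f_i$ this means $w[U_i]\in L(B_i)\cup\{\varepsilon\}$. The $U_i$ being a partition of $\Pos$ precisely ensures that the scattered subwords $w[U_1],\dots,w[U_t]$ witness membership of $w$ in the shuffle $\SShuffle_{i\in[1..t]} L(B_i)$.

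There is no real obstacle here; the only subtle point is to be explicit about the non-negativity of the $f_i$, so that ``sum positive'' can be lifted to ``some summand is positive''. Once this is noted, the proof is essentially a two-line chaining of the two equivalences established in the paragraphs preceding the statement.
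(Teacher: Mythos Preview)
Your proposal is correct and follows exactly the approach taken in the paper: the lemma is stated after the paper has already established both equivalences (shuffle membership $\Leftrightarrow$ existence of a suitable $t$-partition, and the convolution formula as a sum over $t$-partitions), and the concluding sentence ``Altogether, we have proven the following lemma'' simply chains them via the positivity observation you spell out. Your added remark about non-negativity of the $f_i$ is the only point the paper leaves implicit.
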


Our algorithm for $\SM$ computes the characteristic functions $f_i$ and $t-1$ convolutions to obtain $f_1 \ast \cdots \ast f_t$.
Then it evaluates the convolution at the set $\Pos$. 
Computing and storing a value $f_i(S)$ for a subset $S \subseteq \Pos$ takes time $\bigO{ k \cdot b^2 }$ since we have to test membership of a word of length at most $k$ in $B_i$.
Hence, computing all $f_i$ takes time $\bigO{2^k \cdot t \cdot k \cdot b^2}$.
Due to Björklund et al. \cite{Bjorklund2007}, we can compute the convolution of two functions $f,g : \Powerset(\Pos) \rightarrow \Z$ in $\bigO{2^k \cdot k^2}$ operations in $\Z$.
Furthermore, if the ranges of $f$ and $g$ are bounded by $C$, we have to perform these operations on $\bigO{k \log C}$-bit integers \cite{Bjorklund2007}. 
Since the characteristic functions $f_i$ have ranges bounded by a constant, we only need to compute with $\bigO{k}$-bit integers. 
Hence, the $t-1$ convolutions can be carried out in time $\bigO{2^k \cdot k^2 \cdot (t-1) \cdot \mathit{bc}(k)}$. 
Altogether, this proves Theorem \ref{Theorem:SMupper}.

\vspace{-0.4cm}
\subparagraph*{Lower Bound for Bounded Context Switching.}
% -------------------------------------------------------------------------------
% -------------------------------------------------------------------------------
% -------------------------------------------------------------------------------
% ------------------------------------------------------------------------------
We prove a lower bound for the $\NP$-hard $\BCS$ by reducing the \emph{Subgraph Isomorphism} problem to it. 
The result is such that it also applies to $\BCS(\cs)$ and $\BCS(\cs, m)$.  
We explain why the result is non-trivial.

In fine-grained complexity, lower bounds for $\W[1]$-hard problems are often obtained by reductions from $\kClique$.  
Chen et al. \cite{Chen2006} have shown that $\kClique$ cannot be solved in time $f(k)n^{o(k)}$ for any computable function $f$, unless $\ETH$ fails. 
To transport the lower bound to a problem of interest, one has to construct a parameterized reduction that blows up the parameter only linearly. 
In the case of $\BCS$, this fails.
We face a well-known problem which was observed for reductions using edge-selection gadgets \cite{Marx2010, Cygan2015}:
A reduction from $\kClique$ would need to select a clique candidate of size $k$ and check whether every two vertices of the candidate share an edge.
This needs $\bigO{k^2}$ communications between the chosen vertices,
which translates to $\bigO{k^2}$ context switches.
Hence, we only obtain $n^{o(\sqrt{k})}$ as a lower bound.

To overcome this, we follow Marx~\cite{Marx2010} and give a reduction from \emph{Subgraph Isomorphism} ($\SGI$).
This problem takes as input two graphs $G$ and $H$ and asks whether $G$ is isomorphic to a subgraph of $H$.
This means that there is an injective map $\varphi: V(G) \rightarrow V(H)$ such that for each edge $(u,v)$ in $G$, the pair $(\varphi(u), \varphi(v))$ is an edge in $H$.
We use $V(G)$ to denote the vertices and $E(G)$ to denote the edges of a graph $G$. 
Marx has shown that $\SGI$ cannot be solved in time $f(G)n^{o(k/\log k)}$, where $k$ is the number of edges of $G$, unless $\ETH$ fails. 
In our reduction, the number of edges is mapped linearly to the number of context switches.
% -------------------------------------------------------------------------------
% -------------------------------------------------------------------------------
% -------------------------------------------------------------------------------
% -------------------------------------------------------------------------------
\begin{theorem}\label{Theorem:BCSLower}
	Assuming $\ETH$, there is no $f$ s.t. $\BCS$ can be solved in $f(\cs) n^{o(\cs / \log(\cs))}$.
\end{theorem}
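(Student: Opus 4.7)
The plan is to construct a parameterized reduction from Subgraph Isomorphism ($\SGI$) to $\BCS$ in which the number of edges $k = |E(G)|$ of the pattern graph blows up only \emph{linearly} to the number of context switches $\cs$. Combined with Marx's lower bound $f(G)\,n^{o(k/\log k)}$ for $\SGI$ under $\ETH$~\cite{Marx2010}, any algorithm for $\BCS$ running in $f(\cs)\,n^{o(\cs/\log \cs)}$ would, through this reduction, yield an $\SGI$ algorithm violating that bound and hence refute $\ETH$.

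From an instance $(G, H)$ of the partitioned variant of $\SGI$, I would build an SMCP $S$ with one thread $A_v$ per vertex $v \in V(G)$. Each $A_v$ begins by non-deterministically guessing its target $\varphi(v) \in V(H)$ and storing that guess in its control state; every later context of $A_v$ then consistently emits the symbol $\varphi(v)$. This blows up $|A_v|$ by only a factor of $|V(H)|$ and bakes the \emph{consistency} of each thread's guess into its state without costing a single context switch. The shared memory $M$ is a small automaton whose states toggle between ``expecting first endpoint'' and ``expecting second endpoint'' and remember the first vertex of the current pair; upon reading the second vertex it accepts iff the pair lies in $E(H)$ and otherwise rejects. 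Fixing any enumeration $e_1=(u_1,v_1),\ldots,e_k=(u_k,v_k)$ of $E(G)$, the targeted schedule runs the threads in the order $A_{u_1}, A_{v_1}, \ldots, A_{u_k}, A_{v_k}$, triggering $2k-1$ context switches. Setting $\cs = 2k$, the accepting computations of $S$ within this bound correspond bijectively to valid subgraph embeddings $\varphi : G \hookrightarrow H$, which is the correctness claim.

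The main obstacle is enforcing the \emph{injectivity} of $\varphi$ without paying a non-linear context-switch cost, since a naive pairwise check among the $|V(G)|$ threads would use $\Theta(|V(G)|^2)$ switches. I would sidestep this by reducing from the partitioned variant of $\SGI$, in which $V(H)$ is split into disjoint color classes $V_1, \ldots, V_{|V(G)|}$ and $A_{v_i}$ is only allowed to guess inside $V_i$; injectivity is then automatic. Marx's $n^{o(k/\log k)}$ lower bound is proved already for this partitioned version, so no loss occurs. Since $|S|$ is polynomial in $|G|+|H|$ and $\cs = 2k$ is linear in $k$, a $\BCS$ algorithm of the forbidden form would solve $\SGI$ in $f'(G)\,n^{o(k/\log k)}$, contradicting $\ETH$.
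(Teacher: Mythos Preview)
Your high-level strategy---reduce from $\SGI$ with $\cs$ linear in $k=|E(G)|$ and invoke Marx's bound---matches the paper, and sidestepping injectivity via the partitioned variant is legitimate. But the converse direction of your reduction fails: your memory $M$ only checks that each consecutive pair of emitted $H$-vertices is an $H$-edge, and nothing anywhere records \emph{which} edge of $G$ is currently being verified. Hence an accepting run may test some $G$-edge twice and skip another entirely. Take $G$ to be the $4$-cycle on $a,b,c,d$ (so $k=4$ and every degree is $2$). The schedule $A_a,A_b,A_a,A_b,A_c,A_d,A_c,A_d$ uses $7\le 2k$ context switches, each thread emits exactly $\deg_G(v)=2$ symbols, and $M$ is presented with the pairs $(\varphi(a),\varphi(b)),(\varphi(a),\varphi(b)),(\varphi(c),\varphi(d)),(\varphi(c),\varphi(d))$; it accepts as soon as these two $H$-edges exist, yet the $G$-edges $\{b,c\}$ and $\{d,a\}$ are never examined. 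So $L(S)\cap\Context(\Sigma,t,2k)\neq\emptyset$ does not imply an embedding, and your claimed bijection is false.

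The paper avoids this by swapping the roles: it introduces one thread per \emph{edge} of $G$ (each edge-thread accepts only length-$2$ words encoding an $H$-edge for its own $G$-edge), while $M$, over the alphabet $V(G)\times V(H)$, encodes the map $V(G)\to V(H)$ and enforces both consistency and injectivity via a fixed order on $V(H)$. Since the accepted word has length exactly $2k$ and each thread contributes two letters, all $k$ edge-threads must participate, so every $G$-edge is necessarily checked. Your vertex-thread construction can be salvaged by making $M$ step through a fixed enumeration of $E(G)$: at stage $j$ with $e_j=(u_j,v_j)$, $M$ requires the next two symbols to lie in the colour classes $V_{u_j}$ and $V_{v_j}$ respectively (which, by disjointness of the classes, forces threads $A_{u_j}$ then $A_{v_j}$ to move) and to form an $H$-edge. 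This keeps $|M|=O(k\cdot|V(H)|)$, still polynomial, and restores the equivalence.
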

% -------------------------------------------------------------------------------
% -------------------------------------------------------------------------------
% -------------------------------------------------------------------------------
% -------------------------------------------------------------------------------
Roughly, the idea is this:
The alphabet $V(G) \times V(H)$ describes how the vertices of $G$ are mapped to vertices of $H$. 
Now we can use the memory $M$ to output all possible injective maps from $V(G)$ to $V(H)$. 
There is one thread $A_i$ for each edge of $G$.
Its task is to verify that the edges of $G$ get mapped to edges of $H$.

Note that Theorem \ref{Theorem:BCSLower} implies a lower bound for the $\FPT$-problem $\BCS(\cs,\m)$. 
It cannot be solved in $\m^{o(\cs / \log(\cs))}$ time, unless $\ETH$ fails.
\vspace{-0.3cm}
\subparagraph*{Lower Bound for Shuffle Membership.}
We prove it unlikely that $\SM$ can be solved in $\bigOS{(2 - \delta)^k}$ time, for a $\delta > 0$.
Hence, the $\bigOS{2^k}$-time algorithm above may be optimal.
We base our lower bound on a reduction from $\SetCov$.
An instance consists of a family of sets $(S_i)_{i \in [1..m]}$ over an universe $U = \bigcup_{i\in [1..m]} S_i$, and an integer $t \in \Nat$.
The problem asks for $t$ sets $S_{i_1}, \dots, S_{i_t}$ from the family such that $U = \bigcup_{j \in [1..t]} S_{i_j}$.

%While there is a simple brute-force approach for $\SetCov$ running in time $\bigOS{2^m}$, 
We are interested in a parameterization of the problem by the size $n$ of the universe.
It was shown that this parameterization admits an $\bigOS{2^n}$-time algorithm~\cite{Fomin2004}.
But so far, no $\bigOS{(2 - \varepsilon)^n}$-time algorithm was found, for an $\varepsilon > 0$.
Actually, the authors of \cite{Cygan2016} conjecture that the existence of such an algorithm would contradict the \emph{Strong Exponential Time Hypothesis} ($\SETH$) \cite{Impagliazzo2001, Calabro2009}. 
This is the assumption that \mbox{$n$-variable} $\SAT$ cannot be solved in $\bigOS{(2 - \varepsilon)^n}$ time, for an $\varepsilon > 0$ ($\SETH$ implies $\ETH$).
%To provide some evidence for their claim, the authors show that an $\bigOS{(2 - \varepsilon)^n}$-time algorithm for computing the number of set covers modulo $2$ contradicts $\SETH$. 
By now, there is a list of lower bounds based on $\SetCov$~\cite{Bjorklund2016,Cygan2016}. 
We add $\SM$ to this list.
% -------------------------------------------------------------------------------------
% -------------------------------------------------------------------------------------
% -------------------------------------------------------------------------------------
% -------------------------------------------------------------------------------------
\begin{proposition}\label{Theorem:LowerBoundSM}
	If $\SM$ can be solved in $\bigOS{(2-\delta)^k}$ time for a $\delta > 0$,  then $\SetCov$ can be solved in $\bigOS{(2-\varepsilon)^n}$ time for an $\varepsilon > 0$.
\end{proposition}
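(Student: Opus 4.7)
The plan is to give a polynomial-time reduction from $\SetCov$, parameterised by the universe size $n$, to $\SM$ that preserves the parameter exactly, i.e.\ that maps a universe of size $n$ to a query word of length $k = n$. A hypothetical $\bigOS{(2-\delta)^{k}}$ algorithm for $\SM$ then decides $\SetCov$ in time $\bigOS{(2-\delta)^{n}}$, so one can take $\varepsilon := \delta$.

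Given an instance of $\SetCov$ with universe $U = \{u_1, \dots, u_n\}$, family $S_1, \dots, S_m$ and target $t$, I would set $\Gamma := U$, $w := u_1 u_2 \cdots u_n$, and pick, for every $i \in [1..t]$, the same NFA $B_i := B$ constructed as follows. $B$ first nondeterministically guesses an index $j \in [1..m]$, then runs through a length-$n$ position counter; at step $\ell$ it may either skip position $\ell$ or consume $u_\ell$, the consuming transition being present only when $u_\ell \in S_j$. All states are final. Hence $L(B)$ is exactly the set of subsequences of $w$ whose letters lie in a single $S_j$, and $\abs{B} = \bigO{mn}$.

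For correctness I would invoke the partition characterisation of shuffle that was recalled when establishing the $\SM$ upper bound: $w$ belongs to $\SShuffle_{i \in [1..t]} L(B)$ iff the positions $\Pos = \{1, \dots, n\}$ admit a $t$-partition $(U_1, \dots, U_t)$ with each $w[U_i]$ either empty or in $L(B)$. By the definition of $B$ this is equivalent to having indices $j_1, \dots, j_t \in [1..m]$ with $\{u_\ell : \ell \in U_i\} \subseteq S_{j_i}$ for every $i$, and since the $U_i$ cover $\Pos$ this is in turn equivalent to $S_{j_1} \cup \cdots \cup S_{j_t} = U$, i.e.\ a set cover of size at most $t$ (with repetitions allowed, which can be padded to exactly $t$ if the problem formulation requires it). The converse direction merely assigns every $u_\ell$ to any $i$ whose chosen $S_{j_i}$ contains it.

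I do not expect a serious obstacle: the reduction is clearly polynomial in $m$, $n$, $t$, the parameter is preserved exactly, and correctness is a direct unfolding of the shuffle-to-partition equivalence. The only delicate design point is to force $B$ to commit to a single index $j$ at the very start, so that each accepted scattered subword stays inside one $S_j$ rather than spanning several; this is routine and stays within the $\bigO{mn}$ size bound. Plugging this construction into a hypothetical $\bigOS{(2-\delta)^{k}}$ solver for $\SM$ then yields the claimed $\bigOS{(2-\varepsilon)^{n}}$ algorithm for $\SetCov$.
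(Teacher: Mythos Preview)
Your reduction is correct and in fact more direct than the one the paper gives. The paper constructs one two-state automaton $B_S$ per set $S$ in the family (so $m$ automata in total), with $L(B_S)$ essentially $S^*$ followed by a single marker from $\{1,\dots,t\}$; the word is $w = u_1\cdots u_n \cdot 1\cdots t$, hence $k = n + t$. The trailing block $1\cdots t$ forces exactly $t$ of the $m$ automata to participate. Because the parameter blows up to $n+t$, the paper then has to invoke an external result (stated there as a theorem from \cite{Bjorklund2016,Cygan2016}) saying that an $\bigOS{(2-\delta)^{n+t}}$ algorithm for $\SetCov$ already yields an $\bigOS{(2-\varepsilon)^{n}}$ one.

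Your construction sidesteps this detour: by putting the nondeterministic choice of a set inside a single automaton and taking $t$ copies of it, you keep $k = n$ exactly and can set $\varepsilon = \delta$ without appealing to any auxiliary lemma. The price is larger automata ($\bigO{mn}$ states instead of two), but that is absorbed by the $\bigOS{\cdot}$ notation. Two small remarks: the position counter in your $B$ is not actually needed --- taking $L(B) = \bigcup_{j} S_j^{*}$ already suffices, since the words $w[U_i]$ you feed it are automatically subsequences of $w$ --- and you should note (as you implicitly do) that $t$ may be assumed at most $m$, so producing $t$ copies of $B$ stays polynomial.
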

\vspace{-0.5cm}
\subparagraph*{Lower Bound on the Size of the Kernel.}
Kernelization is a preprocessing technique for parameterized problems that transforms a given instance to an equivalent instance of size bounded by a function in the parameter.
It is well-known that any $\FPT$-problem admits a kernelization and any kernelization yields an $\FPT$-algorithm \cite{Cygan2015}.
The search for small problem-kernels is ongoing research.
A survey can be found in \cite{Lokshtanov2012}.

There is also the opposite approach, disproving the existence of a kernel of polynomial size~\cite{Bodlaender2009, Fortnow2011}. 
Such a result indicates hardness of the problem at hand, and hence serves as a lower bound. 
Technically, the existence of a polynomial kernel is linked to the inclusion \mbox{$\NP \subseteq \coNP/\poly$}. 
The latter is unlikely as it would cause a collapse of the polynomial hierarchy to the third level \cite{Yap1983}. 
Based on this approach, we show that $\BCS(\cs,\m)$ does not admit a kernel of polynomial size.
We introduce the needed notions, following~\cite{Cygan2015}.

A \emph{kernelization} for a parameterized problem $Q$ is an algorithm that, given an instance $(I,k)$, returns an equivalent instance $(I',k')$ in polynomial time such that \mbox{$|I'| + k' \leq g(k)$} for some computable function $g$. 
If $g$ is a polynomial, $Q$ is said to admit a \emph{polynomial kernel}.

We also need \emph{polynomial equivalence relations}.
These are equivalence relations on $\Sigma^*$, with $\Sigma$ some alphabet, such that: 
(1) There is an algorithm that, given $x,y \in \Sigma^*$, decides whether $(x,y) \in \equivR$ in time polynomial in $|x| + |y|$. 
(2) For every $n$, $\equivR$ restricted  to $\Sigma^{\leq n}$ has at most polynomially (in $n$) many equivalence classes.

To relate parameterized and unparameterized problems, we employ \emph{cross-compositions}.  
Consider a language $L \subseteq \Sigma^*$ and a parameterized language $Q \subseteq \Sigma^* \times \Nat$. 
Then $L$ \emph{cross-composes} into $Q$ if there is a polynomial equivalence relation $\equivR$ and an algorithm $\kernel{A}$, referred to as the \emph{cross-composition}, 
with: 
$\kernel{A}$ takes as input a sequence $x_1, \dots, x_t \in \Sigma^*$ of strings that are equivalent with respect to $\equivR$, runs in time polynomial in $\Sigma_{i=1}^{t} \abs{x_i}$, and outputs an instance $(y,k)$ of $Q$ such that 
$k \leq p(\max_{i\in[1..t]}\abs{x_i} + \log(t))$ for a polynomial $p$. Moreover, $(y,k) \in Q$ if and only if there is a $i \in [1..t]$ such that $x_i  \in L$.
Cross-compositions are the key to lower bounds for kernels:
% -------------------------------------------------------------------------------------
% -------------------------------------------------------------------------------------
% -------------------------------------------------------------------------------------
% -------------------------------------------------------------------------------------
\begin{theorem}[\cite{Cygan2015}]\label{Theorem:NopolykernelGeneral}
	Assume that an $\NP$-hard language cross-composes into a parameterized language $Q$.
	Then $Q$ does not admit a polynomial kernel, unless \mbox{$\NP \subseteq \coNP/\poly$}.
\end{theorem}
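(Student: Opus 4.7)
The plan is to derive the desired non-existence of a polynomial kernel by a contrapositive argument that chains the cross-composition with the kernelization to obtain an OR-distillation of an $\NP$-hard language, and then invoke the classical result of Fortnow and Santhanam (together with the composition framework of Bodlaender, Downey, Fellows, and Hermelin) which rules out such distillations under the complexity-theoretic assumption $\NP \not\subseteq \coNP/\polyclass$.

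First, I would assume for contradiction that $Q$ admits a polynomial kernel, i.e.\ a polynomial-time algorithm that, given $(I,k)$, produces an equivalent $(I',k')$ with $\abs{I'}+k' \le p(k)$ for some polynomial $p$. Let $L \subseteq \Sigma^*$ be the $\NP$-hard language that cross-composes into $Q$ via a polynomial equivalence relation $\equivR$ and algorithm $\kernel{A}$. The central idea is to design, for each parameter $n$, an efficient procedure that takes \emph{any} sequence of inputs $x_1, \dots, x_t \in \Sigma^{\le n}$ and decides whether at least one $x_i$ belongs to $L$, while producing an intermediate instance of size only polylogarithmic in $t$ (for fixed $n$). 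Because $\equivR$ has only polynomially many classes on $\Sigma^{\le n}$, we can bucket the $x_i$ into equivalence classes, apply $\kernel{A}$ within each class to obtain an instance $(y,k)$ of $Q$ with $k \le q(n + \log t)$, and then shrink it further using the assumed polynomial kernel to an equivalent instance of total size at most $p(q(n + \log t))$. Taking the OR over the polynomially many classes preserves polylogarithmic dependence on $t$.

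Next, I would observe that this pipeline constitutes an \emph{OR-distillation} of the $\NP$-hard language $L$ into some language (namely $Q$ viewed as an unparameterized problem): it compresses the disjunction $x_1 \in L \vee \dots \vee x_t \in L$ into a single string whose length is bounded by a polynomial in $n$ times a polylogarithm in $t$. At this point the heavy machinery kicks in: by the Fortnow--Santhanam theorem, the existence of such a distillation for an $\NP$-hard language implies $\NP \subseteq \coNP/\polyclass$, which by Yap's theorem would collapse the polynomial hierarchy to its third level. This is the consequence we wanted to avoid, and hence no polynomial kernel for $Q$ can exist.

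The main obstacle is not the reduction sketched above, which is essentially bookkeeping, but rather the invocation of the Fortnow--Santhanam distillation lower bound: its proof uses a subtle information-theoretic argument about nondeterministic advice and cannot be reproduced in a short sketch. Consequently, I would treat that theorem as a black box, and focus the technical care on two smaller points: (i) ensuring the parameter guarantee $k \le p(\max_i \abs{x_i} + \log t)$ from the cross-composition interacts correctly with the polynomial size bound of the kernel so that the final output length is genuinely polylogarithmic in $t$; and (ii) handling the polynomially many $\equivR$-equivalence classes by taking a disjoint union or explicit OR-gadget at the level of $Q$, which is the reason the definition of cross-composition requires $\equivR$ to be a \emph{polynomial} equivalence relation in the first place.
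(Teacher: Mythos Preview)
The paper does not prove this theorem at all: it is quoted verbatim from \cite{Cygan2015} and used as a black box to establish Theorem~\ref{Theorem:Nopolykernel}. So there is no ``paper's own proof'' to compare against. Your sketch is the standard argument behind the cross-composition framework (bucket by $\equivR$-classes, cross-compose, kernelize, obtain an OR-distillation, invoke Fortnow--Santhanam), and it is correct in outline; it is exactly what one finds in the cited reference.
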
 
% -------------------------------------------------------------------------------------
% -------------------------------------------------------------------------------------
% -------------------------------------------------------------------------------------
% -------------------------------------------------------------------------------------

To show that $\BCS(\cs,\m)$ does not admit a polynomial kernel, we cross-compose $\kSAT{3}$ into $\BCS(\cs,\m)$.
Then Theorem \ref{Theorem:NopolykernelGeneral} yields the following:
% -------------------------------------------------------------------------------------
% -------------------------------------------------------------------------------------
% -------------------------------------------------------------------------------------
% -------------------------------------------------------------------------------------
\begin{theorem}\label{Theorem:Nopolykernel}
 $\BCS(\cs,\m)$ does not admit a polynomial kernel, unless $\NP \subseteq \coNP/\poly$.
\end{theorem}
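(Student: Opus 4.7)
The plan is to invoke Theorem~\ref{Theorem:NopolykernelGeneral} by exhibiting a cross-composition from the $\NP$-hard problem $\kSAT{3}$ into $\BCS(\cs,\m)$. I would fix the polynomial equivalence relation $\equivR$ that declares two $\kSAT{3}$ formulas equivalent iff they share both the number of variables $n$ and the number of clauses $c$; this is clearly polynomial-time decidable and admits at most $N^2$ classes on strings of length at most $N$. On non-equivalent inputs the cross-composition algorithm simply outputs a fixed no-instance of $\BCS(\cs,\m)$.

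Given $T$ equivalent formulas $\phi_1,\ldots,\phi_T$, I would build an SMCP $S$ whose computations are organized into $\log T+n$ sequential \emph{phases}. Phases $1,\ldots,\log T$ use dedicated \emph{bit-writer} threads $B_1,\ldots,B_{\log T}$ to non-deterministically guess the bits of an index $i\in[1..T]$, and phases $\log T+1,\ldots,\log T+n$ use \emph{variable-writer} threads $X_1,\ldots,X_n$ to guess an assignment $v\in\{0,1\}^n$. Immediately after each writer emits its bit, $c$ \emph{clause-checker} threads $C_1,\ldots,C_c$ each consume that bit in turn, and only then may the next phase begin. Every $C_k$ is an NFA of size $O(Tn)$ that accumulates the guessed index $i(b)$, then the assignment, and accepts iff clause $k$ of $\phi_{i(b)}$ is satisfied by $v$ (with $i(b)>T$ rejecting). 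Hence $L(S)\cap\Context(\Sigma,t,\cs)\neq\emptyset$ iff some $\phi_i$ is satisfiable.

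For the parameter budget, the memory $M$ needs only to track the current phase, the bit value just written, and how many checkers have already consumed it in this phase; this gives $\abs{M}=O((\log T+n)\cdot c)$ states. The number of contexts is one writer plus $c$ readers per phase, yielding $\cs=O((\log T+n)(c+1))$. Both quantities are polynomial in $n+c+\log T$, so $\cs+\m\le p(\max_i\abs{\phi_i}+\log T)$ as required. The number of threads is not a parameter of $\BCS(\cs,\m)$, so having $\log T+n+c$ threads is unproblematic.

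The main obstacle is that in the SMCP model every letter of the shuffle is produced by exactly one thread, so a checker cannot passively observe a bit emitted by another thread. This forces the \emph{phased interleaving}: the writer emits, each of the $c$ checkers performs its own synchronized read of the just-written bit via $M$, and only then may the next phase begin. A naive alternative that first writes all guessed bits and reads them later would require $M$ to remember the entire $(\log T+n)$-bit history, blowing $\abs{M}$ up to $\Theta(T\cdot 2^n)$ states; the phased structure is precisely what keeps both $\abs{M}$ and $\cs$ polynomial in $n+c+\log T$. A minor technicality is handling $T$ that is not a power of two, which I would resolve by padding the family with trivially unsatisfiable dummies. Once the composition is verified, Theorem~\ref{Theorem:NopolykernelGeneral} yields the claimed kernel lower bound.
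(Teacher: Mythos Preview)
Your plan is correct and yields a valid cross-composition, but it takes a noticeably different route from the paper's. The paper exploits the fact that thread \emph{size} is not a parameter of $\BCS(\cs,\m)$ in a much more direct way: it packs all $t$ input formulas into a single large thread $B$ that nondeterministically branches to one $\varphi_j$ and then walks through its $\ell$ clauses, while $k$ tiny variable-threads $A_1,\ldots,A_k$ each just store one truth value. The memory $M$ merely mediates request/confirm pairs of the form $(x_i,?v)/(x_i,!v)$, so $\abs{M}=O(k)$ and $\cs=2\ell$; neither parameter depends on $t$ at all, not even via $\log t$. By contrast, your construction externalizes the instance choice into $\log T$ bit-writer threads and broadcasts each guessed bit to $c$ clause-checker threads, which forces both $\cs$ and $\abs{M}$ to scale with $(\log T+n)\cdot c$. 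Both satisfy the cross-composition parameter bound $p(\max_i\abs{\varphi_i}+\log T)$, but the paper's version is shorter and conceptually cleaner: the OR over instances is absorbed by a single nondeterministic thread rather than simulated by a broadcast protocol.

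One small point to watch in your construction: since $M$ operates over $\Sigma$ and does not see thread identities, tracking only ``how many checkers have already consumed'' the current bit is not quite enough---one checker could read the same bit twice and drift out of phase. You need the read letters to carry the checker index (e.g.\ $(\mathit{read}_k,b)\in\Sigma$) so that $M$ can enforce that each of $C_1,\ldots,C_c$ reads exactly once per phase; your stated bound $\abs{M}=O((\log T+n)\cdot c)$ already accommodates this, so it is a detail rather than a gap.
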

% -------------------------------------------------------------------------------------
% -------------------------------------------------------------------------------------
% -------------------------------------------------------------------------------------
% -------------------------------------------------------------------------------------
\begin{proof}[Proof Idea]
	For the cross-composition, we first need a polynomial equivalence relation $\equivR$.
	Assume some standard encoding of $\kSAT{3}$-instances over a finite alphabet $\Gamma$.
	We let two encodings $\varphi, \psi$ be equivalent under $\equivR$ if both are proper $\kSAT{3}$-instances and have the same number of clauses and variables.
	
	Let $\varphi_1, \dots, \varphi_t$ be instances of $\kSAT{3}$ that are equivalent under $\equivR$.
	Then each $\varphi_i$ has exactly $\ell$ clauses and $k$ variables.
	We can assume that the set of variables is $\set{x_1, \dots, x_k}$.
	To handle the evaluation of these, we introduce the NFAs $A_i, i \in [1..k]$, each storing the value of $x_i$.
	We further construct an automaton $B$ that picks one out of the $t$ formulas $\varphi_j$.
	Automaton $B$ tries to satisfy $\varphi_j$ by iterating through the $\ell$ clauses.
	To satisfy a clause, $B$ chooses one out of the three variables and requests the corresponding value.
	
	The request by $B$ is synchronized with the memory $M$.
	After every such request, $M$ either ensures that the sent variable $x_i$ actually has the requested value or stops the computation. 
	This is achieved by a synchronization with the corresponding variable-automaton $A_i$, which keeps the value of $x_i$. 
	The number of context switches lies in $\bigO{\ell}$ and the size of the memory in $\bigO{k}$. 
	Hence, all conditions for a cross-composition are met.
\end{proof}

%!TEX root = ../main.tex
\section{Local Parameterization}
\label{Section:RR}
In the previous section, we considered a parameterization of $\BCS$ that was global in the sense that the threads had to share the number of context switches. 
We now study a parameterization that is \emph{local} in that every thread is given a budget of context switches. 
%!TEX root = ../main.tex
% -------------------------------------------------------------------------------------
% -------------------------------------------------------------------------------------
% -------------------------------------------------------------------------------------
% -------------------------------------------------------------------------------------

We would like to have a measure for the amount of communication between processes and consider only those computations in which heavily interacting processes are scheduled adjacent to each other.
The idea relates to \cite{LuPSY}, where it is observed that a majority of concurrency bugs already occur between a few interacting processes.

Given a word $u\in \Shuffle{i \in [1..t]}{} L(A_i)$, we associate with it a graph that reflects the order in which the threads take turns.
This \emph{scheduling graph} of $u$ is the directed multigraph $\csgraphof{u}=(V, E)$ with one node per thread that participates in $u$, $V \subseteq [1..t]$, and edge weights $E:V\times V\rightarrow \Nat$ defined as follows.
Value $E(i, j)$ is the number of times the context switches from thread $i$ to thread $j$ in $u$.
Formally, this is the number of different decompositions $u=u_1.a.b.u_2$ of $u$ so that $a$ is in the alphabet of $A_i$ and $b$ is in the alphabet of $A_j$.
Note that $E(i, i)=0$ for all $i\in[1..t]$.
In the following we will refer to directed multigraphs simply as graphs and distinguish between graph classes only where needed.
% -------------------------------------------------------------------------------------
% -------------------------------------------------------------------------------------
% -------------------------------------------------------------------------------------
% -------------------------------------------------------------------------------------

In the scheduling graph, the degree of a node corresponds to the number of times the thread has the processor.
The \emph{degree} of a node $n$ in $\mgraph=(V, E)$ is the maximum over the outdegree and the indegree, $\degreeof{n}=\maximum\set{\indegreeof{n}, \outdegreeof{n}}$.
As usual, the outdegree of a node $n$ is the number of edges leaving the node, $\outdegreeof{n}=\sum_{n'\in V} E(n, n')$, the indegree is defined similarly.
To see the correspondence, observe that a scheduling graph can have three kinds of nodes.
The \emph{initial node} is the only node where the indegree equals the outdegree minus 1, and the thread has the processor outdegree many times.
For the \emph{final node}, the outdegree equals the indegree minus 1, and the thread computes for indegree many contexts.
For all other (usual) nodes, indegree and outdegree coincide.
Any scheduling graph either has one initial, one final, and only usual nodes or, if initial and final node coincide, only consists of usual nodes.
The degree of the graph is the maximum among the node degrees, $\degreeof{G}=\maximum\Set{\degreeof{n}}{n\in V}$.
% -------------------------------------------------------------------------------------
% -------------------------------------------------------------------------------------
% -------------------------------------------------------------------------------------
% -------------------------------------------------------------------------------------

Our goal is to measure the complexity of schedules.
Intuitively, a schedule is simple if the threads take turns following some pattern, say round robin where they are scheduled in a cyclic way.
To formalize the idea of scheduling patterns, we iteratively contract scheduling graphs to a single node and measure the degrees of the intermediary graphs.
If always the same threads follow each other, we will be able to merge the nodes of such neighboring threads without increasing the degree of the resulting graph.
This discussion leads to a notion of scheduling dimension that we define in the following paragraph.
In Appendix \ref{Appendix:CarvingWidth}, we elaborate on the relation to an established measure: The carving-width.
% -------------------------------------------------------------------------------------
% -------------------------------------------------------------------------------------
% -------------------------------------------------------------------------------------
% -------------------------------------------------------------------------------------

Given a graph $\mgraph=(V, E)$, two nodes $n_1, n_2\in V$, and $n\notin V$, we define the operation of \emph{contracting $n_1$ and $n_2$} into the fresh node $n$ by adding up the incoming and outgoing edges.
Formally, the graph $\mgraph\contract{n_1, n_2}{n}=(V', E')$ is defined by \mbox{$V'=(V\setminus\set{n_1, n_2})\cup \set{n}$} and $E'(n', n) = E(n', n_1)+E(n', n_2)$, $E'(n, n') = E(n_1, n')+E(n_2, n')$, and \mbox{$E'(m, m') = E(m, m')$} for all other nodes.  
Using iterated contraction, we can reduce a graph to only one node.
Formally, a \emph{contraction process} of $\mgraph$ is a sequence $\contractionprocess=\mgraph_1, \ldots, \mgraph_{\abs{V}}$ of graphs, where $\mgraph_1 = \mgraph$, $\mgraph_{k+1}=\mgraph_{k}\contract{n_1, n_2}{n}$ for some $n_1, n_2\in V(G_{k})$ and $n\notin V(G_{k})$, $k\in[1..\abs{V}-1]$, and $\mgraph_{\abs{V}}$ consists of a single node.
The degree of a contraction process is the maximum of the degrees of the graphs in that process, $\degreeof{\contractionprocess}=\maximum\Set{\degreeof{\mgraph_i}}{i\in[1..\abs{V}]}$.
The \emph{scheduling dimension} of $\mgraph$ is $\sdimof{\mgraph}=\minimum\Set{\degreeof{\contractionprocess}}{\contractionprocess\text{ a contraction process of }\mgraph}$.
% -------------------------------------------------------------------------------------
% -------------------------------------------------------------------------------------
% -------------------------------------------------------------------------------------
% -------------------------------------------------------------------------------------

We study the complexity of $\BCS$ when parameterized by the scheduling dimension.
To this end, we define define the language of all computations where the scheduling dimension (of the corresponding scheduling graphs) is bounded by the parameter $\sdim\in \Nat$:
\vspace{-0.20cm}
\begin{align*}
	\sdimlang = \Set{u\in (\Sigma\times[1..t])^*}{\sdimof{\csgraphof{u}}\leq \sdim}.
\end{align*}
\vspace{-0.6cm}
% -------------------------------------------------------------------------------------
% -------------------------------------------------------------------------------------

% -------------------------------------------------------------------------------------
% -------------------------------------------------------------------------------------
\begin{problem}
	\problemtitle{Bounded Context Switching --- Local Parameterization}
	\problemshort{(\bcslsd)}
	\probleminput{$S = \SMS{\Sigma}{M}{A}{i}{t}$ and bound $\sdim\in \Nat$ on the scheduling dimension. }
	\problemquestion{Is $L(S)\cap \sdimlang \neq \emptyset$ ?}
\end{problem}
% -------------------------------------------------------------------------------------
% -------------------------------------------------------------------------------------
% -------------------------------------------------------------------------------------
% -------------------------------------------------------------------------------------
\begin{theorem}
	\bcslsd\ can be solved in time $\bigOS{(2m)^{4\sdim}4^{t}}$.
\end{theorem}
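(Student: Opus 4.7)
The plan is to reduce \bcslsd\ to a dynamic program over \emph{contraction trees} that witness the scheduling dimension bound. A computation $u$ with $\sdimof{\csgraphof{u}} \leq \sdim$ corresponds to a binary tree $T$ with leaves $[1..t]$ such that every internal node $v$, representing the subset $V_v \subseteq [1..t]$ of leaves below it, satisfies $\indegreeof{V_v} \leq \sdim$ and $\outdegreeof{V_v} \leq \sdim$ in $\csgraphof{u}$ (after contracting $V_v$ to a single node). These trees will be enumerated implicitly by a DP that ranges over subsets of threads, in the style of a carving-width DP.

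First I would define the \emph{interface} of a subset $V' \subseteq [1..t]$: the temporally ordered sequence of context switches crossing the boundary between $V'$ and $[1..t]\setminus V'$, each annotated with its direction and the memory state at the switch, together with the memory states at the start and end of every boundary context on the $V'$ side. When the scheduling dimension is at most $\sdim$, the interface has at most $2\sdim$ crossings and is described by at most $4\sdim$ memory states carrying a constant amount of auxiliary bits (direction, relative order), giving at most $(2m)^{4\sdim}$ distinct interfaces per subset.

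The DP table $D[V',\iota]$ records whether $(A_i)_{i\in V'}$ admits a schedule that realizes the interface $\iota$ when interacting with $M$. For a singleton $V' = \set{i}$, the value $D[\set{i},\iota]$ can be evaluated by intersecting $L(A_i)$ with the regular language encoded by $\iota$ and by the memory automaton $M$, reusing the interface-language construction of Lemma~\ref{Lemma:MemorySequenceEquivalence}. For a larger subset $V'$, I would iterate over all partitions $V' = V_1 \uplus V_2$ and over all pairs of child interfaces $(\iota_1, \iota_2)$; from such a pair we compute the parent interface by \emph{shuffling} the two sequences in time, removing any crossings that have now become internal to $V'$ (i.e., crossings between $V_1$ and $V_2$), and checking memory-state consistency across the removed crossings. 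Acceptance of the original \bcslsd\ instance is then read off as $D[[1..t], \iota_\varepsilon]$ for the empty interface $\iota_\varepsilon$ corresponding to a completed computation whose memory run starts at $q_0$ and ends at $q_f$.

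For the complexity, summing $2^{\abs{V'}}$ partitions over all subsets $V' \subseteq [1..t]$, together with the pairing-up of child interfaces across the partition, contributes $\calO(4^t)$, and for each subset--partition pair we spend at most $(2m)^{4\sdim}$ time combining interfaces. The hard part will be settling on the right granularity for the interface: it must be fine enough that two child interfaces uniquely determine the parent up to a time-shuffle, yet coarse enough to still fit in $(2m)^{4\sdim}$ states. This requires carefully analysing which memory states across a boundary context are functionally determined by the crossings themselves versus by the internal behaviour of $V'$, and is where the precise exponent $4\sdim$ (as opposed to a weaker $(2m)^{\calO(\sdim)}$ bound) has to be pinned down; once this bookkeeping is in place, correctness follows from the fact that projecting a valid schedule onto $V'$ yields a valid interface, and the claimed runtime follows by a direct count.
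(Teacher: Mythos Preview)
Your approach is essentially the paper's, recast as a subset DP rather than a fixed-point iteration. The paper works over the lattice $\Powerset((Q\times Q)^{\leq \sdim}\times\Powerset([1..t]))$: an element is exactly your pair (interface, thread set), where the interface is simply the ordered sequence of state pairs $(q,q')$ seen at the boundary of one ``context'' of the merged group. The merge step is your shuffle-then-contract, phrased as a product $\mergeop^{\sdim}$ that shuffles two interface sequences and then summarizes maximal valid infixes $(r_1,r_1')\ldots(r_n,r_n')$ with $r_i'=r_{i+1}$ into a single pair $(r_1,r_n')$, discarding results longer than $\sdim$. Iterating $L_{i+1}=L_i\cup(L_i\mergeop^{\sdim}L_i)$ from $L_1=\bigcup_i\ilangof{A_i}\times\{\{i\}\}$ terminates in $t$ rounds and yields the fixed point; this is equivalent to your bottom-up DP over subsets.

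Where your sketch is loose is precisely where you flag it: the exponent. The paper's accounting is that a single interface lives in $(Q\times Q)^{\leq\sdim}$, of size $\leq m^{2(\sdim+1)}$; squaring (your ``pairing-up of child interfaces'') gives $m^{4\sdim+4}$, the $2^t$ thread sets squared give $4^t$, and the remaining $16^{\sdim}=2^{4\sdim}$ comes from enumerating the $\binom{|\sigma_1|+|\sigma_2|}{|\sigma_2|}\leq 4^{\sdim}$ shuffles times the $\leq 4^{\sdim}$ ways to choose summarization intervals. That is how the $4\sdim$ in $(2m)^{4\sdim}$ arises: $2\sdim$ states per interface, squared for the pair, with the $2^{4\sdim}$ from the combinatorics of the merge. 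Your stated ``$4\sdim$ memory states per interface'' overcounts by a factor of two and would square to $(2m)^{8\sdim}$; once you drop the redundant direction/ordering bits and let the shuffle-and-contract enumeration carry that information instead, you recover the paper's bound.
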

% -------------------------------------------------------------------------------------
% -------------------------------------------------------------------------------------
% -------------------------------------------------------------------------------------
% -------------------------------------------------------------------------------------

\begin{comment}
We need some terminology.  
An \emph{interface sequence} is a word $\sigma=(q_1, q_1')\ldots (q_{k}, q_{k}')$ over the states of the memory automaton $M$.  
The length is $k$.
An \emph{interface language} is a set of interface sequences.
The interface language $\ilangof{A}$ of a thread $A$ consists of all interface sequences $(q_1, q_1')\ldots (q_{k}, q_{k}')$ so that $L(A)\cap L(M(q_1, q_1'))\ldots L(M(q_{k}, q_{k}'))\neq \emptyset$.
\end{comment}

We present a fixed-point iteration that mimics the definition of contraction processes by
iteratively joining the interface sequences of neighboring threads.
Towards the definition of a suitable composition operation, let the \emph{product} of two interface sequences $\sigma$ \mbox{and $\tau$ be $\sigma\mergeop \tau = \bigcup_{\rho\in \sigma\SShuffle \tau} \rho\closure$}.
The language $\rho\closure$ consists of all interface sequences $\rho'$ obtained by
(iteratively) summarizing subsequences in $\rho$.
Summarizing $(r_1, r_1')\ldots (r_n, r_n')$ where \mbox{$r_1'=r_2$} up to $r_{n-1}'=r_n$ means to contract a sequence to $(r_1, r_n')$.
We write $\sigma\mergeop^k\tau$ for the variant of the product operation that only returns interface sequences of length at most \mbox{$k\geq 1$, $(\sigma\mergeop\tau) \cap (Q\times Q)^{\leq k}$.}

Our algorithm computes a fixed point over the powerset lattice (ordered by inclusion) $\Powerset(\ (Q\times Q)^{\leq \sdim}\times \Powerset([1..t])\ )$.
The elements are \emph{generalized interface sequences}, pairs consisting of an interface sequence together with the set of threads that has been used to construct it.
We generalize $\mergeop^k$ to this domain.
For the definition, consider $(\sigma_1, T_1)$ and $(\sigma_2, T_2)$.
If the sets of threads are not disjoint, $T_1\cap T_2\neq \emptyset$, the sequences cannot be merged and we obtain $(\sigma_1, T_1) \mergeop (\sigma_2, T_2) = \emptyset$.
If the sets are disjoint, we define \mbox{$(\sigma_1, T_1) \mergeoppar{k} (\sigma_2, T_2)= (\sigma_1\mergeop^k \sigma_2)\times\set{T_1\cup T_2}$}.
The fixed-point iteration is given by
\mbox{$L_1  = \bigcup_{i\in[1..t]}\ilangof{A_i}\times\set{\set{i}}$} and $L_{i+1} = L_i\cup (L_i\mergeop^{\sdim} L_i)$.
The following lemma states that it solves $\bcslsd$.
We elaborate on the complexity in Appendix \ref{Appendix:BCSL}.
% -------------------------------------------------------------------------------------
% -------------------------------------------------------------------------------------
% -------------------------------------------------------------------------------------
% -------------------------------------------------------------------------------------
\begin{lemma}\label{Lemma:lfp}
	\bcslsd\ holds iff the least fixed point contains $((q_{\mathit{init}}, q_{\mathit{final}}), T)$ for some $T$.
\end{lemma}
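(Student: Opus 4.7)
The plan is to interpret each generalized interface sequence $(\sigma, T)$ in the fixed point as a contracted vertex in some partial execution of the threads in $T$: the set $T$ records the already-merged threads, and $\sigma$ describes, as a sequence of memory state pairs, the externally visible contexts of this merged node. With this reading, the iteration $L_i$ mirrors a contraction process, and the length restriction in $\mergeoppar{\sdim}$ enforces the degree bound $\sdim$ along that process. Both implications will be proved by inductions that keep this reading as the invariant.

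For the only-if direction, I would take $u \in L(S) \cap \sdimlang$ together with a contraction process $\pi = \mgraph_1,\ldots,\mgraph_p$ of $\csgraphof{u}$ of degree $\leq \sdim$. For each intermediate vertex, corresponding to a set $T$ of threads, define $\sigma_T$ by partitioning $u$ into maximal $T$- and $\overline{T}$-blocks and reading the memory states at the boundaries of the $T$-blocks. Induction on the step of $\pi$ yields $(\sigma_T,T)\in L_\infty$: singleton $T$ come from $\ilangof{A_i}$, and a contraction $\{T_1,T_2\}\rightsquigarrow T_1\cup T_2$ produces exactly the shuffle of $\sigma_{T_1}$ and $\sigma_{T_2}$ followed by the $\closure$ summarization, which hides the now-internal $T_1\leftrightarrow T_2$ switches. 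The length of $\sigma_T$ matches the degree of the $T$-vertex in $\mgraph_\ell$ (up to a small additive constant when $T$ already contains both the initial and final thread), so the truncation to $\sdim$ does not discard $\sigma_T$. At the final step the single surviving vertex has $\sigma_T = (q_{init},q_{final})$, because $u$ itself drives $M$ from $q_{init}$ to $q_{final}$.

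For the if-direction I would prove by induction on the stage $i$ at which $(\sigma, T)$ enters $L_i$ that there exist $w_j \in L(A_j)$ for $j \in T$ and an interleaving $u_T \in \SShuffle_{j\in T} \{w_j\}$ whose block decomposition along $T$-boundaries realizes $\sigma$ on the memory automaton, and whose scheduling graph admits a contraction process of degree $\leq \sdim$. The base case is the definition of $\ilangof{A_j}$. For the step, a merge $(\sigma_1,T_1)\mergeoppar{\sdim}(\sigma_2,T_2)$ with witness $\rho\in\sigma_1\SShuffle\sigma_2$ tells us how to interleave $u_{T_1}$ and $u_{T_2}$ block by block; every summarization step in going from $\rho$ to $\sigma$ corresponds to identifying a memory state shared between an outgoing $T_1$-block and the subsequent $T_2$-block (or vice versa), i.e., to realizing that $T_1\leftrightarrow T_2$ switch internally and treating the two blocks as one external context. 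Applied to $((q_{init},q_{final}),T)$, the resulting $u_T$ lies in $L(S)$ (with threads outside $T$ contributing $\varepsilon$, which is allowed by the shuffle), and its contraction process, obtained by concatenating the contractions produced for $(\sigma_1,T_1)$ and $(\sigma_2,T_2)$ followed by one merge step, has degree at most $\sdim$, placing $u_T$ in $\sdimlang$.

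The main obstacle is verifying the precise correspondence between the summarization operator $\closure$ and the hiding of internal context switches when two groups are merged: a hidden $T_1\leftrightarrow T_2$ switch appears in any shuffle $\rho$ as two adjacent pairs $(r,r')(r',r'')$ sharing the memory state $r'$ at the moment of the hidden switch, and summarizing them to $(r,r'')$ is exactly the externally visible aggregate context of the merged node. Care is also needed to keep the length of $\sigma_T$ within the degree bound along the whole contraction, since vertices that contain both the initial and the final thread have a length--to--degree discrepancy of one; this is absorbed by the observation that such vertices appear only once the external interface has already contracted down to a single pair, so the $\mergeoppar{\sdim}$ truncation never discards a relevant sequence.
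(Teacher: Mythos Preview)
Your overall plan coincides with the paper's: for the only-if direction you follow a contraction process and attach to each intermediate vertex the interface sequence obtained by reading memory states at $T$/$\overline T$ boundaries; for the if direction you unfold the fixed point into a binary execution tree and reassemble a witness word from the leaves. That high-level structure is exactly the paper's.

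There is, however, a genuine gap in your if-direction. You maintain as invariant that $G(u_T)$ admits a contraction process of degree $\le\sdim$, and you propose to obtain the process for $G(u_{T_1\cup T_2})$ by \emph{concatenating} the processes for $G(u_{T_1})$ and $G(u_{T_2})$ and adding one final merge. The problem is that the degree bound does not transfer: when you replay the contractions of $\pi_1$ on $G(u_{T_1\cup T_2})$, an intermediate node $N\subseteq T_1$ now also has edges to $T_2$, and these edges are not present in $G(u_{T_1})$. Concretely, an $N$-block in $u_{T_1}$ that straddles a $\sigma_{T_1}$-chunk boundary may be split by inserted $T_2$-material in $u_{T_1\cup T_2}$, so the number of $N$-blocks (and hence the degree of $N$) can strictly increase relative to $G(u_{T_1})$. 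Your inductive hypothesis therefore does not bound the degree of the concatenated process. The paper avoids this by a different route: it first reconstructs the \emph{full} word $u$ at the root via an auxiliary map $\Pi$ that undoes all summarizations, then observes that the execution tree is a carving decomposition of $G(u)'$ whose edge widths are bounded by the lengths $|\tau|\le\sdim$ of the labels, and finally invokes the lemma $\sdimof{G}\le\cwof{G'}$. This global argument bounds, for each tree node $T$, the cut $E'(T,V\setminus T)$ in the final graph $G(u)$ by $|\sigma_T|$, which is precisely what your local concatenation does not give you.

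A smaller issue: your last paragraph claims that the length/degree discrepancy of one (for vertices containing both the initial and the final thread) ``appears only once the external interface has already contracted down to a single pair''. This is false: if the computation starts and ends in the same thread $i$, then already the singleton $\{i\}$ has $|\sigma_{\{i\}}|=\deg(i)+1$, so the discrepancy is present from the very first graph of the contraction process, not only near the root.
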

% -------------------------------------------------------------------------------------
% -------------------------------------------------------------------------------------
% -------------------------------------------------------------------------------------
% -------------------------------------------------------------------------------------
%\subsection{Variants}
Problem $\bcslsd$ can be generalized and can be restricted in natural ways.
We discuss both options and show that variants of the above algorithm still apply, but yield different complexities.

Let problem $\bcslany$ be the variant of $\bcslsd$ where every thread is given a budget of running $\cs\in \Nat$ times, but where we do not make any assumption on the scheduling.
The observation is that, still, the scheduling dimension is bounded by $t\cdot \cs$.
The above algorithm solves $\bcslany$ in time $\bigOS{(2m)^{4 t\cdot\cs} 4^t}$.
% -------------------------------------------------------------------------------------
% -------------------------------------------------------------------------------------
% -------------------------------------------------------------------------------------
% -------------------------------------------------------------------------------------
\vspace{-0.3cm}
\subparagraph*{Fixing the Scheduling Graph.}
% -------------------------------------------------------------------------------------
% -------------------------------------------------------------------------------------
% -------------------------------------------------------------------------------------
% -------------------------------------------------------------------------------------
We consider $\bcslfix$, a variant of $\bcslsd$ where we fix a scheduling graph together with a contraction process of degree bounded by $\sdim$.
We are interested in finding an accepting computation that switches contexts as depicted by the fixed graph.
Formally, $\bcslfix$ takes as input an SMCP $S = \SMS{\Sigma}{M}{A}{i}{t}$, a scheduling graph $G$, and a contraction process $\contractionprocess$ of $G$ of degree at most $\sdim$.
The task is to find a word $u \in L(S)$ such that $G(u) = G$.
Our main observation is that a variant of the above algorithm applies and yields a runtime polynomial in $t$.
% -------------------------------------------------------------------------------------
% -------------------------------------------------------------------------------------
% -------------------------------------------------------------------------------------
% -------------------------------------------------------------------------------------
\begin{theorem}\label{Theorem:bcslfix}
	$\bcslfix$ can be solved in time $\bigOS{(2m)^{4\sdim}}$.
\end{theorem}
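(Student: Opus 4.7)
The plan is to reuse the fixed-point iteration underlying Lemma~\ref{Lemma:lfp}, but to replay it along the given contraction process $\contractionprocess = \mgraph_1, \ldots, \mgraph_{\abs{V}}$ instead of searching over all possible contractions. Since $\contractionprocess$ prescribes exactly which pair of nodes is merged at each step, the algorithm no longer needs to search over orderings and, more importantly, it no longer needs to tag each interface sequence with the set $T \subseteq [1..t]$ of threads already consumed: disjointness is now guaranteed structurally by $\contractionprocess$. Dropping this tag from the dynamic-programming state is precisely what eliminates the $4^t$ factor appearing in the bound for \bcslsd.

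Concretely, I maintain, for every node $n$ of every intermediate graph $\mgraph_j$, a set $L_n \subseteq (Q \times Q)^{\leq \degreeof{n}}$ of interface sequences that are realizable by the threads already merged into $n$ and that are consistent with the edges of $\mgraph_j$ incident to $n$. Initialization: in $\mgraph_1 = \mgraph$ each node $n$ is a single thread $A_i$, and $L_n$ is obtained by restricting $\ilangof{A_i}$ to sequences whose length matches $\maximum\set{\indegreeof{n}, \outdegreeof{n}}$. Inductive step: on the contraction $\mgraph_{j+1} = \mgraph_j \contract{n_1, n_2}{n}$, compute $L_n$ by enumerating pairs $(\sigma_1, \sigma_2) \in L_{n_1} \times L_{n_2}$ and applying a constrained version of $\mergeoppar{\degreeof{n}}$ that admits only those interleavings of $\sigma_1, \sigma_2$ which respect the multiplicities $E(n_1, n_2)$ and $E(n_2, n_1)$ in $\mgraph_j$, then summarizes via $\closure$ exactly the positions that become internal to $n$. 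Accept iff the single node $n_\star$ of $\mgraph_{\abs{V}}$ satisfies $(q_0, q_f) \in L_{n_\star}$. Correctness rests on the invariant that $\sigma \in L_n$ iff the threads merged into $n$ admit accepting computations whose combined interface on the edges of $\mgraph_j$ incident to $n$ is precisely $\sigma$; this is verified inductively along $\contractionprocess$ in exact analogy with the argument for Lemma~\ref{Lemma:lfp}.

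For the running time, every graph in $\contractionprocess$ has degree at most $\sdim$, so every $L_n$ contains at most $m^{2\sdim}$ sequences (each a word of length $\leq \sdim$ over $Q \times Q$). A single merging step therefore inspects at most $m^{4\sdim}$ pairs, and for each there are at most $\bigO{4^{\sdim}}$ interleavings to consider, giving a cost of $\bigOS{(2m)^{4\sdim}}$ per step. The process performs $\abs{V} - 1 \leq t - 1$ merging steps and no step depends exponentially on $t$, yielding the claimed total $\bigOS{(2m)^{4\sdim}}$. The main technical obstacle I foresee is the bookkeeping in the constrained merge: the interleavings considered must be in bijection with the edge multiplicities of $\mgraph_j$ between $n_1$ and $n_2$, so that the positions collapsed by $\closure$ correspond precisely to the edges becoming internal to $n$; getting this alignment right is what makes the invariant above propagate correctly through all further contraction steps.
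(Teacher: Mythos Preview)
Your proposal is correct and follows essentially the same approach as the paper: the paper formalizes your ``constrained version of $\mergeoppar{\degreeof{n}}$'' as a \emph{directed product} $\sigma \omerge{i}{k} \tau$ that admits exactly $i=E(n_1,n_2)$ out-contractions and $k=E(n_2,n_1)$ in-contractions, initializes $S_v = \ilangof{A_v}\cap(Q\times Q)^{\degreeof{v}}$ (with an additional constraint at the initial and final vertices that you do not spell out), and iterates along $\contractionprocess$ exactly as you describe. The only discrepancy is cosmetic: the paper bounds the per-merge cost by $16^{\sdim}$ rather than your $4^{\sdim}$ (accounting separately for shuffles and for choosing the contraction positions), but both are absorbed into $\bigOS{(2m)^{4\sdim}}$.
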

% -------------------------------------------------------------------------------------
% -------------------------------------------------------------------------------------
% -------------------------------------------------------------------------------------
% -------------------------------------------------------------------------------------

Fixing the scheduling graph $G = (V,E)$ and the contraction process $\contractionprocess$ has two crucial implications on the above algorithm.
First, we need to contract interface sequences with respect to the structure of $G$.
To this end, we will introduce a new product.
Secondly, instead of a fixed point we can now compute the required products between interface sequences iteratively along $\contractionprocess$.
Hence, we do not have to maintain the set of threads in the domain but can compute on $\Powerset((Q\times Q)^{\leq \sdim})$.

Towards obtaining the algorithm, we first describe the new product that summarizes interface sequences along the directed graph structure.
Let $\sigma$ and $\tau$ be interface sequences.
Further, let $\rho \in \sigma \SShuffle \tau$.
We call a position in $\rho$ an \emph{out-contraction} if it is of the form $(q_i,q_i')(p_j,p_j')$ so that $(q_i,q_i')$ belongs to $\sigma$, $(p_j,p_j')$ belongs to $\tau$ and $q_i' = p_j$.
Similarly, we define \emph{in-contractions}.
These are positions where a state-pair of $\tau$ is followed by a pair of $\sigma$.
The \emph{directed product} of $\sigma$ and $\tau$ is then defined as: $\sigma \omerge{i}{j} \tau = \bigcup_{\rho \in \sigma \SShuffle \tau} \rho \closure_{(i,j)}$.
Here, the language $\rho \closure_{(i,j)}$ contains all interface sequences $\rho'$ obtained by summarizing subsequences of $\rho$, in total containing exactly $i$ out-contractions and $j$ in-contractions.
Note that for $\sigma \in (Q \times Q)^{n}$ and $\tau \in (Q \times Q)^{k}$, the directed product contracts at $i+j$ positions and yields: 
$\sigma \omerge{i}{j} \tau \subseteq (Q\times Q)^{n + k - (i + j)}$. 

Now we describe the iteration. 
First, we may may assume that $V = [1..t]$.
Otherwise, the non-participating threads in $S$ can be deleted.
We distinguishes two cases.

In the first case, we assume that $G$ has a designated initial vertex $v_0$.
Then there is also a final vertex $v_f$.
Let $\contractionprocess = G_1, \dots, G_t$.
The iteration starts by assigning to each vertex $v \in V$ the set $S_v = \ilangof{A_v} \cap ({Q\times Q})^{\degreeof{v}}$.
For $S_{v_0}$, we further require that the first component of the first pair occurring in an interface sequence is $q_{\mathit{init}}$.
Similarly, for $S_{v_f}$ we require that the second component of the last pair is $q_{\mathit{final}}$.

Now we iterate along $\contractionprocess$:
For each contraction $G_{j+1} = G_{j}\contract{n_1,n_2}{n}$, we compute $S_{n} = (S_{n_1} \omerge{i}{k} S_{n_2})$, where $i = E(n_1,n_2)$ and $k = E(n_2,n_1)$. 
Then $S_n \subseteq (Q \times Q)^{\degreeof{n}}$, where $\degreeof{n}$ is the degree of $n$ in $G_{j+1}$.
Let $V(G_t) = \set{w}$.
Then the algorithm terminates after $S_{w}$ has been computed.

For the second case, suppose that no initial vertex is given.
This means that initial and final vertex coincide.
Then we iteratively go through all vertices in $V$, designate any to be initial (and final) and run the above algorithm.
The correctness is shown in the following lemma and we elaborate on the complexity in Appendix \ref{Appendix:BCSLFIX}.
% -------------------------------------------------------------------------------------
% -------------------------------------------------------------------------------------
% -------------------------------------------------------------------------------------
% -------------------------------------------------------------------------------------
\begin{lemma}\label{Lemma:fixedlfp}
	$\bcslfix$ holds iff $(q_{\mathit{init}} , q_{\mathit{final}}) \in S_{w}$.
\end{lemma}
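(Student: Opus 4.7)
The plan is to proceed by induction along the contraction process $\contractionprocess = G_1, \ldots, G_t$, maintaining the invariant that for every intermediate graph $G_j$ and every vertex $n$ of $G_j$, the set $S_n$ computed by the algorithm equals precisely the set of interface sequences $\sigma$ of length $\degreeofgraph{G_j}{n}$ (length $1$ at the terminal single-vertex graph) such that there exists a word $u_n \in \SShuffle_{v \in T_n} L(A_v)$ that induces $\sigma$ as its externally visible context-switch pattern on the memory and whose internal context switches among the threads of $T_n$ match exactly the edges of $G$ already absorbed into $n$. Here $T_n \subseteq [1..t]$ denotes the set of original threads merged into $n$ along $\contractionprocess$. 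The base case is exactly the initialization $S_v = \ilangof{A_v} \cap (Q \times Q)^{\degreeofgraph{G}{v}}$, with the first component of $S_{v_0}$ pinned to $q_{\mathit{init}}$ and the last component of $S_{v_f}$ pinned to $q_{\mathit{final}}$, which encodes the acceptance condition of $M$.

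For the inductive step at $G_{j+1} = G_{j}\contract{n_1, n_2}{n}$, the key observation is that the $i = E(n_1, n_2)$ edges from $n_1$ to $n_2$ and the $k = E(n_2, n_1)$ edges in the other direction are precisely the handovers becoming internal to $n$. A shuffle $\rho \in \sigma_1 \SShuffle \sigma_2$ of the external interfaces of $n_1$ and $n_2$ lists a candidate interleaving of their contexts, and an out-contraction at a position $(q, q')(p, p')$ with $q' = p$ corresponds exactly to a context switch from a thread in $T_{n_1}$ to a thread in $T_{n_2}$ whose outgoing memory state is consistent; symmetrically for in-contractions. Summarizing $i$ out- and $k$ in-contractions therefore collapses the internal handovers into single external entries and yields the interface of $n$ in $G_{j+1}$. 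This is exactly what $\sigma_1 \omerge{i}{k} \sigma_2$ computes, so $S_{n_1} \omerge{i}{k} S_{n_2}$ preserves the invariant for $T_n = T_{n_1} \cup T_{n_2}$.

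At termination, $G_t$ has a single vertex $w$ with $T_w = [1..t]$ and all edges of $G$ absorbed. The invariant then says that $(q_{\mathit{init}}, q_{\mathit{final}}) \in S_w$ holds iff there is a word $u \in \SShuffle_{v \in [1..t]} L(A_v)$ whose scheduling graph equals $G$ and whose run on $M$ leads from $q_{\mathit{init}}$ to $q_{\mathit{final}}$, which is exactly the definition of $\bcslfix$. For the secondary case in which $G$ has no designated initial vertex, initial and final coincide for any accepting computation; iterating over the choice $v_0 = v_f \in V$ costs only a factor $t$ and reduces to the first case.

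The main obstacle will be proving the inductive step precisely: one has to exhibit a bijection between, on the one hand, pairs of an interleaving $\rho \in \sigma_1 \SShuffle \sigma_2$ together with a choice of $i$ out- and $k$ in-contractions to summarize, and on the other hand, the words in $\SShuffle_{v \in T_n} L(A_v)$ whose context switches across the cut $(T_{n_1}, T_{n_2})$ use exactly the prescribed memory handovers. Because the threads act on disjoint indexed alphabets the shuffle decomposes cleanly, and the memory-state matching enforced by a contraction position coincides with the consistency condition that $M$ imposes between two consecutive contexts; the argument therefore reduces to the same interface-language characterization already established in Lemma \ref{Lemma:MemorySequenceEquivalence}, applied at the level of the merged subsystem rather than the whole program.
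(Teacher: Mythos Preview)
Your proposal is correct and rests on the same idea as the paper: both arguments exploit the correspondence between the $(i,k)$ out/in-contractions in the directed product $\omerge{i}{k}$ and the context switches across the cut $(T_{n_1},T_{n_2})$ prescribed by the edge weights $E(n_1,n_2)$ and $E(n_2,n_1)$ of the scheduling graph. The difference is organizational. You package both directions into a single semantic invariant on $S_n$ maintained along $\contractionprocess$; the paper treats the two implications separately, reusing the maps $\ifmap{j}$ from the proof of Lemma~\ref{Lemma:lfp} for the forward direction and, for the backward direction, building an execution tree together with \emph{locked} interface sequences so that when the witnesses for $S_{n_1}$ and $S_{n_2}$ are shuffled back together the already-performed contractions cannot be split apart again. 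Your invariant hides precisely this bookkeeping in the clause ``internal context switches match exactly the edges already absorbed,'' which is cleaner to state but requires the same care to verify; the paper's locked-sequence device is one concrete mechanism for discharging what you identify as the ``main obstacle.''
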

% -------------------------------------------------------------------------------------
% -------------------------------------------------------------------------------------
% -------------------------------------------------------------------------------------
% -------------------------------------------------------------------------------------
\vspace{-0.6cm}
\subparagraph*{Round Robin.}
% -------------------------------------------------------------------------------------
% -------------------------------------------------------------------------------------
% -------------------------------------------------------------------------------------
% -------------------------------------------------------------------------------------
We consider an application of $\bcslfix$.
We define $\bcslrr$ to be the round-robin version of $\bcslsd$.
Again, each thread is given a budget of $\cs$ contexts, but now we schedule the threads in a fixed order: First thread $A_1$ has the processor, then $A_2$ is scheduled, followed by $A_3$ up to $A_t$.
To start a new round, the processor is given back to $A_1$.
The whole computation ends in $A_t$.
% -------------------------------------------------------------------------------------
% -------------------------------------------------------------------------------------
% -------------------------------------------------------------------------------------
% -------------------------------------------------------------------------------------
\begin{proposition}
	$\bcslrr$ can be solved in time $\bigOS{m^{4\cs}}$.
\end{proposition}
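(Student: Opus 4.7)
The plan is to cast $\bcslrr$ as an instance of $\bcslfix$ with a specific scheduling graph and contraction process, and then invoke Theorem~\ref{Theorem:bcslfix} with a slightly tightened runtime analysis. The round-robin schedule induces a unique scheduling graph $G_{\mathit{rr}}$ on vertices $[1..t]$ whose edges form the cycle $1 \to 2 \to \cdots \to t \to 1$: every forward edge $i \to i{+}1$ has multiplicity $\cs$, while the closing edge $t \to 1$ has multiplicity $\cs - 1$ (the last round does not wrap around). Vertex $1$ is initial, $t$ is final, and every vertex has degree exactly $\cs$. Any round-robin computation of $S$ that uses a budget of $\cs$ contexts per thread thus induces $G_{\mathit{rr}}$.

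I would then exhibit a contraction process $\contractionprocess$ of degree $\cs$: starting from $1$, iteratively absorb the next vertex along the cycle into the growing super-node. At every stage the super-node has exactly one incoming edge from $t$ (multiplicity $\cs-1$) and one outgoing edge to the next unabsorbed vertex (multiplicity $\cs$); all edges internal to the merged set become self-loops that the contraction formula drops, since $E'(n,n) = E(n,n_1) + E(n,n_2) = 0$ because the fresh node $n$ does not belong to the old vertex set. Every intermediate graph therefore still has maximum degree $\cs$, giving $\sdimof{G_{\mathit{rr}}} \leq \cs$. Running the algorithm of Theorem~\ref{Theorem:bcslfix} on the triple $(S, G_{\mathit{rr}}, \contractionprocess)$ now decides $\bcslrr$ in time $\bigOS{(2m)^{4\cs}}$.

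To shave off the factor $2^{4\cs}$ and reach $\bigOS{m^{4\cs}}$, I would sharpen the complexity estimate using the uniform structure of round robin. Each super-node $v$ arising along $\contractionprocess$ has degree exactly $\degreeof{v} \leq \cs$, so its working set $S_v$ consists of interface sequences of a single fixed length rather than of length at most $\sdim$; consequently $|S_v| \leq m^{2\cs}$, eliminating the summation over lengths $0, 1, \dots, \sdim$ that is responsible for the $2^{4\sdim}$ factor in the general bound. Each of the $t-1$ directed products $S_{n_1} \omerge{i}{k} S_{n_2}$ then costs $m^{4\cs}$ up to polynomial factors. The main obstacle is the bookkeeping for these products: for each contraction step one must pass to $\omerge{i}{k}$ the exact multiplicities $(E(n_1,n_2), E(n_2,n_1))$ of the cycle edges currently connecting the merged nodes (namely $(\cs,0)$ along the way, or $(\cs,\cs{-}1)$ when $t$ is finally absorbed), and verify that the resulting sequences remain of the prescribed length $\cs$. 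Once this invariant is maintained, correctness and the sharpened runtime follow from Theorem~\ref{Theorem:bcslfix}.
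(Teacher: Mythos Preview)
Your overall approach matches the paper's: you correctly identify the round-robin scheduling graph (the directed cycle with forward edges of weight $\cs$ and the back edge $t\to 1$ of weight $\cs-1$), the same linear contraction process absorbing $1,2,\ldots,t$ in order, and the resulting degree bound $\sdim=\cs$. Reducing to $\bcslfix$ and invoking Theorem~\ref{Theorem:bcslfix} to get $\bigOS{(2m)^{4\cs}}$ is exactly what the paper does.

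The gap is in your sharpening step. You attribute the factor $2^{4\sdim}$ in the general $\bcslfix$ bound to a ``summation over lengths $0,1,\ldots,\sdim$'' and claim that fixing the length to exactly $\cs$ removes it. That is not where the factor comes from. In the complexity analysis of $\bcslfix$ (Appendix~\ref{Appendix:BCSLFIX}), the number of pairs $(\sigma,\tau)$ is already bounded by $m^{4\sdim+4}$; the extra $16^{\sdim}=2^{4\sdim}$ arises from \emph{computing each individual directed product} $\sigma\omerge{i}{k}\tau$: one enumerates up to $4^{\sdim}$ shuffles in $\sigma\SShuffle\tau$ and, for each, up to $4^{\sdim}$ choices of contraction positions. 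Bounding $|S_v|$ by $m^{2\cs}$ does nothing to this per-pair cost, so your argument as written does not justify $\bigOS{m^{4\cs}}$.

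The paper's actual reason is structural: for all intermediate contractions one has $(i,k)=(\cs,0)$, and for two sequences of length exactly $\cs$ the product $\sigma\omerge{\cs}{0}\tau$ forces the unique interleaving $\sigma_1\tau_1\sigma_2\tau_2\cdots\sigma_{\cs}\tau_{\cs}$ (every pair of $\sigma$ must be immediately followed by a pair of $\tau$ to create $\cs$ out-contractions). Hence each directed product yields at most one output and is computable in linear time, so each step costs $\bigOS{m^{4\cs}}$. The final contraction with $(i,k)=(\cs,\cs-1)$ is similarly rigid. You gesture at the right edge multiplicities in your last paragraph but never use them to bound the per-pair product cost; adding this observation would close the gap.
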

% -------------------------------------------------------------------------------------
% -------------------------------------------------------------------------------------
% -------------------------------------------------------------------------------------
% -------------------------------------------------------------------------------------
The problem $\bcslrr$ can be understood as fixing the scheduling graph to a cycle where every node $i$ is connected to $i+1$ by an edge of weight $\cs$ for $i \in [1..t-1]$ and the nodes $t$ and $1$ are connected by an edge of weight $\cs - 1$.
We can easily describe a contraction process: contract the vertices $1$ and $2$, then the result with vertex $3$ and up to $t$.
We refer to this as $\contractionprocess$.
Then we have $\degreeof{\contractionprocess} = \cs$.
Hence, we have constructed an instance of $\bcslfix$.

An application of the algorithm for $\bcslfix$ takes time at most $\bigOS{m^{4\cs}}$ in this case:
Let $G_{j+1} = G_{j}\contract{n_1,n_2}{n}$ be a contraction in $\contractionprocess$ with $j < t$.
Note that \mbox{$S_{n_1}, S_{n_2} \subseteq (Q \times Q)^{\cs}$}.
We have $E(n_1, n_2) = \cs$ and $E(n_2, n_1) = 0$.
Hence, the corresponding set $S_n$ is given by $(S_{n_1} \omerge{\cs}{0} S_{n_2}) \subseteq (Q \times Q)^{\cs}$.
Note that $\sigma \omerge{\cs}{0} \tau$ can be computed in linear time.
Similarly, for the last contraction $G_{t} = G_{t-1}\contract{n'_1,n'_2}{n'}$, where we have $S_{n'} = (S_{n_1'} \omerge{\cs}{\cs-1} S_{n_2'})$.
%\input{content/RRUpper}
%!TEX root = ../main.tex
\vspace{-0.3cm}
\subparagraph*{Lower Bound for Round Robin.}
\label{Subsection:RRLower}
We prove the optimality of the algorithm for $\bcslrr$ by giving a reduction from $\kkClique$.
This variant of the classical clique problem asks for a clique of size $k$ in a graph whose vertices are elements of a $k \times k$ matrix.
Furthermore, the clique must contain exactly one vertex from each of the $k$ rows.
The problem was introduced as a part of the framework in \cite{Lokshtanov2011}.
It was shown that the brute-force approach is optimal:
$\kkClique$ cannot be solved in $2^{o(k \log k)}$ time, unless $\ETH$ fails.
We transport this to $\bcslrr$.
% -------------------------------------------------------------------------------------
% -------------------------------------------------------------------------------------
% -------------------------------------------------------------------------------------
% -------------------------------------------------------------------------------------
\begin{lemma}\label{Theorem:RRLB}
	Assuming $\ETH$, $\bcslrr$ cannot be solved in $2^{o(\cs \log ( m ) )}$ time.
\end{lemma}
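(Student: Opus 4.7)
The plan is to give a parameterized reduction from $\kkClique$ to $\bcslrr$ such that the resulting instance has $\cs \in \bigO{k}$ context switches per thread and memory of size $m \in \bigO{\text{poly}(k)}$. Since $\kkClique$ admits no $2^{\smallo{k \log k}}$-time algorithm under $\ETH$ by \cite{Lokshtanov2011}, any algorithm solving $\bcslrr$ in $2^{\smallo{\cs \log(m)}}$ time would yield a $2^{\smallo{k \log k}}$-time algorithm for $\kkClique$, contradicting $\ETH$.

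For the construction, I take one thread per row of the $k \times k$ matrix, so $t = k$. The idea is to use the round-robin scheduling to let each thread first commit to a vertex (its ``guess'' for the clique representative of its row) and then, in later rounds, verify pairwise adjacency with every other thread. Concretely, the shared memory will encode a small constant amount of information: (i) the ``current speaker'' identity that the other threads must listen for, and (ii) the vertex that speaker just wrote. Thus $m$ is polynomial in $k$ (it needs $\bigO{k}$ states to store a vertex from a row plus $\bigO{k}$ bookkeeping states). In each of roughly $k$ rounds of round-robin scheduling, a designated thread $i$ publishes its committed vertex $v_i$ through the memory, while the other threads read it and check in their local automaton that the candidate is consistent with an edge to their own previously committed vertex $v_j$ (the thread automaton $A_j$ hardwires, for each potential $v_j$, the set of neighbours in row $i$). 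Any thread whose check fails cannot reach its accepting state, so $L(S)$ is non-empty iff the $k$ committed vertices form a clique with one vertex per row.

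The delicate point is making the verification fit into round-robin with only $\bigO{k}$ contexts per thread. The key observation is that in $k$ rounds of round robin every ordered pair $(i,j)$ with $j > i$ is already realised as an ``$i$ writes, then $j$ reads in the same round'' event, and every pair with $j < i$ is realised one round later. So a constant number of rounds per pairwise broadcast suffices: one round for each thread to fix and broadcast its guess, plus a constant number of additional rounds for the listeners to complete their checks and move to accept. This gives $\cs = \bigO{k}$ per thread and hence $\cs \cdot \log(m) = \bigO{k \log k}$, which is the quantity that appears in the exponent of the assumed $\bcslrr$ algorithm.

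The main obstacle is the bookkeeping that ensures a thread which just wrote $v_i$ cannot later ``change its mind'' and that a reader faithfully commits to one vertex and checks it against every broadcast. Because the memory automaton is shared and the threads only communicate through it, I would encode the committed vertex of the currently active thread into the memory state for the whole round, and let the thread automata enforce consistency locally: $A_i$ first nondeterministically picks $v_i$, then on each of its contexts either broadcasts $v_i$ or silently reads the broadcasting thread's value and advances only if the edge is present in the graph. The round-robin schedule, together with the fact that each $A_i$ knows its own position in the cyclic order, makes it straightforward to align broadcasts and checks without blowing up either $\cs$ or $m$. This yields the claimed $2^{\smallo{\cs \log(m)}}$ lower bound under $\ETH$.
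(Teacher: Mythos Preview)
Your proposal is correct and follows essentially the same approach as the paper's proof: a reduction from $\kkClique$ using one thread per row, a guess phase where each thread commits to a vertex, and then $\bigO{k}$ verification rounds in which the currently designated thread broadcasts its vertex through the memory (of size $\text{poly}(k)$) while the later threads check adjacency against their own committed vertex. The paper makes this concrete with $\cs = k$, $m \leq 2k^3$, and only checks the ``forward'' pairs $(i,i')$ with $i' > i$ in round $i$ (using trivial $\#$-contexts for the earlier threads), but the parameter accounting and the ETH transfer are exactly as you describe.
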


\section{Discussion}
\label{Section:Discussion}

Our main motivation is to find bugs in shared memory concurrent programs.
In this setting, we can restrict our analysis to under-approximations:
We consider behaviors that are bounded in the number of context-switches, memory size or scheduling.
While this is enough to find bugs, there are cases where we need to check whether our program is actually correct.
We shortly outline circumstances under which we obtain an $\FPT$ upper bound, as well as a matching lower bound for the problem.

The reachability problem on a shared memory system in full generality is $\PSPACE$-complete. 
However, in real world scenarios, it is often the case that only a few (a fixed number of) threads execute in parallel with unbounded interaction. 
Thus, a first attempt is to parameterize the system by the number of threads $t$. 
But this yields a hardness result.
Indeed, the problem with $t$ as a parameter is hard for any level of the $\W$-hierarchy.

We suggest a parameterization by the number of threads $t$ and by $a$, the maximal size of the thread automata $A_{\threadid}$.
We obtain an $\FPT$-algorithm by constructing a product automaton.
The complexity is $\bigOS{a^t}$.
However, there is not much hope for improvement:
By a reduction from $\kkClique$, we can show that the algorithm is indeed optimal.

\bibliographystyle{plain}
\bibliography{content/cite}

\newpage
\appendix

\section{Proofs for Section~\ref{Section:Prelim}}
\label{Appendix:Prelim}

To prove Proposition~\ref{Theorem:BCSIntractable}, we begin by showing that
$\BCS(\cs,t)$ is a member of $\W[1]$.
This is achieved by a parameterized reduction from $\BCS(\cs,t)$ to the
$\W[1]$-complete problem \STMA~\cite{Cai1997}.
After that, we construct a parameterized reduction from $\BCS(\cs)$ to
$\BCS(\cs,t)$ and get that $\BCS(\cs)$ is actually in $\W[1]$.
For the hardness, we reduce from a parameterized \emph{intersection
	non-emptiness problem}, which is $\W[1]$-hard.
% -------------------------------------------------------------------------------
% -------------------------------------------------------------------------------
% -------------------------------------------------------------------------------
% -------------------------------------------------------------------------------
\begin{paraproblem}
	\problemtitle{Short Turing Machine Acceptance}
	\problemshort{(\STMA)}
	\probleminput{A nondeterministic Turing machine $\TM$,
		an input word $w$, and an integer $k \in \Nat$.}
	\problemparameter{$k$.}
	\problemquestion{Is there a computation of $\TM$ that accepts $w$
		in at most $k$ steps?}
\end{paraproblem}
% -------------------------------------------------------------------------------
% -------------------------------------------------------------------------------
% -------------------------------------------------------------------------------
% -------------------------------------------------------------------------------
\begin{lemma}
	We have $\BCS(\cs,t) \fptred \STMA$.
\end{lemma}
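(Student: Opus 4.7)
The plan is to construct a nondeterministic Turing machine $\TM$ together with a fixed input word $w$ (for instance, $w=\varepsilon$) and a step bound $k$, so that accepting runs of $\TM$ of length at most $k$ correspond one-to-one with succinct encodings of accepting $\BCS$ computations of $S$ using at most $\cs$ context switches. The bound $k$ will depend only on the parameters $\cs$ and $t$, giving a parameterized reduction.

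I will store, on the tape of $\TM$, the current local state of each of the $t$ threads (one cell per thread), and I will keep the current memory state in the finite control. A single simulated context is executed as a short block of TM transitions: $\TM$ nondeterministically picks a thread index $j \in [1..t]$ different from that of the previous context, walks its head to cell $j$, reads the current local state $p_j$, guesses new states $p_j'$ and $q'$, and moves to a successor state only if the synchronized product of $A_j$ with $M$ admits a computation from $(p_j,q)$ to $(p_j',q')$. After at most $\cs+1$ such blocks, $\TM$ performs a single left-to-right sweep of the tape to check that every thread has reached its final state (threads that never ran need $\varepsilon \in L(A_i)$, which I will enforce by a small preprocessing of each $A_i$) and that the internal memory state equals $q_f$, at which point $\TM$ moves to its accepting state.

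The subtle point, and the main obstacle I expect, is that the per-context validity check is a reachability query in a product automaton which $\TM$ cannot afford to recompute within its tiny step budget. I plan to overcome this by performing all such reachability computations offline, inside the reduction itself: the resulting table of valid context jumps $(j,p_j,q,p_j',q')$ is hard-coded into $\TM$'s transition function, so the check costs a single TM transition. This keeps the construction polynomial in $\abs{S}$, since the table has size $O(t \cdot m^2 \cdot a^2)$ with $a = \max_i \abs{A_i}$, and the state set of $\TM$ is only $O(m \cdot t)$ (memory state, phase, and scratch).

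Walking the head between tape cells costs $O(t)$ TM steps, so each context block and the final acceptance sweep cost $O(t)$ steps, and the whole computation fits within $k := c \cdot (\cs+1) \cdot t$ steps for a suitable constant $c$; this depends only on the parameter $(\cs,t)$, as required. Correctness will then be routine in both directions: any accepting BCS run with at most $\cs$ switches decomposes into $\leq \cs+1$ contexts that $\TM$ guesses and validates block-by-block, and conversely the validated guesses along an accepting run of $\TM$ assemble into a word of $L(S) \cap \Context(\Sigma,t,\cs)$ by concatenating, for each context, a witness word of the corresponding non-emptiness check.
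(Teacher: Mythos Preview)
Your proposal is correct and follows essentially the same construction as the paper: one tape cell per thread holding its current local state, the memory state (plus a context counter) in the finite control, the synchronized-reachability relation $\sync{q}{q'}{i}{p}{p'}\neq\emptyset$ precomputed offline and hard-coded into $\delta_\TM$ so that each context costs $O(t)$ TM steps, and a final linear sweep to check acceptance, giving a step bound $k=O((\cs+1)\cdot t)$. The paper's only differences are cosmetic: it takes $w=\fS_1\cdots\fS_t$ as input so that the markers double as ``thread never ran'' flags (in place of your $\varepsilon$-preprocessing of the $A_i$), and it makes the context counter $\{0,\dots,\cs+2\}$ explicit in the control state---something you also need but left out of your $O(m\cdot t)$ estimate for $|Q_\TM|$, though this does not affect the polynomial bound on the reduction.
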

\begin{proof}
	Let $(S=\SMS{\Sigma}{M}{A}{i}{t}, \cs, t)$ be an instance of
	$\BCS(\cs, t)$ with shared memory $M = (Q, \Sigma, \delta, q_0, q_f)$
	and threads $A_i = (P_i, \Sigma \times \set{A_i}, \delta_i, p^0_i, p^f_i)$.
	We construct a nondeterministic Turing machine $\TM$ and a word $w$ so
	that $(\TM, w, (\cs+1)\cdot t + 2t)$ is an instance of $\STMA$ with the
	property: $L(S) \cap \Context(\Sigma,t,\cs) \neq \emptyset$ if and only if there is a computation
	of $\TM$ accepting $w$ in at most $(\cs+1) \cdot t + 2t$ steps.
	
	The idea behind $\TM$ is that the $i$-th cell of $\TM$'s tape stores the
	current state of $A_i$.
	The states of $M$ and a counter for the turns taken are represented by
	the control states of $\TM$.
	Moreover, the transition function of $\TM$ only allows steps which can
	be carried out simultaneously on $M$ and on one of the $A_i$.
	We want $\TM$ to work in three different \emph{modes}: a \emph{switch
		mode} to perform context switches, a \emph{work mode} to simulate runs
	of the $A_i$ and $M$ and an \emph{accept mode} which checks if $M$ and
	those $A_i$ that moved are in a final state.
	
	Formally, we set $\TM = (Q_\TM, \Gamma_\TM, \delta_\TM, \qinit, \qacc,
	\qrej)$, where:
	\begin{itemize}
		\item $\Gamma_\TM = \dot{\bigcup} P_i \cup \set{\fS_1, \dots,
			\fS_t, \fX, \$}$, the $\fS_i$ and $\fX$ are new letters and
		$\$$ is the \emph{left-end marker} of the tape,
		\item $Q_\TM = \set{\switch, \work} \times Q \times \set{0,
			\dots, \cs+2} \cup \set{\accept} \times \set{0, \dots, t}
		\cup \set{q_{acc}, q_{rej}}$ and
		\item $q_{init} = (\switch, q_0, 0)$.
	\end{itemize}
	
	\noindent Moreover, we set $w = \fS_1 \dots \fS_t$ and start $\TM$ on
	this word.
	Now we will explain the transition function $\delta_\TM$.
	Whenever $\TM$ is in \emph{switch}-mode, a new automaton $A_i$ is
	chosen to continue (or to start) the run.
	We allow walking left and right while remembering the current state
	of $M$ and the number of turns taken but without changing the tape
	content.
	So, for $q \in Q$ and $j \leq \cs$, we get:
	$$
	((\switch, q, j), p) \rightarrow ((\switch, q, j), p, D),
	$$
	where $D \in \set{L,R}$ and $p \in \Gamma_\TM \setminus \set{\fX}$. \\
	It is also possible to change the mode of $\TM$ to \emph{work}.
	In this case, we continue the run on the chosen automaton $A_i$.
	For $j \leq \cs$, we add:
	$$
	((\switch, q, j), p) \rightarrow ((\work, q, j), p, D).
	$$
	Once $\TM$ is in \emph{work mode}, there are two possibilities.
	Either the chosen automata $A_i$ did not move before, then there
	is an $\fS_i$ in the currently visited cell, or it has moved before,
	then the current state of $A_{i}$ is written in the cell.
	In the first case, we need a transition rule that \emph{activates}
	$A_{i}$ and does a first step.
	This step has to be synchronized with $M$. We get
	$$
	((\work, q, j), \fS_i) \rightarrow ((\switch, q', j+1), p', D),
	$$
	for all states $q,q' \in Q$ and $p' \in P_i$ so that $\sync{q}{q'}{i}
	{p^0_i}{p'} \neq \emptyset$.
	In the second case, we continue the run on $A_i$ while synchronizing
	with $M$:
	$$
	((\work, q, j), p) \rightarrow ((\switch, q', j+1), p', D),
	$$
	for all states $q,q' \in Q$ and $p, p' \in P_i$ so that $\sync{q}{q'}
	{i}{p}{p'} \neq \emptyset$. \\
	Note that after changing the mode from \emph{work} to \emph{switch},
	we know that a turn was taken and a context switch happened.
	To track this, $\TM$ increases its counter.
	This counter is not allowed to go beyond $\cs+1$.
	If this happens, $\TM$ will reject:
	$$
	((\switch, q, \cs+2), p) \rightarrow \qrej,
	$$
	for all $q \in Q$ and $p \in \Gamma_{\TM}$. \\
	If $\TM$ arrives in a state of the form $(\switch, q, i)$,
	where $q$ is a final state of $M$ and $i\in\set{1,\ldots,\cs+1}$,
	then it can enter \emph{accept mode}:
	$$
	((\switch, q, \cs+1), p) \rightarrow ((\accept, 0), p, D).
	$$
	Once $\TM$ is in \emph{accept mode}, it moves the head to the left
	end of the tape via additional moving transitions.
	Since we assume that the left end is marked by $\$, \TM$ can detect
	whether it reached the end.
	We get:
	$$
	((\accept, 0), p) \rightarrow ((\accept, 0), p, L),
	$$
	for $p \in \Gamma_M \setminus \set{\$}$ and
	$$
	((\accept, 0), \$)  \rightarrow ((\accept, 0), \$, R).
	$$
	After moving to the left end of the tape, $\TM$ will move right and
	if the current state of $A_i$, written in the $i$-th cell, is a final
	state, it gets replaced by $\fX$ and the counter, that counts the
	number of accepting automata, increases by $1$.
	If $\TM$ sees an $\fS_i$ in the $i$-th cell, it knows that $A_i$
	was never activated.
	This is also counted as accepting.
	We get:
	$$
	((\accept, j), p)  \rightarrow ((\accept, j+1), \fX, D),
	$$
	for $p$ a final state of one of the $A_i$ and
	$$
	((\accept, j), \fS_i) \rightarrow ((\accept, j+1), \fX, D).
	$$
	If $\TM$ reads $\fX$, it only moves left or right without changing
	the tape content or counter:
	$$
	((\accept, j), \fX) \rightarrow ((\accept, j), \fX, D).
	$$
	When $\TM$ detects $t$ accepting automata, then it will accept:
	$$
	((\accept, t), \fX) \rightarrow \qacc
	$$
	To simulate at most $\cs+1$ turns of the $A_i$, $\TM$ needs
	at most $(\cs+1) \cdot t$ steps.
	Once $\TM$ enters \emph{accept mode}, it needs at most $2t$ steps
	to verify that each $A_{i}$ is in a final state or did not move at
	all.
	Hence, we are looking for computations of length at most $(\cs+1)
	\cdot t + 2t$.
	It is easy to observe that the reduction works correctly and can be
	constructed in polynomial time.
\end{proof}
\newpage
% -------------------------------------------------------------------------------
% -------------------------------------------------------------------------------
% -------------------------------------------------------------------------------
% -------------------------------------------------------------------------------
\begin{lemma}
	We have $\BCS(\cs) \fptred \BCS(\cs,t)$.
\end{lemma}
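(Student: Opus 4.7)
The plan is to give a parameterized reduction producing an instance with $t' = \cs + 1$ threads and $\cs' = 2\cs + 1$. The starting observation is that any witness of $\BCS$ with $\cs$ context switches involves at most $\cs + 1$ distinct threads; hence I can simulate the problem with $\cs + 1$ \emph{super-threads}, each non-deterministically committing to simulating one of the original $A_i$'s.

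First, I would construct super-threads $A'_1, \ldots, A'_{\cs+1}$ whose languages are of the form $\bigcup_{i \in [1..t]} c_i \cdot L(A_i) \cup \{c_\bot\}$, where the $c_i$ and $c_\bot$ are fresh commit symbols added to $\Sigma$. On its first action, super-thread $A'_j$ commits either to simulating some $A_i$ via $c_i$ (and then runs a computation of $L(A_i)$ relabelled to super-thread $j$) or to remaining inactive via the null commit $c_\bot$.

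Next, I would extend $M$ to a new memory $M'$ that enforces three properties: (i) all commit symbols appear before any regular symbol of $\Sigma$; (ii) exactly $\cs + 1$ commits occur in total (one per super-thread); and (iii) the committed indices are pairwise distinct. The crucial step is ensuring (iii) without blowing up the state space to $2^t$. My plan is to require real commits to appear in strictly increasing order of $i$, with null commits allowed only after the last real commit. This allows $M'$ to track only the last committed index together with a commit counter, yielding a polynomial state space of size $O(|Q_M| \cdot t \cdot \cs)$. Once the counter reaches $\cs+1$, $M'$ switches to the work phase and behaves as $M$ on the remaining regular symbols.

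Finally, I would verify correctness by matching witnesses. In the forward direction, a witness in $S$ using threads $i_1 < \ldots < i_m$, with $m \leq \cs + 1$, lifts to a new witness in which super-thread $j$ simulates $A_{i_j}$ for $j \leq m$ and commits null otherwise: the commit phase contributes $\cs + 1$ contexts and hence $\cs$ switches, the work phase mirrors the original's $\cs$ switches, and the commit-to-work boundary contributes at most one extra switch, giving $\cs' = 2\cs + 1$ in total. In the backward direction, the memory structure forces exactly $\cs + 1$ commit contexts (distinct super-threads, pairwise distinct real indices), leaving at most $\cs + 1$ work contexts whose switches directly yield a valid original witness with at most $\cs$ switches; injectivity of the commit-to-thread assignment is essential to avoid false positives where two super-threads simulate the same $A_i$.

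The hard part is the uniqueness enforcement in $M'$. A naive bookkeeping of the set of committed indices would cost $2^t$ states and hence violate the size bound for a parameterized reduction. The ordering trick (strictly increasing real commits, then arbitrary nulls) sidesteps this while only doubling the parameter $\cs$.
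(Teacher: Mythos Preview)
Your approach is essentially identical to the paper's: both introduce $\cs+1$ super-threads that each non-deterministically commit to simulating one original $A_i$ (or opt out via a null symbol), and both push the distinctness constraint into the memory automaton by requiring the real commits to appear in strictly increasing index order, thereby avoiding the $2^t$ blow-up. The differences are cosmetic (you place null commits only after the last real one, the paper allows them interspersed; you take $\cs'=2\cs+1$, the paper takes $\cs'=2\cs$).

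There is, however, a genuine off-by-one gap in your backward direction. You argue that the $\cs+1$ commit contexts leave at most $\cs+1$ work contexts, hence at most $\cs$ switches inside the work phase. This overlooks the case where the last commit and the first work action lie in the \emph{same} context: if the super-thread that commits last emits a real $c_\ell$ and then immediately performs its first $A_\ell$-action, no boundary switch is consumed. Then only $\cs$ of the $2\cs+1$ switches are spent on the commit prefix, and the work suffix may carry $\cs+1$ switches, which projects to an original computation with $\cs+1$ context switches rather than $\cs$. Concretely, take $\cs=1$, $L(A_1)=\{b\}$, $L(A_2)=\{aa\}$, $L(M)=\{aba\}$: here $\BCS(S,1)$ is NO (the unique candidate $a^{(2)}b^{(1)}a^{(2)}$ needs two switches), yet the word $c_1^{[B_1]}\,c_2^{[B_2]}\,a^{[B_2]}\,b^{[B_1]}\,a^{[B_2]}$ lies in $L(S')$ with exactly $3=2\cs+1$ switches. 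A clean repair is to \emph{force} the boundary switch---e.g., add one extra super-thread and have $M'$ insist that the final commit be $c_\bot$---and take $\cs'=2\cs+2$; then the commit prefix always uses $\cs+1$ switches plus one mandatory boundary switch, leaving at most $\cs$ for the work phase. (For the record, the paper's choice $\cs'=2\cs$ suffers the mirror defect in the forward direction on the instance $L(A_1)=\{a\}$, $L(A_2)=\{b\}$, $L(M)=\{ab\}$; so both proofs should be read modulo a small additive constant in $\cs'$, which of course does not affect the fpt-reduction.)
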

\begin{proof}
	Let $(S=\SMS{\Sigma}{M}{A}{i}{t}, \cs)$ be an instance of $\BCS(\cs)$ with
	$M = (Q, \Sigma, \delta, q_0, q_f)$ and $A_i = (P_i, \Sigma \times \set{A_i},
	\delta_i, p^0_i, p^f_i)$.
	To construct an instance of $\BCS(\cs,t)$, the rough idea is that in at most
	$\cs$ context switches, we can use at most $\cs+1$ different automata.
	Hence, we introduce $\cs+1$ new finite automata, where each chooses to
	simulate one of the $A_j$.
	
	We set $\Gamma = \Sigma \cup \set{\#_1, \dots, \#_t} \cup \set{\square}$
	and define automaton $B_i = (P'_i,\Gamma \times \set{B'_i}, \delta'_i,
	\pchoice{i}, F'_i)$ for $i \in [1..\cs+1]$, where:
	\begin{itemize}
		\item $P'_i = \dot{\bigcup}_{j=1}^{t} P_j \cup \set{ \pchoice{i},
			\plock{i} }$ and $\pchoice{i}, \plock{i}$ are new states, and
		\item $F'_i = \dot{\bigcup}_{j = 1}^{t} \set{p^f_j} \cup
		\set{\plock{i}}$.
	\end{itemize}
	
	\noindent The transition relation $\delta'_i$ contains all transition
	rules of the $A_j$: if $p \Move{a} p'$ is an edge in $A_j$, we get an
	edge $p \Move{a} p'$ in $B_i$.
	Moreover, we add rules $\pchoice{i} \xrightarrow{\#_j} p^0_j$ for
	$j \in[1..t]$, and we add \mbox{$\pchoice{i} \xrightarrow{\square} \plock{i}$.}
	An illustration of $B_i$ is given in Figure~\ref{Figure:BSPkToBSPkn_Bi}.
	% -------------------------------------------------------------------------------
	% -------------------------------------------------------------------------------
	% -------------------------------------------------------------------------------
	% -------------------------------------------------------------------------------
	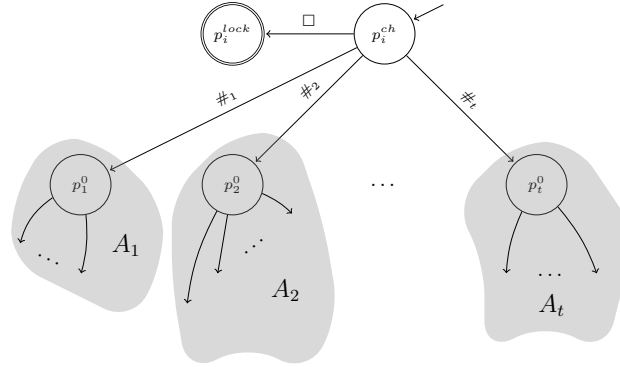
\begin{figure}[h]
	\centering

	\scalebox{0.8}{
	\begin{tikzpicture}[shorten >= 1pt, node distance = 2.5cm]

		% Pchoice and Plock
		\node[state, minimum height = 1cm] (pch) {\scriptsize $\pchoice{i}$};
		\draw[<-, shorten <= 1pt] (pch) -- ++ (1,0.5);
		\node[state, accepting, minimum height = 1cm, left of = pch] (plock) {\scriptsize $\plock{i}$};

		\path[->] (pch) edge node[pos = 0.5, above] {\scriptsize $\square$} (plock);

		% Initial states of automata A_j
		\node[state, minimum height = 1cm, left of = plock, below of = plock] (p1init) {\scriptsize $p^0_1$};
		\node[state, minimum height = 1cm, right of = p1init] (p2init) {\scriptsize $p^0_2$};
		\node[right of = p2init] (dots) {$\dots$};
		\node[state, minimum height = 1cm, right of = dots] (pninit) {\scriptsize $p^0_t$};

		\path[->] (pch) edge node[pos = 0.5, above, sloped] {\scriptsize $\#_{1}$} (p1init);
		\path[->] (pch) edge node[pos = 0.4, above, sloped] {\scriptsize $\#_{2}$} (p2init);
		\path[->] (pch) edge node[pos = 0.5, above, sloped] {\scriptsize $\#_{t}$} (pninit);

		% Structure of automaton A1
		\coordinate[yshift = 10pt] (A11) at (p1init.north east);
		\path[-, fill = gray, opacity = 0.3, rounded corners = 15pt]
			(A11) --
			++ (1,-1) --
			++ (0,-1) --
			++ (-0.5,-1) --
			++ (-2,1) --
			++ (0,1) --
			++ (1,1.5,) --
			cycle;

		\coordinate[xshift = -1cm, yshift = -1cm] (A1post1) at (p1init);
		\coordinate[yshift = -1.5cm] (A1post2) at (p1init);
		\path[->] (p1init) edge[bend right = 20] (A1post1);
		\path[->] (p1init) edge[bend left = 10] (A1post2);
		\path (A1post1) -- node[pos = 0.5, sloped] (A1dots) {$\dots$} (A1post2);

		\node[xshift = 0.75cm, yshift = 0.5cm] at (A1post2) {\Large $A_{1}$};

		% Structure of automaton A2
		\coordinate[yshift = 10pt] (A21) at (p2init.north);
		\path[-, fill = gray, opacity = 0.3, rounded corners = 15pt]
			(A21) --
			++ (-1,-1) --
			++ (0,-1.5) --
			++ (0.25,-1.5) --
			++ (1,0.5) --
			++ (1,-0.5) --
			++ (0.5,1) --
			++ (-0.25,1.5) --
			++ (-0.5, 1.5) --
			cycle;

		\coordinate[xshift = -0.75cm, yshift = -2cm] (A2post1) at (p2init);
		\coordinate[xshift = -0.25cm, yshift = -1.5cm] (A2post2) at (p2init);
		\coordinate[xshift = 1cm, yshift = -0.5cm] (A2post3) at (p2init);
		\path[->] (p2init) edge[bend right = 10] (A2post1);
		\path[->] (p2init) edge (A2post2);
		\path[->] (p2init) edge[bend left = 10] (A2post3);
		\path (A2post2) -- node[pos = 0.5, sloped] (A2dots) {$\dots$} (A2post3);

		\node[xshift = 0.5cm, yshift = -0.75cm] at (A2dots) {\Large $A_{2}$};

		% Structure of automaton An
		\coordinate[yshift = 10pt] (An1) at (pninit.north west);
		\path[-, fill = gray, opacity = 0.3, rounded corners = 15pt]
			(An1) --
			++ (-1,-1) --
			++ (0.5,-1) --
			++ (0,-1.5) --
			++ (1,0.5) --
			++ (1,-0.5) --
			++ (0.5,2) --
			++ (-1,1.5) --
			cycle;

		\coordinate[xshift = -0.5cm, yshift = -1.5cm] (Anpost1) at (pninit);
		\coordinate[xshift = 1cm, yshift = -1.5cm] (Anpost2) at (pninit);
		\path[->] (pninit) edge[bend right = 10] (Anpost1);
		\path[->] (pninit) edge[bend left = 10] (Anpost2);
		\path (Anpost1) -- node[pos = 0.5, sloped] (Andots) {$\dots$} (Anpost2);

		\node[yshift = -0.5cm] at (Andots) {\Large $A_{t}$};

	\end{tikzpicture}
	}

	\caption{Automaton $B_i$ can either choose to simulate one of the $A_j$ or not to simulate any of the $A_j$. To this end, it keeps a copy of all $A_j$ and a deadlock state that can be accessed via writing $\square$.}
	\label{Figure:BSPkToBSPkn_Bi}
\end{figure}

	% -------------------------------------------------------------------------------
	% -------------------------------------------------------------------------------
	% -------------------------------------------------------------------------------
	% -------------------------------------------------------------------------------
	Now we define the finite automaton $N$ to be the tuple $(Q', \Gamma, \delta',
	\qchoice, q_f)$, with \mbox{$Q' = Q \cup \Set{q_{(\#_j, i)}} {j \in [1..t] \text{ and } i \in[1..cs]} \cup \set{\qchoice}$.}
	The transition relation $\delta'$ is the union of the transition relation of
	$M$ and the rules explained below.
	To force each $B_i$ to make a choice, we add transitions $q_{(\#_j, i)}
	\Move{\#_l} q_{(\#_l, i+1)}$ for $j \in[1..t]$, $j < l \leq t$ and
	$i \in [1..\cs-1]$.
	Note that we assume that the choice of the $A_l$ is done in increasing order.
	This prevents the $B_i$ from choosing the same $A_l$.
	For the initial choice, we add transitions $\qchoice \Move{\#_l} q_{(\#_l, 1)}$
	for all $l \in [1..t]$.
	For the final choice, we add the rules $q_{(\#_j,\cs)} \Move{\#_l} q_0$  for
	$j \in [1..t]$ and $j < l \leq t$.
	To give the $B_i$ the opportunity not to simulate any of the $A_l$, we add
	the transitions $q_{(\#_j,i)} \Move{\square} q_{(\#_j,i+1)}$ for $j \in [1..t]$
	and $i \in [1..\cs-1]$ and $q_{(\#_j,\cs)} \Move{\square} q_0$ for all
	$j \in [1..t]$.
	An image of automaton $N$ can be found in Figure~\ref{Figure:BSPkToBSPkn_N}.
	% -------------------------------------------------------------------------------
	% -------------------------------------------------------------------------------
	% -------------------------------------------------------------------------------
	% -------------------------------------------------------------------------------
	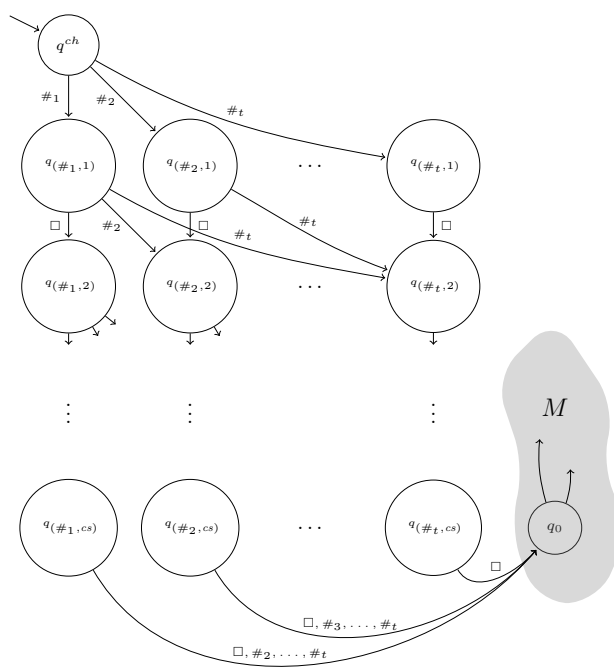
\begin{figure}[h]
	\centering

	\scalebox{0.8}{
	\begin{tikzpicture}[shorten >= 1pt, node distance = 1.5cm]

	\node[state, minimum height = 1cm] (qch) {\scriptsize $q^{ch}$};
	\draw[<-, shorten <= 1pt] (qch) -- ++ (-1,0.5);

	% Choice states
	\coordinate (start) at (qch);

	\foreach \i in {1,2,3,4}{
		\foreach \j in {1,2,3,4}{
			\ifthenelse{\i = 3}{
				\ifthenelse{\j=3}{
					\node (r\i\j) at ($(start) + (\i*2-2, -\j*2)$) {};
				}{
					\node (r\i\j) at ($(start) + (\i*2-2, -\j*2)$) {$\dots$};
				}
			}{
				\ifthenelse{\j = 3}{
					\node (r\i\j) at ($(start) + (\i*2-2, -\j*2)$) {$\vdots$};
				}{
					\node[state] (r\i\j) at ($(start) + (\i*2-2, -\j*2)$) {
						\ifthenelse{\i = 4}{
							\ifthenelse{\j=4}{
								\tiny$q_{(\#_t,\cs)}$
							}{
								\tiny $q_{(\#_t,\j)}$
							}
						}{
							\ifthenelse{\j=4}{
								\tiny $q_{(\#_{\i},\cs)}$
							}{
								\tiny $q_{(\#_{\i},\j)}$
							}
						}
					};
				}
			}
		}
	}

	% Edges
	\path[->] (qch) edge node[left] {\tiny $\#_{1}$} (r11);
	\path[->] (qch) edge node[left] {\tiny $\#_{2}$} (r21);
	\path[->] (qch) edge[out = 330, in = 170] node[above] {\tiny $\#_{t}$} (r41);

	\path[->] (r11) edge node[left] {\tiny $\square$} (r12);
	\path[->] (r11) edge node[left] {\tiny $\#_{2}$} (r22);
	\path[->] (r11) edge[out = 330, in = 170] node[above] {\tiny $\#_{t}$} (r42);

	\path[->] (r21) edge node[right] {\tiny $\square$} (r22);
	\path[->] (r21) edge[out = 330, in = 160] node[above] {\tiny $\#_{t}$} (r42);

	\path[->] (r41) edge node[right] {\tiny $\square$} (r42);

	\path[->, draw] (r12) -- ++ (0,-1);
	\path[->, draw] (r12) -- ++ (0.5,-0.85);
	\path[->, draw] (r12) -- ++ (0.8,-0.65);

	\path[->, draw] (r22) -- ++ (0,-1);
	\path[->, draw] (r22) -- ++ (0.5,-0.85);

	\path[->, draw] (r42) -- ++ (0,-1);

	% Automaton A
	\node[state, right of = r44, node distance = 2cm] (qinit) {\scriptsize $q_0$};

	\path[fill = gray, opacity = 0.3, rounded corners = 15pt]
		($(qinit) + (-0.75,0)$) --
		++ (0.25,1) --
		++ (-0.5,1.5) --
		++ (1,1) --
		++ (1,-1.5) --
		++ (-0.25,-1) --
		++ (0.5,-1) --
		++ (-0.55,-1.25) --
		++ (-1,0) --
		cycle;

	\path[->] (qinit) edge[bend left = 10] ++ (-0.25,1.5);
	\path[->] (qinit) edge[bend right = 10] ++ (0.25,1);
	\node[yshift = 2cm] at (qinit) {\Large $M$};

	% More edges
	\path[->] (r44) edge[out = 300, in = 230] node[above] {\tiny $\square$} (qinit);
	\path[->] (r24) edge[out = 300, in = 230] node[pos = 0.45, above] {\tiny $\square, \#_{3}, \dots, \#_{t}$} (qinit);
	\path[->] (r14) edge[out = 300, in = 230] node[pos = 0.45, above] {\tiny $\square, \#_{2}, \dots, \#_{t}$} (qinit);

	\end{tikzpicture}
	}

	\caption{First, automaton $N$ checks whether the $B_{i}$ have chosen at most $\cs+1$ different automata $A_{j}$ to simulate, then it starts the simulation of $M$.}
	\label{Figure:BSPkToBSPkn_N}
\end{figure}

	% -------------------------------------------------------------------------------
	% -------------------------------------------------------------------------------
	% -------------------------------------------------------------------------------
	% -------------------------------------------------------------------------------
	For the choice of the $B_i$, we do exactly $\cs$ context switches.
	Then we need at most another $\cs$ context switches for the simulation of
	the chosen $A_l$ and $M$.
	Hence, the tuple given by $(S'=\SMS{\Gamma}{N}{B}{i}{\cs+1}, 2\cs)$ is an
	instance of $\BCS(\cs, t)$ and it is easy to see that we have:
	$$
	L(S') \cap \Context(\Gamma, \cs+1, 2\cs) \neq \emptyset \text{ if and only if } L(S) \cap \Context(\Sigma, t, \cs) \neq \emptyset.
	$$
\end{proof}
We make use of this bounded version of the \emph{intersection non-emptiness problem}.
% -------------------------------------------------------------------------------
% -------------------------------------------------------------------------------
% -------------------------------------------------------------------------------
% -------------------------------------------------------------------------------
\begin{problem}
	\problemtitle{Bounded DFA Intersection Non-emptiness}
	\problemshort{(\BDFAI)}
	\probleminput{Deterministic finite automata $B_1, \dots, B_n$ over an alphabet $\Sigma$ and an integer $m \in \Nat$.}
	\problemquestion{Does there exist a word $w$ of length $m$ so that $w \in \bigcap_{i=1}^{n} L(B_i)$?}
\end{problem}
% -------------------------------------------------------------------------------
% -------------------------------------------------------------------------------
% -------------------------------------------------------------------------------
% -------------------------------------------------------------------------------
We parameterize by the length $m$ of the word that we are seeking for and by $n$, the number of involved automata.
We refer to the problem as $\BDFAI(m,n)$ and it is know that $\BDFAI(m,n)$ is $\W[1]$-complete~\cite{Cesati2003, Wareham2000}.
% -------------------------------------------------------------------------------
% -------------------------------------------------------------------------------
% -------------------------------------------------------------------------------
% -------------------------------------------------------------------------------
\begin{lemma}\label{Lemma:BDFAI}
	We have $\BDFAI(m,n) \fptred \BCS(cs)$.
\end{lemma}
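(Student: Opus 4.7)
The plan is to simulate the intersection of the DFAs $B_1,\ldots,B_n$ by $n$ threads synchronised through a shared memory, with the number of context switches bounded by $nm - 1$, a function of the parameters only. The naive idea of letting thread $A_i$ simulate $B_i$ on a word of length $m$ fails because nothing forces distinct threads to agree on the letters they read: in an arbitrary interleaving, $A_i$ could emit several letters in one context while $A_j$ emits different ones in another, so the projections onto their alphabets would not match. To fix this, I augment the alphabet with fresh tokens $\#_1,\ldots,\#_n$ and force $A_i$ to precede every letter it emits by its own token $\#_i$. Concretely, $A_i$ accepts exactly the words $\#_i\, a_i^{(1)}\,\#_i\, a_i^{(2)}\cdots \#_i\, a_i^{(m)}$ with $a_i^{(1)}\cdots a_i^{(m)}\in L(B_i)$, which can be realised by an NFA of size $\bigO{m\,|B_i|}$.

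The memory $M$ is an NFA over $\Sigma' = \Sigma \cup \{\#_1,\ldots,\#_n\}$ accepting exactly the words
\[
\#_1 a_1 \#_2 a_1 \cdots \#_n a_1\;\#_1 a_2 \#_2 a_2 \cdots \#_n a_2\;\cdots\;\#_1 a_m \#_2 a_m \cdots \#_n a_m
\]
for some $a_1,\ldots,a_m \in \Sigma$, where each round's letter $a_j$ is committed nondeterministically by the first symbol of $\Sigma$ read in that round and then verified against the remaining $n-1$ copies. A straightforward state set of the form ``round $j$, next expected token $\#_i$, committed letter $a$'' yields $\bigO{nm|\Sigma|}$ states, so the whole construction is polynomial. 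I set $\cs = nm-1$, which depends only on the parameter $(m,n)$ of \BDFAI, as required for a parameterized reduction.

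Correctness in one direction is immediate: any $w = a_1\cdots a_m \in \bigcap_i L(B_i)$ gives the accepting round-robin schedule $(\#_1 a_1)(\#_2 a_1)\cdots(\#_n a_1)\cdots(\#_1 a_m)\cdots(\#_n a_m)$ with exactly $nm$ contexts, hence $nm-1$ context switches. The main obstacle, and the reason the $\#_i$ tokens are needed, is the converse: extracting a common word from an arbitrary accepting computation. The key observations I plan to exploit are two structural consequences of $L(M)$. First, a single context of $A_i$ can contribute at most one pair $\#_i a$: after reading such an $a$ the memory expects $\#_{i+1}$, so a second $\#_i$ within the same context would be rejected. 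Second, the enforced pattern pins down the order of contexts to be $A_1,\ldots,A_n,A_1,\ldots,A_n,\ldots$ for exactly $m$ full rounds, with all threads in round $j$ committing to the same letter $a_j$. Hence every $A_i$ reads the same word $a_1\cdots a_m$ of length $m$, which by construction of $A_i$ lies in $L(B_i)$ and thus witnesses the \BDFAI\ instance.
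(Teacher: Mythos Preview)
Your proposal is correct and follows the same strategy as the paper: make the thread identity visible in the shared alphabet so that the memory $M$ can enforce a round-robin schedule in which all $n$ threads agree on a common letter in each of $m$ rounds. The paper's encoding is slightly leaner---it takes $\Sigma=\Gamma\times[1..n]$ and lets $M$ accept $(\Set{(a,1)\cdots(a,n)}{a\in\Gamma})^m$, folding your separate tokens $\#_i$ into the letters themselves and keeping each $A_i$ the same size as $B_i$---but the idea and the parameter blow-up $\cs=\Theta(nm)$ are identical.
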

% -------------------------------------------------------------------------------
% -------------------------------------------------------------------------------
% -------------------------------------------------------------------------------
% -------------------------------------------------------------------------------
\begin{proof}
	Let $(B_1, \dots, B_n, m)$ be an instance of $\BDFAI(m,n)$ over the
	alphabet $\Gamma$.
	We construct an instance $(S=\SMS{\Sigma}{M}{A}{i}{n}, m \cdot n)$ of
	$\BCS(\cs)$ so that
	$$
	L(S) \cap \Context(\Sigma, n, m \cdot n) \neq \emptyset \text{ if and only if } \Sigma^{m} \cap
	\bigcap_{i=1}^{n} L(B_{i}) \neq \emptyset.
	$$
	Set $\Sigma = \Gamma \times \set{1, \dots, n}$.
	We construct an automaton $A_i$ which simulates $B_i$ on $\Gamma$.
	To this end, $A_i$ will have the states of $B_i$ and for each transition
	$p_i \Move{a} p'_i$ of $B_i$, we get a transition \mbox{$p_i \Move{(a,i)} p'_i$} in $A_i$.
	Let $M$ be an automaton accepting the language \mbox{$(\Set{(a,1)\ldots(a,n)}{a\in\Sigma})^m$.}
	This ensures that each $B_i$ reads the same letter and that we only get words in the intersection.
	Clearly, the reduction works correctly.
\end{proof}
\section{Proofs for Section~\ref{Section:BCS}}
\label{Appendix:BCS}
% -------------------------------------------------------------------------------------
% -------------------------------------------------------------------------------------
% -------------------------------------------------------------------------------------
% -------------------------------------------------------------------------------------
\subsection{Upper Bounds}
% -------------------------------------------------------------------------------------
% -------------------------------------------------------------------------------------
% -------------------------------------------------------------------------------------
% -------------------------------------------------------------------------------------
\begin{proof}[Proof of Lemma~\ref{Lemma:MemorySequenceEquivalence}]
	To prove part (i), we show two inclusions.
	For the first inclusion, let \mbox{$w = (q_0, q_1) \dots (q_{m-1}, q_m)$} be an induced interface sequence, an element in $\iilangof{S}$.
	Then there is a word $u \in L(S)$ that induces $w$.
	This means, that we can write $u$ as $u = u_1 \dots u_m$ and there is an accepting run $r$ of $M$ on $u$ of the form:
	\begin{align*}
		q_0 \xrightarrow{u_1} q_1 \xrightarrow{u_2} q_2 \ldots
		q_{m-1} \xrightarrow{u_{m}} q_m.
	\end{align*}
	Since $u$ also lies in the shuffle of the $L(A_i)$, there are subwords $u^1, \ldots, u^t$, forming a partition of $u$ such that $u^i\in L(A_i)$.
	Following run $r$, every subword $u^i$ leads to an (possibly non-valid) interface sequence
	$w^i = (q_{i_1},q_{i_2})\ldots(q_{i_l},q_{i_{l+1}})$. 
	These partition $w$ and by construction, we get that $w^i \in \ilangof{A_i}$.
	Thus, $w \in \Shuffle{i \in [1..t]}{} \ilangof{A_i}$ and clearly, $w$ is valid.
	
	For the converse inclusion, let $w$ be a valid sequence in $\Shuffle{i \in [1..t]}{} \ilangof{A_i}$.
	Then there are subsequences $w^i \in \ilangof{A_i}$, forming a partition of $w$.
	By construction of $\ilangof{A_i}$, for each $w^i$ there is a word $u^i \in L(A_i)$ that follows the state changes in $M$ depicted by $w^i$.
	We compose (shuffle) the $u^i$ in the same order as the $w^i$ compose to $w$.
	Hence, we get a word $u$ in the shuffle of the $L(A_i)$. 
	Since $w$ is valid, $u$ follows the states in $M$ given by $w$ and thus, lies in $L(M)$.
	This implies: $w$ is induced by $u$. \\
	
	To show the second part of Lemma \ref{Lemma:MemorySequenceEquivalence}, we construct a finite automaton for the language $\ilangof{A_{\threadid}}$.
	We define $B_{\threadid}$ over the alphabet $Q \times Q$.
	The states are the states of $A_{\threadid}$.
	We add a transition from $p$ to $p'$ in $B_{\threadid}$ labeled by $(q,q')$ if $\sync{q}{q'}{\threadid}{p}{p'} \neq \emptyset$.
	Then, clearly $L(B_{\threadid}) = \ilangof{A_{\threadid}}$.
	Computing whether all the intersections $\sync{q}{q'}{i}{p}{p'}$ are non-empty can be done in $\bigO{\abs{A_{\threadid}}^3 \cdot \abs{M}^3}$.
\end{proof}
% -------------------------------------------------------------------------------------
% -------------------------------------------------------------------------------------
% -------------------------------------------------------------------------------------
% -------------------------------------------------------------------------------------
\subsection{Lower Bounds}
% -------------------------------------------------------------------------------------
% -------------------------------------------------------------------------------------
% -------------------------------------------------------------------------------------
% -------------------------------------------------------------------------------------
\subparagraph*{Lower Bound for Bounded Context Switching}
% -------------------------------------------------------------------------------
% -------------------------------------------------------------------------------
% -------------------------------------------------------------------------------
% -------------------------------------------------------------------------------
\begin{proof}[Proof of Theorem \ref{Theorem:BCSLower}]
	For the reduction, let two graphs $G$,$H$ be given, let \mbox{$k = \abs{E(G)}$} be  the number of edges, and \mbox{$\set{e_1,\dots,e_l} = V(G)$} be the vertices of $G$.
	Isolated vertices are not relevant for the complexity of $\SGI$, hence we assume there are none in $G$, which gives $l \leq 2k$.
	
	We will construct an instance $(S = \SMS{\Sigma}{M}{A}{i}{k}, 2k)$ to $\BCS(\cs)$. 
	To this end, we set $\Sigma = V(G) \times V(H)$, so intuitively each letter describes a map of a vertex of $G$ to a vertex of $H$.
	Following this intuition we will use automaton $M$ to output all possible mappings of $V(G)$ to $V(H)$, and each $A_i$ to verify that the $i$-th edge of $G$ is mapped to and edge of $H$.
	
	Pick any order $\prec$ on $V(H)$.
	We let $M$ accept the language $(v_1,w_{1})^{d_1}(v_2,w_{2})^{d_2}\dots(v_l,w_{l})^{d_l},$ 
	where $v_i\in V(G)$, $w_i\in V(H)$ and $\sum_{i=1}^l d_i=2k$, and $w_i \prec w_j$ for all $1\leq i<j\leq l$.
	Note that the order is needed to avoid that different vertices of $G$ get mapped to one \mbox{vertex in $H$}.
	
	For each edge of $G$ we will have an automata $A_i$.
	For $e_i=(v_{s},v_{t})$, we let $A_i$ accept the language 
	$$\bigcup_{(w_s,w_t)\in E(H)}\begin{cases}
	(v_s,w_s)(v_t,w_t) & \text{ if $w_s\prec w_t$, } \\
	(v_t,w_t)(v_s,w_s) & \text{ else.}\end{cases}$$
	
	We show that $\BCS$ has a solution with at most $2k$ context switches if and only if $G$ is isomorphic to a subgraph of $H$.
	
	First, let $\BCS$ have a solution, i.e., we find a word $u$ in $L(M)$ and in the shuffle of a subset of the languages $L(A_1), \dots, L(A_k)$. 
	Since the word is in $L(M)$, we know it is of the form $u = (v_1,w_{1})^{d_1}(v_2,w_{2})^{d_2} \dots (v_l,w_{l})^{d_l}$. 
	Since each $A_i$ accepts only a word of length $2$ and this word has length $2k$, each $A_i$ is involved in the shuffle.
	By our assumption that $G$ has no singletons, each vertex $v\in V(G)$ is incident to at lease on edge $e_j\in E(G)$. 
	Since $A_j$ is part of the shuffle, the input has to contain one letter of $\set{v} \times V(H)$.
	As there are exactly $l$ different letters in the word, we can define a map $\psi: V(G) \rightarrow V(H)$ where $\psi(v_i) = w_i$ for $1\leq i\leq l$. 
	Assume that $e_i=(v_s,v_t) \in E(G)$ and $(\psi(v_s),\psi(v_t)) \notin E_(H)$.
	Then $A_i$ would not accept any subword $u$, a contradiction to the fact that all $A_i$ are involved in the shuffle.
	Hence each edge of $G$ is mapped to an edge of $H$ and $\psi$ is an isomorphism, embedding $G$ into $H$.
	
	Now let $\psi$ be an embedding, mapping $G$ isomorphic to a subgraph of $H$.
	We order $v_1, \dots, v_l \in V(G)$ such that $\psi(v_i)\prec\psi(v_j)$ for all $1\leq i<j\leq l$. 
	Then one can directly see that $(v_1,\psi(v_1))^{d_1}(v_2,\psi(v_2))^{d_2}\dots(v_l,\psi(v_l))^{d_l}$, where $d_i$ is the degree of $v_i$, is in $L(M)$.
	It is also in the shuffle of $L(A_1),\dots,L(A_k)$ as each edge $(v_s, v_t) \in E(G)$ is mapped to an edge $(\psi(v_s),\psi(v_t))\in E(H)$.
	It remains to show that we have at most $2k$ context switches, but this is clear as the word length is $2k$.
	
	Finally we need to show that the reduction can be compute in polynomial time.
	To this end, we have to show that the size of the automata $M$ and $A_i$ is polynomially bounded.
	The number of states of $M$ is bounded by $\abs{V(G)} \cdot \abs{V(H)} \cdot \abs{V(G)}\cdot\abs{E(H)}$ as the automata only need to remember the last letter, the number of different letters produced and the word length.
	Each of the small automata need $\abs{V(H)} + 2$ states as it only needs to remember the vertex of $H$ read in the first letter.
\end{proof}
% -------------------------------------------------------------------------------------
% -------------------------------------------------------------------------------------
% -------------------------------------------------------------------------------------
% -------------------------------------------------------------------------------------
\subparagraph*{Lower Bound for Shuffle Membership}
% -------------------------------------------------------------------------------------
% -------------------------------------------------------------------------------------
% -------------------------------------------------------------------------------------
% -------------------------------------------------------------------------------------
For the proof of Proposition \ref{Theorem:LowerBoundSM} we make use of a further result, explicitly stated as \mbox{Theorem $6$} in \cite{Bjorklund2016}, implicitly as Theorem $4.7$ in \cite{Cygan2016}:
% -------------------------------------------------------------------------------------
% -------------------------------------------------------------------------------------
% -------------------------------------------------------------------------------------
% -------------------------------------------------------------------------------------
\begin{theorem}
	If $\SetCov$ can be solved in $\bigOS{(2-\delta)^{n+t}}$ time for a $\delta > 0$ then it can also be solved in $\bigOS{(2-\varepsilon)^{n}}$ time, for an $\varepsilon > 0$.
\end{theorem}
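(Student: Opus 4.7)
The plan is to establish the implication via a case analysis on the parameter $t$ relative to $n$, supplemented by a reduction that handles the remaining regime. I would fix a constant $\alpha \in (0,1)$, to be chosen as a function of $\delta$, and split the argument into two regimes.

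First, for the small-$t$ regime where $t \leq \alpha n$, I would apply the hypothesized $\bigOS{(2-\delta)^{n+t}}$ algorithm directly. The running time then collapses to $\bigOS{(2-\delta)^{(1+\alpha)n}}$, and since $(2-\delta)^{1} < 2$, continuity of the exponential in $\alpha$ guarantees that for all sufficiently small $\alpha > 0$ we have $(2-\delta)^{1+\alpha} \leq 2 - \varepsilon_1$ for some $\varepsilon_1 > 0$. This already yields $\bigOS{(2-\varepsilon_1)^n}$ on such inputs.

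For the large-$t$ regime where $t > \alpha n$, the direct application is too slow, so I would reduce to a family of small-$t$ instances by partitioning the universe and splitting the budget. Concretely, for each partition $U = U_1 \cup U_2$ with $U_1 \cap U_2 = \emptyset$ and each split $t = t_1 + t_2$, one checks whether $U_1$ admits a cover of size $t_1$ and $U_2$ a cover of size $t_2$ from the family $\mathcal{F}$ restricted to each half. With the split chosen so that both subproblems fall into the small-$t$ regime relative to their own universes, the algorithm from the first case applies to each side.

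The principal obstacle is that the naive enumeration of partitions contributes a factor of $\binom{n}{n/2} = \Theta(2^n / \sqrt{n})$, which on its own already exhausts the desired exponential budget. My plan for avoiding this is to aggregate the partitions through a Yates-style subset convolution over the universe: instead of paying separately for each split, all splits of a given size are handled in a single $\bigOS{2^n}$ pass while preserving the $(2-\delta)$-saving coming from running the small-$t$ algorithm on each half. Balancing the threshold $\alpha$ against the split ratio then yields a single $\varepsilon > 0$ (necessarily strictly smaller than $\delta$, and shrinking as $\alpha \to 0$) for which the composite procedure meets the target $\bigOS{(2-\varepsilon)^n}$. The delicate step is showing that the convolution actually preserves the exponential saving, since one must argue that the $(2-\delta)^{|U_i|+t_i}$ costs for the two halves add in the exponent rather than multiplying with the $2^n$ factor contributed by the enumeration.
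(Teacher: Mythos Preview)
First, note that the paper does not prove this statement at all: it is quoted verbatim as a known result from \cite{Bjorklund2016} and \cite{Cygan2016} and is used as a black box in the proof of Proposition~\ref{Theorem:LowerBoundSM}. So there is no ``paper's own proof'' to compare against; the comparison has to be with the argument in those references.

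Your small-$t$ case is fine, but the large-$t$ case has two genuine gaps.

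\emph{First}, the bipartition cannot reach the small-$t$ regime on both sides. If $t>\alpha n$ and you split $n=n_1+n_2$, $t=t_1+t_2$, then $t_1+t_2>\alpha(n_1+n_2)$ forces $t_i>\alpha n_i$ for at least one $i$. So the claim ``both subproblems fall into the small-$t$ regime relative to their own universes'' is impossible, and recursing does not help: the large-$t$ side never shrinks into the small-$t$ window.

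\emph{Second}, the subset-convolution step does not preserve any exponential saving. To run fast subset convolution you must first tabulate $f_{t_i}(S)$ for \emph{all} $S\subseteq U$. Doing this by invoking the hypothesised algorithm on every $S$ costs $\sum_{S\subseteq U}(2-\delta)^{|S|+t_i}=(2-\delta)^{t_i}(3-\delta)^{n}$, which is far worse than $2^{n}$; doing it instead by the standard dynamic program over subsets costs $\bigOS{2^{n}}$ outright and never touches the hypothesis. Either way you end up at $\Theta(2^{n})$ or worse, so the ``delicate step'' you flag is not just delicate---it fails.

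The argument in the cited references takes an entirely different route: it reduces $t$ rather than $n$. For a suitable constant $q=q(\delta)$ one replaces the family $\mathcal{F}$ by $\mathcal{F}^{(q)}$, the family of all unions of at most $q$ sets from $\mathcal{F}$. This keeps the universe size $n$, blows up the number of sets only polynomially (since $q$ is constant), and drops the budget to roughly $t/q\le n/q$. After padding with at most $q-1$ dummy singleton elements so that divisibility works out, one has an equivalent instance with new budget $t'\le n/q+1$, and the hypothesised algorithm now runs in time $(2-\delta)^{n(1+1/q)+O(1)}$. Choosing $q$ large enough that $(2-\delta)^{1+1/q}<2$ gives the desired $(2-\varepsilon)^{n}$ bound uniformly in $t$.
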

% -------------------------------------------------------------------------------------
% -------------------------------------------------------------------------------------
% -------------------------------------------------------------------------------------
% -------------------------------------------------------------------------------------

This allows us to reason as follows.
Assume we have a polynomial-time reduction from $\SetCov$ to $\SM$ such that an instance $((S_i)_{i \in [1..m]},t)$ of $\SetCov$ is mapped to an instance $((B_i)_{i\in[1..m]}, k, w)$ of $\SM$ with $k = n+t$.
Then an \mbox{$\bigOS{(2 - \delta)^k}$}-time algorithm for $\SM$ would yield an $\bigOS{(2 - \varepsilon)^n}$-time algorithm for $\SetCov$.
Hence, all what is left to complete the proof of Proposition \ref{Theorem:LowerBoundSM} is to construct such a reduction.
This is done in the following lemma:
% -------------------------------------------------------------------------------------
% -------------------------------------------------------------------------------------
% -------------------------------------------------------------------------------------
% -------------------------------------------------------------------------------------
\begin{lemma}\label{Lemma:SCtoSM}
	There is a polynomial-time reduction from $\SetCov$ to $\SM$ such that an instance $((S_i)_{i \in [1..m]},t)$ of $\SetCov$ is mapped to an instance $((B_i)_{i\in[1..m]}, k, w)$ of $\SM$ with $k = n+t$.
\end{lemma}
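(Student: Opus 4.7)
The plan is to turn a $\SetCov$ instance with universe $U = \set{u_1, \ldots, u_n}$, family $(S_i)_{i \in [1..m]}$, and budget $t$ into the $\SM$ instance with alphabet $\Gamma = U \cup \set{\#}$ (where $\#$ is a fresh padding symbol), word $w = u_1 u_2 \cdots u_n \#^t$ of length $k = n + t$, and one NFA $B_i$ per set. The language of $B_i$ is meant to be exactly
\[
L(B_i) = \Set{u_{j_1} u_{j_2} \cdots u_{j_l}\,\#}{0 \leq l \leq n,\ j_1 < \cdots < j_l,\ u_{j_p} \in S_i \text{ for all } p},
\]
which I would realize by states $q_0, q_1, \ldots, q_n, q_f$ (with $q_f$ accepting), a transition from $q_j$ to $q_{j'}$ on letter $u_{j'}$ whenever $j' > j$ and $u_{j'} \in S_i$, and a transition from every $q_j$ to $q_f$ on $\#$. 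This gives $O(n)$ states per automaton, and the whole construction is clearly polynomial. The target equality $k = n + t$ demanded by the quoted sparsification statement holds by construction.

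For correctness, I would use the partition characterization of shuffle membership recalled just before the statement. In the forward direction, given a cover $S_{i_1}, \ldots, S_{i_t}$, I pick for every $u_j \in U$ some $S_{i_{c(j)}}$ containing it and assign the position of $u_j$ in $w$ to $B_{i_{c(j)}}$; I distribute the $t$ $\#$-positions, one to each of $B_{i_1}, \ldots, B_{i_t}$, and give every remaining $B_i$ the empty subword. Because the $u$'s appear in $w$ in strictly increasing index order (each exactly once), the projection to any chosen $B_{i_k}$ is of the form $u_{j_1} \cdots u_{j_l}\,\#$ with $u_{j_p} \in S_{i_k}$, hence in $L(B_{i_k})$. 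Conversely, if $w = w_1 \cdots w_m$ with $w_i \in L(B_i) \cup \set{\varepsilon}$, every non-empty $w_i$ must contain exactly one $\#$ (by the definition of $L(B_i)$), so since $w$ carries precisely $t$ $\#$'s and all positions must be covered, exactly $t$ of the $B_i$'s receive non-empty subwords; call them $B_{i_1}, \ldots, B_{i_t}$. Every $u_j$-position is assigned to one of these, and that forces $u_j \in S_{i_k}$ by the definition of $L(B_{i_k})$, so $S_{i_1}, \ldots, S_{i_t}$ is a cover.

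The only subtlety I see is ensuring that the strict-increase condition in $L(B_i)$ does not exclude legitimate covers: since each $u_j$ occurs exactly once in $w$ and the $u$'s appear in the order $u_1, \ldots, u_n$, every projection onto a chosen set of $U$-positions is automatically strictly increasing, so no cover is lost. The rest is bookkeeping, and the main thing to get right is that $k = n + t$ on the nose so that the $\bigOS{(2-\delta)^k}$ algorithm for $\SM$ translates to an $\bigOS{(2-\delta)^{n+t}}$ algorithm for $\SetCov$, which by the preceding theorem collapses to $\bigOS{(2-\varepsilon)^n}$, giving Proposition~\ref{Theorem:LowerBoundSM}.
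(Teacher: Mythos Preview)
Your construction is correct and follows essentially the same approach as the paper: encode $w$ as the list of universe elements followed by $t$ marker symbols, give each set $S_i$ an NFA accepting ``some elements of $S_i$ followed by one marker,'' and use the marker count to force exactly $t$ active automata. The only cosmetic differences are that the paper uses $t$ distinct marker symbols $1,\dots,t$ instead of a single repeated $\#$, and its NFAs are two-state (accepting $S_i^{*}\cdot\{1,\dots,t\}$) rather than your $O(n)$-state index-tracking automata---your strict-increase condition is harmless but unnecessary, since any projection of $u_1\cdots u_n$ is already strictly increasing.
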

% -------------------------------------------------------------------------------------
% -------------------------------------------------------------------------------------
% -------------------------------------------------------------------------------------
% -------------------------------------------------------------------------------------
\begin{proof}
	Let $((S_i)_{i \in [1..m]},t)$ be an instance of $\SetCov$ with \mbox{$U = \set{ u_1, \dots, u_n }$}.
	We construct $\Gamma = U \cup \set{ 1, \dots, t }$ and introduce an NFA $B_S$ for each set $S$ in the given family.
	The automaton $B_S$ has two states and its language is \mbox{$L(B_S) = \Set{ u^*.j }{ u \in S, j \in [1..t] }$}.
	We further define the word $w$ to be the concatenation of the two words $w_U = u_1 \dots u_n$ and $w_t = 1 \dots t$.
	Hereby, $w_U$ ensures that each element of $U$ gets covered while $w_t$ ensures that we use exactly $t$ sets.
	Note that the length of $w$ is $n+t$.
	Hence, we constructed an instance $((B_i)_{i\in[1..m]}, n+t, w)$ of $\SM$.
	
	For the correctness of the above construction, first assume that $U$ can be covered by $t$ sets of the given family.
	After reordering, we may assume that $S_1, \dots, S_t$ cover $U$.
	Now we can use an interleaving of the $B_{S_i}, i \in [1..t]$ to read $w_U$: Each $B_{S_i}$ reads those $u \in U$ that get covered by $S_i$.
	Note that an element $u$ can lie in more than one of the $S_i$.
	In this case, $u$ is read non-deterministically by one of the corresponding $B_{S_i}$.
	After reading the elements of $U$, append the index $i$ to the string read by $B_{S_i}$.
	Hence, we get that $w$ can be read by interleaving the $B_{S_i}$ with at most $n+t-1$ context switches, $w \in \SShuffle_{i \in [1..m]}^{} L(B_{S_i})$.
	
	Now let $w = w_U.w_t$ be in the shuffle of the $L(B_{S_i}), i \in [1..m]$.
	Since $w_t = 1 \dots t$, we get that exactly $t$ of the automata $B_{S_i}$ are used to read the word $w$.
	We may assume that these are $B_{S_1}, \dots, B_{S_t}$.
	Then the prefix $w_U$ is read by interleaving the $B_{S_i}$.
	This means that each $u \in U$ lies in (at least) one of the $S_i$ and hence, $S_1, \dots, S_t$ cover the universe $U$.
\end{proof}
% -------------------------------------------------------------------------------------
% -------------------------------------------------------------------------------------
% -------------------------------------------------------------------------------------
% -------------------------------------------------------------------------------------
\subparagraph*{Lower Bound on the Size of the Kernel}
% -------------------------------------------------------------------------------------
% -------------------------------------------------------------------------------------
% -------------------------------------------------------------------------------------
% -------------------------------------------------------------------------------------
\begin{proof}[Proof of Theorem \ref{Theorem:Nopolykernel}]
	First, we define the polynomial equivalence relation.
	We assume that the $\kSAT{3}$-instances are encoded over a finite alphabet $\Gamma$.
	Let $\Formulas$ denote the set of encodings that actually encode proper $\kSAT{3}$-instances.
	Let $\varphi, \psi$ be two encodings from $\Gamma^*$.
	We define $(\varphi, \psi) \in \equivR$ if and only if (1) $\varphi, \psi \in \Formulas$ and they have the same number of clauses and variables, or (2) both, $\varphi$ and $\psi$, do not lie in $\Formulas$.
	Note that the relation $\equivR$ meets all the requirements on a polynomial equivalence relation.
	
	Now we elaborate on the cross-composition.
	Let $\varphi_1, \dots, \varphi_t$ be instances of $\kSAT{3}$, equivalent with respect to $\equivR$.
	This means that all given formulas have the same number $\ell$ of clauses and $k$ variables.
	We may assume that the set of variables used by any of the $\varphi_j$ is $\set{x_1, \dots, x_k}$.
	
	We start constructing the needed shared memory concurrent program $S$ by defining the underlying alphabet: $\Sigma = (\set{x_1, \dots, x_k} \times \set{ ?0, !0, ?1, !1}) \cup \set{\#}$.
	Intuitively, $(x_i, ?0)$ corresponds to querying if variable $x_i$ evaluates to $0$ and $(x_i, !0)$ corresponds to verifying that  $x_i$ indeed evaluates to $0$.
	The symbol $\#$ was added to prevent that the empty word lies in $L(S)$.
	
	For each variable $x_i$, we introduce an NFA $A_i$ that keeps track of the value assigned to $x_i$.
	To this end, it has three states: An initial state and two final states, one for each possible value. 
	The automaton $A_i$ accepts the language $(x_i, !0)^+ + (x_i, !1)^+$.
	
	We further introduce a thread $B$, responsible for checking whether one out of the $t$ given formulas is satisfiable.
	To this end, $B$ has the states \mbox{$\set{p, p_0, p_f} \cup \Set{p^j_i}{j \in [1..t], i \in [1..\ell-1]}$}, where $p$ is the initial, and $p_f$ is the final state.
	For a fixed $j \in [1..t]$, the states $p^j_1, \dots p^j_{\ell-1}$ are used to iterate through the $\ell$ clauses of $\varphi_j$.
	The transitions between $p^j_i$ and $p^j_{i+1}$ are labeled by $(x_s, ?0)$ or $(x_s, ?1)$, depending on whether variable $x_s$ occurs with or without negation in the $i+1$-st clause of $\varphi_j$ for $i \in [1..\ell-2]$.
	Note that there are at most three transitions between $p^j_i$ and $p^j_{i+1}$.
	For the first clause, the transitions start in $p_0$ and end in $p^j_1$, while for the $\ell$-th clause, the transitions start in $p^j_{\ell-1}$ and end in $p_f$.
	Further, there is a transition from $p$ to $p_0$ that is labeled by $\#$.
	
	To answer the requests of $B$, we use the memory automaton $M$. It ensures that each request of the form $(x_s, ?1)$ is also followed by a confirmation of the form $(x_s, !1)$ (same for value $0$).
	$M$ has an initial state $q_{\mathit{init}}$, a final state $q_{f}$, and for each variable $x_s$, two states $q^0_s$ and $q^1_s$.
	We get a transition between $q_f$ and each $q^0_s$, labeled by the request $(x_s, ?0)$.
	To get the confirmation, we introduce a transition from $q^0_s$ back to $q_f$, labeled by $(x_s, !0)$.
	We proceed similarly for $q_f$ and $q^1_s$.
	To get from $q_{\mathit{init}}$ to $q_f$, we introduce the transition $q_{\mathit{init}} \xrightarrow{\#} q_f$, avoiding that $M$ accepts the empty word. \\
	
	Now we show the correctness of the construction:
	There is a $j\in[1..t]$ such that $\varphi_j$ is satisfiable if and only if $L(S) \cap \Context(\Sigma, k+1, 2\ell) \neq \emptyset$. 
	
	First, let a $j \in [1..t]$ be given such that $\varphi_j$ is satisfiable.
	Then there is a value $v_s$ for each variable $x_s$ with $s \in [1..k]$, satisfying $\varphi_j$.
	A word $w \in L(S, 2\ell)$ can be constructed as follows. 
	Let $x_{s_1}, \dots, x_{s_\ell}$ denote variables (repetition allowed) that contribute to satisfying the clauses of $\varphi_j$.
	This means that $x_{s_i}$ can be used to satisfy the $i$-th clause. 
	We define $w = \#.(x_{s_1}, ? v_{s_1}) . (x_{s_1}, ! v_{s_1}) \dots (x_{s_\ell}, ? v_{s_\ell}) . (x_{s_\ell}, ! v_{s_\ell})$.
	Then, $w \in L(S) \cap \Context(\Sigma, k+1, 2\ell)$.
	
	For the other direction, let $w \in L(S) \cap \Context(\Sigma, k+1, 2\ell)$ be given.
	Then $w$ is of the following form: $w = \# . (x_{s_1}, ? v_{s_1}) . (x_{s_1}, ! v_{s_1}) \dots (x_{s_\ell}, ? v_{s_\ell}) . (x_{s_\ell}, ! v_{i_\ell})$.
	Note that $x_{s_i} = x_{s_{i'}}$ implies $v_{s_i} = v_{s_{i'}}$ in the word.
	Hence, we can construct a satisfying assignment $v$ for one of the given $\varphi_j$.
	We assign each $x_{s_i}$ occurring in $w$ the value $v_{s_i}$.
	For variables that do not occur in $w$, we can assign $0$ or $1$.
	
	By construction, $B$ iterates through the clauses of one of the given $\varphi_j$.
	Since $B$ also accepts in the computation of $w$, there is a $j \in [1..t]$ such that all the clauses of $\varphi_j$ can be satisfied by $v$.
	Hence, $\varphi_j$ is satisfiable. \\
	
	Finally, the parameters of the constructed $\BCS$-instance are the size of the memory, \mbox{$m = 2k + 2$} and the number of context switches $\cs = 2\ell$.
	Both are bounded by $\maximum_{j \in [1..t]} \abs{\varphi_j}$.
	Hence, all requirements on a cross-composition are met.
\end{proof}
\section{Proofs for Section~\ref{Section:RR}}
\label{Appendix:RR}

\subsection{Carving-width}\label{Appendix:CarvingWidth}
The scheduling dimension is closely related to the \emph{carving-width} of an undirected multigraph.
The carving-width was introduced in \cite{Seymour1994} as a measure for communication graphs.
These are graphs where each edge-weight represents a number of communication demands (calls) between two vertices, or locations.
To route these calls efficiently, one is interested in finding a routing tree that minimizes the needed bandwidth.
The \emph{carving width} measures the minimal required bandwidth among all such trees.

To relate it with the scheduling dimension, we turn a directed multigraph $G = (V,E)$ into an undirected multigraph $G' = (V, E')$ the following way:
We keep the vertices $V$ of $G$ and assign the edge-weights $E'(u,v) = \max\set{ E(u,v), E(v,u) }$ for $u,v \in V$.
Then the following holds:
% -------------------------------------------------------------------------------------
% -------------------------------------------------------------------------------------
% -------------------------------------------------------------------------------------
% -------------------------------------------------------------------------------------
\begin{lemma}\label{Lemma:CWisSdim}
	For any directed multigraph $G$, we have $\sdimof{G} \leq \cwof{G'} \leq 2\sdimof{G}$.
\end{lemma}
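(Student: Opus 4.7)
My plan is to exploit a well-known contraction-process characterization of carving-width, and then compare the two measures pointwise along a common contraction process. Concretely, I will use the fact that for any undirected multigraph $H$,
\[ \cwof{H} = \min_\pi \max_i \deg_{H_i}(H_i), \]
where $\pi = H_1, \dots, H_{|V|}$ ranges over all contraction processes of $H$ and $\deg_{H_i}(H_i)$ is the maximum weighted (undirected) degree in $H_i$. This equivalence arises by reading a carving decomposition (an unrooted tree of internal degree $3$) as a rooted binary tree of contractions, and noting that the cut across each tree-edge equals the weighted degree of the super-node created at the associated contraction step. The measure $\sdimof{G}$ is already defined in this form, so the two quantities will be directly comparable.

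The core of the proof is then a pointwise comparison. Any contraction process $\pi$ on $V$ can be applied verbatim to both $G$ and $G'$, yielding parallel sequences $G_1,\dots,G_t$ and $G'_1,\dots,G'_t$. At stage $i$, a super-node $n$ corresponds to a subset $S \subseteq V$ of original vertices, and a direct unfolding gives
\[
  \indegreeofgraph{G_i}{n} = \!\!\sum_{s \in S,\, s' \notin S}\!\! E(s',s), \quad
  \outdegreeofgraph{G_i}{n} = \!\!\sum_{s \in S,\, s' \notin S}\!\! E(s,s'), \quad
  \deg_{G'_i}(n) = \!\!\sum_{s \in S,\, s' \notin S}\!\! \max(E(s,s'), E(s',s)).
\]
From the elementary inequalities $\max(A,B) \leq \sum_k \max(a_k,b_k) \leq A + B \leq 2\max(A,B)$, valid whenever $A = \sum_k a_k$ and $B = \sum_k b_k$, I immediately obtain the pointwise bound
\[ \degreeofgraph{G_i}{n} \;\leq\; \deg_{G'_i}(n) \;\leq\; 2\, \degreeofgraph{G_i}{n}. \]
Taking the maximum over all super-nodes and stages and then the minimum over $\pi$ propagates these inequalities to the measures themselves: applying the left inequality to a $\pi$ minimizing $\cwof{G'}$ gives $\sdimof{G} \leq \cwof{G'}$, and applying the right inequality to a $\pi$ minimizing $\sdimof{G}$ gives $\cwof{G'} \leq 2\,\sdimof{G}$.

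The main obstacle is justifying the contraction-process form of carving-width, since the classical definition uses an unrooted degree-$3$ tree with $2n-3$ edge-cuts while a contraction process performs only $n-1$ contractions. The two views match up once one accounts for the $n$ degrees of the original vertices already visible in $G_1$ (which correspond to the leaf-edges of the carving tree) and identifies the cut of the final, trivial super-node with the cut of its complementary branch. Once this bookkeeping is laid out carefully, the rest of the proof is exactly the short degree calculation above.
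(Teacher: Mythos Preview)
Your argument is correct and uses the same two ingredients as the paper: the correspondence between carving decompositions (rooted binary trees on $V$) and contraction processes on $V$, and the elementary chain $\max(A,B)\le\sum_k\max(a_k,b_k)\le A+B\le 2\max(A,B)$ applied to $A=\sum E(s,s')$, $B=\sum E(s',s)$ over a cut $S\mid V\setminus S$.

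The difference is purely organizational. You first isolate the contraction-process characterization of $\cwof{G'}$ as a standalone fact, then run a \emph{single} contraction process simultaneously on $G$ and $G'$ and compare degrees pointwise. The paper instead does the two directions separately and concretely: from a width-$k$ carving decomposition of $G'$ it builds, bottom-up along the tree, a contraction process of $G$ and checks $\degreeofgraph{G_j}{n}\le E'(V(n),V\setminus V(n))\le k$; conversely, from a degree-$k$ contraction process of $G$ it builds the obvious binary tree and checks $\widthof{e}\le \indegreeofgraph{G_j}{w}+\outdegreeofgraph{G_j}{w}\le 2k$. These are exactly your pointwise inequalities, just embedded in the conversion rather than stated once. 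Your factored presentation is tidier and makes the role of the $\max$-vs-sum inequality transparent; the paper's presentation is more self-contained because it does not invoke the contraction characterization of carving-width as known but proves precisely the half of it needed in each direction.
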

% -------------------------------------------------------------------------------------
% -------------------------------------------------------------------------------------
% -------------------------------------------------------------------------------------
% -------------------------------------------------------------------------------------

Despite the close relation between carving-width and scheduling dimension, we suggest a parameterization in terms of the latter.
The reason is as follows.
The scheduling dimension is the natural measure for directed communication demands in scheduling graphs.
If threads are tightly coupled, they should be grouped together (contracted) to one thread.
This leads to a contraction process rather than to a carving decomposition that is needed for the carving-width.

Before we give the proof of Lemma \ref{Lemma:CWisSdim}, we formally introduce the carving-width.
Let $G = (V,E)$ be a given undirected multigraph.
A \emph{carving decomposition} of $G$ is a tuple $(T, \varphi)$, where $T$ is a binary tree and $\varphi$ is a bijection from the leaves of $T$ to the vertices $V$ of $G$.
For an edge $e$ of $T$, removing $e$ from $T$ partitions $T$ into two connected components.
Let $S_1, S_2 \subseteq V$ be the images, under $\varphi$, of the leaves falling into the components.
We define the \emph{width} of $e$ to be the integer $E(S_1,S_2) = \sum_{u \in S_1, v \in S_2} E(u,v)$, and the \emph{width} of the decomposition $(T, \varphi)$ to be the maximum width of all edges in $T$.
The \emph{carving-width} of $G$ is the minimum width among all carving decompositions:
\begin{align*}
\cwof{G}=\minimum\Set{ \widthof{(T,\varphi)} }{ (T,\varphi) \text{ a carving decomposition of } G }.
\end{align*}
Deciding whether the carving-width of a graph is bounded by a given integer is an $\NP$-hard problem for general graphs and known to be polynomial for planar graphs \cite{Seymour1994}.
The first $\FPT$-algorithm for this decision problem, parameterized by the carving-width, was derived by Bodlaender et al. in \cite{Bodlaender2005}.
A parameterization by the number of vertices of the given graph was considered by Fomin et al. in \cite{Fomin2010}.
They constructed an $\bigOS{2^n}$-time algorithm for computing the carving-width.
Further, the carving-width was used as a parameter in a graph-embedding problem in \cite{Biedl2013}.
The authors used dynamic programming on carving decompositions to show the fixed-parameter tractability of their problem.
% -------------------------------------------------------------------------------
% -------------------------------------------------------------------------------
% -------------------------------------------------------------------------------
% -------------------------------------------------------------------------------
\begin{proof}[Proof of Lemma \ref{Lemma:CWisSdim}]
	First we show that from a given carving decomposition $\carvingdecomp$ of $G'$ of width $k$, we can construct a contraction process $\contractionprocess$ of $G$ with degree at most $k$.
	Then we get that $\sdimof{G} \leq \cwof{G'}$.
	The idea is to inductively assign each node of $T$ a partial contraction process of $G$ such that all graphs appearing in the process have degree at most $k$.
	We start at the leaves of $T$ and go bottom-up.
	At the end, the needed contraction process will be the process assigned to the root of $T$.
	
	Before we start, we fix some notation.
	Let $w$ be a vertex occurring in a partial contraction processes starting in $G$.
	By $V(w) \subseteq V$, we denote the set of vertices in $G$ that get contracted to $w$.
	Note that if two vertices $u,v$ get contracted to $w$ during the process, we get that $V(w) = V(u) \cup V(v)$.
	For a node $n$ of $T$, we use $\Leafof{n} \subseteq V$ to denote the image, under $\varphi$, of the leaves of the subtree of $T$ rooted in $n$.
	
	Now we show the following:
	We can assign any node $n$ in $T$ a pair $(\contractionprocess_n, w)$, where $\contractionprocess_n = G_1 . \dots G_\ell$ is a partial contraction process such that $G_1 = G$,
	$\degreeof{G_i} \leq k, i \in [1..\ell]$ and $w$ is a vertex in $V(G_\ell)$ with $V(w) = \Leaf(n)$, and $V(G_\ell) = V \setminus \Leafof{n} \cup \set{w}$.
	The latter conditions ensure that the process contracts the vertices in $\Leafof{n}$ to $w$ and furthermore, no other vertices in $G$ are contracted.
	The process that we assign to the root $r$ is thus a proper contraction process of $G$, contracting all vertices of $G$ to a single node.
	Moreover, the degree of the process is bounded by $k$.
	
	To start the induction, we assign any leaf $n$ of $T$ the pair $(G,\varphi(n))$.
	Note that we have $\Leafof{n} = \set{\varphi(n)} = V(\varphi(n))$ in this case.
	Hence, we only need to elaborate on the degree of $G$ since the remaining conditions above are satisfied. 
	For any $v \in V$, we have:
	\begin{align*}
	\degreeof{v} &= \maximum\set{\indegreeof{v},\outdegreeof{v}}
	= \maximum\set{\sum_{u \in V} E(u,v), \sum_{u \in V} E(v,u)} \\
	&\leq \sum_{u \in V}\maximum\set{E(u,v), E(v,u)}
	= \sum_{u \in V} E'(v,u) = E'(v, V \setminus \set{v}).
	\end{align*}
	Let $n'$ denote the leaf with $\varphi(n') = v$.
	Furthermore, let $e$ be the edge of $T$ connecting $n'$ with its parent node.
	Then $\widthof{e} = E'(v, V \setminus \set{v})$.
	Since the width of $e$ is bounded by $k$, we also get that $\degreeof{v} \leq k$ and thus, $\degreeof{G} \leq k$.
	
	Now suppose we have a node $n$ of $T$ with two children $n_1$ and $n_2$ that are already assigned pairs $(\contractionprocess_1,w_1)$ and $(\contractionprocess_2,w_2)$ with partial contraction processes $\contractionprocess_1 = G_1 \dots G_\ell$ and \mbox{$\contractionprocess_2 = H_1 \dots H_t$}, where $G_1 = H_1 = G$ and $\degreeof{G_i}, \degreeof{H_j} \leq k$ for $i \in [1..\ell], j \in [1..t]$.
	Furthermore, $w_1 \in V(G_\ell)$ and $w_2 \in V(H_{t})$ satisfy the conditions:
	$V(w_1) = \Leafof{n_1}$, \mbox{$V(G_\ell) = V \setminus \Leafof{n_1} \cup \set{w_1}$}, 
	and $V(w_2) = \Leafof{n_2}$, $V(H_t) = V \setminus \Leafof{n_2} \cup \set{w_2}$.
	We also fix the notation for the contractions applied in $\contractionprocess_2$.
	Let $\sigma_i$ be the contraction applied to $H_i$ to obtain $H_{i+1}$.
	Hence, $H_{i+1} = H_i[\sigma_i]$ for $i \in [1..t-1]$.
	
	We construct the partial contraction process that performs the contractions of $\contractionprocess_1$, the contractions of $\contractionprocess_2$, and contracts $w_1$ and $w_2$ to a node $w$. 
	Set $\contractionprocess_n = G_1 \dots G_\ell . G_{\ell + 1} \dots G_{\ell + t}$, where $G_{\ell + i} = G_{\ell + i - 1}[\sigma_i]$, for $i \in [1..t-1]$ and $G_{\ell + t} = G_{\ell + t - 1}\contract{w_1,w_2}{w}$.
	Then $\contractionprocess_n$ is well-defined.
	Since $\Leafof{n_1} \cap \Leafof{n_2} = \emptyset$, we have that \mbox{$V(G_\ell) = V \setminus \Leafof{n_1} \cup \set{w_1}$} contains $\Leafof{n_2}$.
	Thus, it is possible to apply the contractions $\sigma_1, \dots, \sigma_{t-1}$ to $G_\ell$ since they only contract vertices from $\Leafof{n_2}$.
	Assume that a node $v \in V \setminus \Leafof{n_2}$ would be contracted during $\contractionprocess_2$.
	Then $v \notin V(H_t) = V \setminus \Leafof{n_2} \cup \set{w_2}$.
	Since $v \neq w_2$, we would get that $v$ is in $\Leafof{n_2}$ which is a contradiction.
	
	We assign $n$ the pair $(\contractionprocess_n, w)$.
	What is left to prove is that the above conditions are satisfied.
	For the vertex $w \in V(G_{\ell + t})$, we have:
	\begin{align*}
	V(w) = V(w_1) \cup V(w_2) = \Leafof{n_1} \cup \Leafof{n_2} = \Leafof{n}.
	\end{align*}
	Since we apply the contractions of $\contractionprocess_1$ and $\contractionprocess_2$ to obtain $G_{\ell + t - 1}$, we get:
	\begin{align*}
	V(G_{\ell + t - 1}) &= ( V(G_\ell) \cap V(H_t) ) \cup \set{w_1,w_2} \\
	&=  ( V \setminus \Leafof{n_1} \cap V \setminus \Leafof{n_2} ) \cup \set{w_1, w_2} \\
	&= V \setminus ( \Leafof{n_1} \cup \Leafof{n_2} ) \cup \set{w_1, w_2} \\
	&= V \setminus \Leafof{n} \cup \set{w_1, w_2}.
	\end{align*}
	The graph $G_{\ell + t}$ is obtained by contracting $w_1$ and $w_2$ in $G_{\ell + t - 1}$.
	Hence, we have that $V(G_{\ell + t}) = V(G_{\ell + t - 1}) \setminus \set{w_1, w_2} \cup \set{w} = V \setminus \Leafof{n} \cup \set{w}$.
	
	No we prove that all occurring graphs in $\contractionprocess_n$ have degree bounded by $k$.
	It is clear by assumption that this holds for $G_1, \dots, G_\ell$.
	We show the same for $G_{\ell + i}$ with $i \in [1..t-1]$.
	Let $u \in V(G_{\ell + i})$. 
	We distinguish three cases.
	
	If $V(u) \subseteq \Leafof{n_1}$, then we have that none of the $\sigma_j$ act on $u$ since this would imply that a node from $\Leafof{n_1}$ gets contracted by $\sigma_j$ which is not possible.
	Hence, $u \in V(G_{\ell})$ and $\degreeofgraph{G_{\ell + i}}{u} = \degreeofgraph{G_{\ell}}{u} \leq k$.
	Note that by $\degreeofgraph{H}{v}$ we indicate the degree of vertex $v$ in graph $H$.
	
	If $V(u) \subseteq \Leafof{n_2}$, then no contraction of $\contractionprocess_1$ acts on $u$ and $u$ is a vertex that occurs during the application of $\sigma_1, \dots, \sigma_i$.
	Hence, $u \in V(H_{i+1})$ and $\degreeofgraph{G_{\ell + i}}{u} = \degreeofgraph{H_{i+1}}{u} \leq k$.
	
	If $V(u) \subseteq V \setminus (\Leafof{n_1} \cup \Leafof{n_2})$, then $u$ is neither involved in the contractions of $\contractionprocess_1$ nor in the contractions of $\contractionprocess_2$.
	Hence, $u \in V$ and we have: $\degreeofgraph{G_{\ell + i}}{u} = \degreeofgraph{G}{u} \leq k$.
	
	Finally, we prove that the graph $G_{\ell + t}$ has degree bounded by $k$.
	For a vertex $u \neq w$ in $V(G_{\ell + t})$, we have $\degreeofgraph{G_{\ell + t}}{u} = \degreeofgraph{G_{\ell + t - 1}}{u}$ since $u$ is not involved in the contraction $\contract{w_1, w_2}{w}$ that is applied to $G_{\ell + t - 1}$ in order to obtain $G_{\ell + t}$.
	Now we consider $w$.
	First note, that $\indegreeofgraph{G_{\ell + t}}{w} = E(V \setminus V(w), V(w))$ and $\outdegreeofgraph{G_{\ell + t}}{w} = E(V(w), V \setminus V(w))$.
	Then we can derive:
	\begin{align*}
	\degreeofgraph{G_{\ell + t}}{w} 
	&= \maximum \set{\indegreeofgraph{G_{\ell + t}}{w}, \outdegreeofgraph{G_{\ell + t}}{w}} \\
	&= \maximum \set{E(V \setminus V(w), V(w)), E(V(w), V \setminus V(w))} \\
	&= \maximum \set{\sum_{u \in V(w), v \in V \setminus V(w)} E(v,u), \sum_{u \in V(w), v \in V \setminus V(w)} E(u,v) } \\
	&\leq \sum_{u \in V(w), v \in V \setminus V(w)} \maximum \set{E(v,u), E(u,v)} \\
	&= \sum_{u \in V(w), v \in V \setminus V(w)} E'(v,u) \\
	&= E'(V(w), V \setminus V(w)).
	\end{align*}
	Let $e$ denote the edge between $n$ and its parent node.
	Then $\widthof{e} = E'(V(w), V \setminus V(w))$.
	Since the width is bounded by $k$, we get that also $\degreeofgraph{G_{\ell + t}}{w}$ is bounded by $k$ and hence, $\degreeof{G_{\ell + t}} \leq k$.
	Note that in the case where $n$ is the root, the degree of $w$ is $0$. \\
	
	To prove that $\cwof{G'} \leq 2\sdimof{G}$, we show how to turn a given contraction process $\contractionprocess$ of $G$ with degree $k$ into a carving decomposition $\carvingdecomp$ of $G'$ with width at most $2k$.
	
	Let $\contractionprocess = G_1, \dots, G_{\abs{V}}$ be the given process.
	We inductively construct a tree $T$ with a labeling $\nodeLab : V(T) \rightarrow \bigcup_{i \in [1..\abs{V}]} V(G_i)$ that assigns to each node in $T$ a vertex from one of the $G_i$.
	We start with a root node $r$ and set $\nodeLab(r) = w$, where $w$ is the latest vertex that was introduced by the contraction process: $G_{\abs{V}} = G_{\abs{V} - 1}\contract{w_1, w_2}{w}$.
	
	Now suppose, we are given a node $n$ of $T$ with $\nodeLab(n) = v$.
	Assume $v$ occurs in $\contractionprocess$ on the right hand side of a contraction.
	This means there is a $j$ such that $G_{j+1} = G_j \contract{v_1, v_2}{v}$.
	We add two children $n_1$ and $n_2$ to $T$ and set $\nodeLab(n_i) = v_i$ for $i = 1, 2$.
	If $v$ does not occur on the right hand side of a contraction in $\contractionprocess$ then $v$ is a vertex of $G$.
	In this case, we stop the process on this branch and $n$ is a leaf of $T$.
	
	Hence, we obtain a tree $T$ where the leaves are labeled by vertices from $G$.
	If we set $\varphi$ to be $\nodeLab$ restricted to the leaves, then $\varphi$ is a bijection between the leaves of $T$ and $V$ and $\carvingdecomp$ is a carving decomposition of $G'$.
	
	Now we show by induction on the structure of $T$ that for each node $n$ of $T$ we have: $V(\nodeLab(n)) = \Leafof{n}$.
	Recall that $\Leafof{n}$ is the image, under $\varphi$, of the leaves of the subtree of $T$ rooted in $n$.
	We start at the leaves of $T$.
	Let $l$ be a leaf, then we have: $\Leafof{l} = \set{\lambda(l)}$ and moreover $V(\nodeLab(l)) = \set{\nodeLab(l)}$.
	For a node $n$ of $T$ with children $n_1, n_2$ such that the equations $\Leafof{n_i} = V(\nodeLab(n_i))$ already hold for $i = 1, 2$, we get:
	\begin{align*}
		\Leafof{n} = \Leafof{n_1} \cup \Leafof{n_2} = V(\nodeLab(n_1)) \cup V(\nodeLab(n_2)).
	\end{align*}
	Since $n_1, n_2$ are the children of $n$, we get by the construction of $T$ that there is a contraction in $\contractionprocess$ of the form $G_{j+1} = G_j\contract{\nodeLab(n_1), \nodeLab(n_2)}{\nodeLab(n)}$ and hence:
	\begin{align*}
		V(\nodeLab(n_1)) \cup V(\nodeLab(n_2)) = V(\nodeLab(n)).
	\end{align*}
	Finally, we show that the width of the carving decomposition $\carvingdecomp$ is at most $2k$.
	To this end, let $e$ be an edge in $T$, connecting the node $n$ with its parent node.
	Further, let $w$ denote $\nodeLab(n)$ and $w \in V(G_j)$.
	Then we have:
	\begin{align*}
		\widthof{e} &= E'(\Leafof{n}, V \setminus \Leafof{n}) \\
		&= E'(V(w), V \setminus V(w)) \\
		&= \sum_{u \in V(w), v \in V \setminus V(w)} E'(u,v) \\
		&= \sum_{u \in V(w), v \in V \setminus V(w)} \maximum \set{E(u,v), E(v,u)} \\
		&\leq \sum_{u \in V(w), v \in V \setminus V(w)} (E(u,v) + E(v,u)) \\
		&= E(V(w), V \setminus V(w)) + E(V \setminus V(w), V(w)) \\
		&= \outdegreeofgraph{G_j}{w} + \indegreeofgraph{G_j}{w}.
	\end{align*}
	Since $\degreeof{\contractionprocess}$ is bounded by $k$, also the degree of $G_j$ is bounded by $k$ and hence, $\widthof{e}$ is at most $2k$.
	All in all, the width of $\carvingdecomp$ is at most $2k$.
\end{proof}
% -------------------------------------------------------------------------------
% -------------------------------------------------------------------------------
% -------------------------------------------------------------------------------
% -------------------------------------------------------------------------------
\subsection{Correctness and Complexity of $\bcslsd$}\label{Appendix:BCSL}
% -------------------------------------------------------------------------------
% -------------------------------------------------------------------------------
% -------------------------------------------------------------------------------
% -------------------------------------------------------------------------------
We first show the correctness of the stated fixed-point iteration by proving Lemma \ref{Lemma:lfp}.
% -------------------------------------------------------------------------------
% -------------------------------------------------------------------------------
% -------------------------------------------------------------------------------
% -------------------------------------------------------------------------------
\begin{proof}[Proof of Lemma \ref{Lemma:lfp}]
	First, suppose that $L(S)\cap \sdimlang \neq \emptyset$.
	Then there exists a word $u$ in \mbox{$L(S)\cap \sdimlang$} with scheduling graph $G(u) = (V,E)$.
	We may assume that $V = [1..t]$.
	This means that all given threads participate in the computation.
	If this is not the case, we can delete the non-participating threads in the instance.
	By assumption, we know that $\sdimof{\csgraphof{u}}\leq \sdim$.
	Hence, there is a contraction process $\contractionprocess=\mgraph_1, \ldots, \mgraph_{\abs{V}}$ of $\csgraphof{u}$ such that $\degreeof{\contractionprocess} \leq \sdim$.
	
	We now associate to each node in $\mgraph_i$, an element from $(Q \times Q)^{\leq \sdim} \times \Powerset([1..t])$.
	To this end, let $u = u_1 \dots u_m$ be the unique context decomposition of $u$ with respect to to the run of $S$ on $u$. 
	Furthermore, let $q_j$ be the (memory) state of $M$, reached after reading $u_1 \dots u_j$ with $j \in [1..m]$.
	Note that $q_m = q_{\mathit{final}}$ and we set $q_0 = q_{\mathit{init}}$.
	Then we get the interface sequence $\alpha = (q_0,q_1)(q_1,q_2) \dots (q_{m-1},q_m)$ by taking the pair of states corresponding to each context $u_j$.
	
	From $\alpha$, we obtain the interface sequence $\sigma_i$, for each thread $i \in [1..t]$, by deleting from $\alpha$ the pairs of states of the contexts in which thread $i$ was not active.
	Note that the length (the number of pairs) of $\sigma_i$ is the number of times process $i$ is active.
	Further, this is the degree of $i$ in $\csgraphof{u}$.
	
	Now we use the obtained interface sequences to tag the nodes in $\mgraph_1$.
	We define the map $\ifmap{1}:[1..t] \rightarrow (Q\times Q)^{\leq \sdim} \times \Powerset([1..t])$ by $\ifmap{1}(i) = (\sigma_i,\set{i})$ for $i \in [1..t]$.
	Clearly $\ifmap{1}(i) \in L_1$ for any $i$.
	
	Given a map $\ifmap{j} : V(\mgraph_j) \rightarrow (Q\times Q)^{\leq \sdim} \times \Powerset([1..t])$ with $j < t$, we inductively construct a map $\ifmap{j+1}$ from the nodes of $\mgraph_{j+1}$ to $ (Q\times Q)^{\leq \sdim}\times\Powerset([1..t]) $.
	Let $\mgraph_{j+1} = \mgraph_j \contract{n_1, n_2}{n}$.
	Then $V(\mgraph_{j+1}) = (V(\mgraph_j) \setminus \set{n_1,n_2}) \cup \set{n}$.
	For $v \in V(\mgraph_j) \setminus \set{n_1,n_2}$, we set $\ifmap{j+1}(v) = \ifmap{j}(v)$.
	For the image of $n$, let $\ifmap{j}(n_1) = (\tau_1, T_1)$ and $\ifmap{j}(n_2) = (\tau_2, T_2)$. 
	Let $T$ denote the union $T_1 \cup T_2$.
	Further, let $\sigma_n$ be obtained from $\alpha$ as follows:
	First mark all the pairs of states in $\alpha$ that correspond to a thread $i$ in $T$.
	We concatenate any two adjacent pairs that are marked.
	If $(q_{i-1}, q_i)(q_i, q_{i+1})$ are marked, then we concatenate it to $(q_{i-1},q_{i+1})$ and mark the resultant pair. 
	We do this until we can no longer find an adjacent marked pair.
	Now we delete all the memory pairs that remain unmarked.
	We denote the resulting interface sequence by $\sigma_n$ and define: $\ifmap{j+1}(n) = (\sigma_n, T)$. 
	
	Note that concatenating adjacent marked pairs corresponds to deleting edges between $T_1$ and $T_2$ in $\csgraphof{u}$. 
	Hence, it is the same as contracting the corresponding nodes $n_1$ and $n_2$ in the graph $\mgraph_j$.
	We get that the length of $\sigma_n$ is the degree of $n$ in $G_j$, which is bounded by $\sdim$.
	Thus, $\ifmap{j+1}(n)$ is an element in $(Q\times Q)^{\leq \sdim}$ and in $\ifmap{j}(n_1) \mergeop^k \ifmap{j}(n_2) \subseteq L_{j+1}$. 
	
	The map $\ifmap{t}$ is a map from a single element $V(G_t) = \set{z}$ to $(Q\times Q)^{\leq \sdim} \times \Powerset([1..t])$.
	We get that $\ifmap{t}(z) = ((q_{0}, q_{m}),[1..t]) = ((q_{\mathit{init}}, q_{\mathit{final}}),[1..t]) \in L_{t+1} = L_t$. \\
	
	For the other direction, assume that $((q_{\mathit{init}}, q_{\mathit{final}}), T)  \in L_{m}$ for an $m \in \Nat$ and $T \subseteq [1..t]$. 
	We may assume that $T = [1..t]$.
	Otherwise, we delete the non-participating threads from the given instance.
	We show that  $L(S)\cap \sdimlang \neq \emptyset$. 
	To this end, we first construct an execution tree $\execTree$ together with a labeling $\nodeLab : V(\execTree) \rightarrow (Q \times Q)^{\leq \sdim} \times \Powerset([1..t])$, based on the interface sequences that were used to obtain $(q_{\mathit{init}}, q_{\mathit{final}})$.
	
	We start with a single root node $r$ and set $\nodeLab(r) = ((q_{\mathit{init}}, q_{\mathit{final}}), [1..t])$.
	Now given a partially constructed execution tree, we show how to extend it.
	If for all leaves $l$ of the constructed tree we have $\nodeLab(l) = (\tau, T)$, where $\abs{T} = 1$ then we stop.
	Otherwise, we pick a leaf $l$ with $\abs{T} > 1$. 
	Then there are generalized interface sequences $(\tau_1, T_1)$ and $(\tau_2, T_2)$ such that $(\tau, T) \in (\tau_1, T_1) \mergeop^k (\tau_2, T_2)$.
	Note that $(\tau_1, T_1)$ and $(\tau_2, T_2)$ are not unique.
	But we can arbitrarily pick any pair of them.
	To extend the tree, we add two nodes $l_1$ and $l_2$ and set $\nodeLab(l_i) = (\tau_i, T_i)$ for $i = 1, 2$.
	
	The procedure clearly terminates and yields an execution tree $\execTree$ where the leaves $l$ satisfy: 
	$\nodeLab(l) \in L_1$.
	Hence, the leaves show the interface sequences that were used to obtain $((q_{\mathit{init}}, q_{\mathit{final}}), [1..t])$ by the fixed point algorithm.
	
	Now we make use of the tree to construct a word in $L(S)$ with scheduling graph of bounded scheduling dimension.
	To obtain the word, we need to inductively define the map $\Pi: V(\execTree) \rightarrow (Q \times Q)^*$.
	We start at the leaves.
	For a leaf $l$, we set $\Pi(l) = \tau$, where $\tau$ is the first component of $\nodeLab(l)$: $\nodeLab(l) = (\tau, \set{i})$.
	Note that $\tau \in \ilangof{A_i}$.
	This means that for $\tau = (q_{i_1},q'_{i_2}) (q_{i_2},q'_{i_3}) \dots (q_{i_m},q'_{i_{m+1}})$ there are words $u^i_1, \dots, u^i_m$ such that $u^i_1 \dots u^i_m \in L(A_i)$ and $u^i_j \in L(M(q_{i_j},q'_{i_{j+1}}))$, for $j \in [1..m]$.
	
	Let $l$ be a node in $\execTree$ with children $l_1$ and $l_2$.
	Further, let $\nodeLab(l) = (\tau, T)$, $\Pi(l_1) = \tau_1'$ and $\Pi(l_2) = \tau_2'$.
	We set $\Pi(l) = \tau'$, where $\tau' \in \tau_1' \SShuffle \tau_2'$ and $\tau \in \tau' \! \closure$.
	As before, $\tau'$ does not need to be unique.
	But we can pick any of them, satisfying the requirements.
	We stop the procedure if we assigned the root a value under $\Pi$.
	
	Now we have that for any node $l$ in $\execTree$ with $\Pi(l) = (q_{i_1},q'_{i_2}) (q_{i_2},q'_{i_3}) \dots (q_{i_m},q'_{i_{m+1}})$ and $\nodeLab(l) = (\tau, T)$, there are words $u_1, \dots, u_m$ such that $u_1 \dots u_m \in \Shuffle{i \in T}{} L(A_i)$.
	For the root $r$ this means that there is a word $u$ which lies in $\Shuffle{i \in [1..t]}{} L(A_i)$ and in $L(M)$.
	Hence, $u \in L(S)$.

	It is left to show that $u$ has a scheduling graph of bounded scheduling dimension.
	To this end, consider the interface sequence associated to $r$: $\Pi(r) = (q_0,q_1)(q_1,q_2) \dots (q_{m-1},q_m)$.
	For each tuple \mbox{$(q_j, q_{j+1}), j \in [1..m-1]$}, there is a unique leaf $l_j$ in $\execTree$ such that $(q_j, q_{j+1})$ belongs to the interface sequence \mbox{$\Pi(l_j) = \tau_j$}.
	Let $\nodeLab(l_j) = (\tau_j, \set{c_j})$.
	Then we fix the order in which the thread take turns to: $c_0, \dots, c_{m-1}$.
	Note that $c_0$ is the thread corresponding to $(q_0, q_1)$, $c_1$ is the thread corresponding to $(q_1,q_2)$ and so on.
	Clearly, the computation of $S$ reading the word $u$ follows the described order.
	It is thus easy to construct the scheduling graph $G = \csgraphof{u}$.
	
	In order to show that $G = (V, E)$ has scheduling dimension bounded by $\sdim$, we first consider the undirected multigraph $G' = (V, E')$.
	Recall that we obtain $G'$ by taking all the vertices of $G$ and setting $E'(u,v) = \max \set{ E(u,v), E(v,u) }$ for $u,v \in V$.
	Now for any leaf $l_j$ of $\execTree$, we set $\varphi(l_j) = c_j$, where $\nodeLab(l_j) = (\tau_j, \set{c_j})$. 
	Then $(\execTree, \varphi)$ is a carving decomposition of $G'$.
	We show that the decomposition has width at most $\sdim$.
	Consider any edge $(n,k)$ of $\execTree$, where $n$ is a child node of $k$.
	Let $\nodeLab(n) = (\tau, T)$.
	Then removing the edge $(n,k)$ from $\execTree$ partitions the vertices of $G'$ into $T$ and $V \setminus T$.
	Now note that the number of pairs in $\tau$ shows how often $T$, seen as one thread, participates in the computation of $S$ on $u$.
	Hence, we get $E'(T, V \setminus T) \leq \abs{\tau}$.
	As $\abs{\tau}$ is bounded by $\sdim$, we get that the width of $(n,k)$ is also bounded by $\sdim$.
	Hence, $\widthof{(T, \varphi)} \leq \sdim$.
	Finally, by Lemma \ref{Lemma:CWisSdim}, we get that $\sdim(G) \leq \sdim$.
\end{proof}
% -------------------------------------------------------------------------------------
% -------------------------------------------------------------------------------------
% -------------------------------------------------------------------------------------
% -------------------------------------------------------------------------------------
It remains to estimate the complexity of computing the fixed point.
Since the generalized product requires disjoint sets of threads, the computation will stop after $t$ steps.
Each step has to go over at most $(m^{2(\sdim+1)} 2^t)^2 = m^{4\sdim + 4}4^{t}$ combinations of generalized interface sequences.
Computing each such composition $(\sigma_1, S_1)\mergeop^{\sdim} (\sigma_2, S_2)$ requires us to consider all $\rho\in \sigma_1 \SShuffle \sigma_2$.
Forming a shuffle of $\sigma_1$ and $\sigma_2$ can be understood as setting $\abs{\sigma_2}$ bits in a bitstring of length $\abs{\sigma_1}+\abs{\sigma_2}$.  
Hence, the number of shuffles $\rho$ is \mbox{${{\abs{\sigma_1}+\abs{\sigma_2}}\choose{\abs{\sigma_2}}}\leq 2^{2\sdim}=4^{\sdim}$}.
Given $\rho$, we determine $\rho\closure$ by iteratively forming summaries.
In the worst case, $\rho$ has length $2\sdim$.
We mark an even number of positions in $\rho$ and summarize the intervals between every pair of markers $2i$ and $2i+1$.
Since there are at most $\sdim$ even positions, we obtain
$\sum_{i=0}^{\sdim}{{2\sdim}\choose{2i}}\leq 4^{\sdim}$ elements in $\rho\closure$.
All in all, the effort is \mbox{$t m^{4\sdim + 4} 4^t 16^{\sdim} =   \bigOS{(2m)^{4\sdim}4^{t}}$}.
% -------------------------------------------------------------------------------------
% -------------------------------------------------------------------------------------
% -------------------------------------------------------------------------------------
% -------------------------------------------------------------------------------------
\subsection{Correctness and Complexity of $\bcslfix$}\label{Appendix:BCSLFIX}
% -------------------------------------------------------------------------------------
% -------------------------------------------------------------------------------------
% -------------------------------------------------------------------------------------
% -------------------------------------------------------------------------------------
Before we explain the complexity of the iteration, we prove Lemma \ref{Lemma:fixedlfp}.
% -------------------------------------------------------------------------------------
% -------------------------------------------------------------------------------------
% -------------------------------------------------------------------------------------
% -------------------------------------------------------------------------------------
\begin{proof}[Proof of Lemma \ref{Lemma:fixedlfp}]
	First, suppose that $\bcslfix$ holds on the instance $(S, G, \contractionprocess)$.
	This means that there is a word $u$ in $L(S)$ such that $\csgraphof{u} = G = (V,E)$.
	We may assume that $V = [1..t]$.
	Further, let $\contractionprocess = G_1, \dots, G_{t}$ be the contraction process.
	
	We proceed as in the proof of Lemma \ref{Lemma:lfp} and construct the maps $\nodeLab_j$ from $V(G_j)$ to $(Q \times Q)^{\leq \sdim} \times \Powerset([1..t])$.
	This time we get that $\nodeLab_1(v) \in S_v \times \Powerset([1..t])$ for all $v \in V$. 
	Moreover, for each contraction $G_{j+1} = G_j\contract{n_1, n_2}{n}$, we get: 
	$\nodeLab_{j+1}(n) \in (S_{n_1} \omerge{i}{k} S_{n_2})$, where $i = E(n_1,n_2)$ and $k = E(n_2,n_1)$.
	Note that these are the edge weights in $G_j$.
	Hence, $\nodeLab_{j+1}(n) \in S_n \times \Powerset([1..t])$.
	Then we also get that $\nodeLab_{t}(w) = ((q_{\mathit{init}} , q_{\mathit{final}}), [1..t]) \in S_w \times \Powerset([1..t])$. \\
	
	For the other direction, we show that $(q_{\mathit{init}}, q_{\mathit{final}}) \in S_{w}$ implies the existence of a word $u \in L(S)$ such that the scheduling graph of $u$ is the given graph $G$.
	Again, we may assume that $V = [1..t]$.
	Our goal is to construct an execution tree $\execTree$ together with a labeling $\nodeLab$ as in Lemma \ref{Lemma:lfp}.
	
	We know that the algorithm for $\bcslfix$ computes sets.
	It starts with the initial sets $S_{v}$ for $v \in V$ in the first step and computes further sets along $\contractionprocess$.
	For each contraction $\contract{n_1,n_2}{n}$ the set $S_n$ is given by $S_{n_1} \omerge{i}{k} S_{n_2}$, where $i = E(n_1,n_2)$, $k = E(n_2,n_1)$.
	
	We start the construction of $\execTree$ by setting $\nodeLab(r) = ((q_{\mathit{init}}, q_{\mathit{final}}), V(w))$.
	Recall that $w$ is the only remaining node in $V(G_t)$ and $V(w) = V$ is the set of vertices that contract to $w$ in $\contractionprocess$.
	Moreover, $(q_{\mathit{init}}, q_{\mathit{final}}) \in S_{w}$ by assumption.
	
	Now assume we have a node $l$ in the yet constructed tree such that $\nodeLab(l) = (\tau, T)$, $T = V(n)$ for a vertex $n$ in the contraction process, and $\tau \in S_n$.
	Assume that $\abs{T} > 1$.
	Then there is a contraction $\contract{n_1,n_2}{n}$ in $\contractionprocess$ and interface sequences $\tau_1 \in S_{n_1}$, $\tau_2 \in S_{n_2}$ such that $\tau \in \tau_1 \omerge{i}{k} \tau_2$ with $i = E(n_1,n_2)$, $k = E(n_2,n_1)$.
	We add two nodes $l_1, l_2$ to the tree and set $\nodeLab(l_i) = (\tau_i, V(n_i))$ for $i = 1,2$.
	We stop the process if every constructed node has a labeling $(\tau, T)$ with $\abs{T} = 1$.
	Note that each leaf $l$ in $\execTree$ corresponds to a vertex $v \in V$ and $\nodeLab(l) = (\tau, \set{v})$.

	Now we show how to construct a map $\Pi$ from the nodes of $\execTree$ to the set of \emph{locked} interface sequences as in Lemma \ref{Lemma:lfp}. 
	A \emph{locked} interface sequence is an interface sequence, where adjacent pairs of memory states can be locked.
	This means that, when forming the shuffle with another interface sequence, the locked positions cannot be divided: No other context is allowed to occur between locked pairs.
	
	We start with the leaves of the tree.
	Let $l$ be a leaf with $\nodeLab(l) = (\tau, \set{v})$. 
	Then we set $\Pi(l) = \tau$ without locking any pairs of states.
	
	Now, let $v$ be a node in $\execTree$ with children $v_1$ and $v_2$ such that $\Pi(v_1) = \tau_1$ and $\Pi(v_2) = \tau_2$ are already constructed.
	Let $\nodeLab(v) = (\sigma, V(n))$, $\nodeLab(v_1) = (\sigma_1, V(n_1))$, and $\nodeLab(v_2) = (\sigma_2, V(n_2))$, where $n,n_1$, and $n_2$ are nodes occurring in a contraction $\contract{n_1,n_2}{n}$ of $\contractionprocess$.
	We set $\Pi(v) = \tau$, where $\tau$ is a locked interleaving sequence such that:
	(1) $\tau \in \tau_1 \SShuffle \tau_2$,
	(2) any adjacent pairs that were locked in $\tau_1$ and $\tau_2$ are still adjacent and locked in $\tau$, and
	(3) we find exactly $i$ out-contractions and $k$ in-contractions in $\tau$, where the pairs of states are not locked, and lock them.
	Here, $i = E(n_1,n_2)$ and $j = E(n_2,n_1)$.
	
	We stop the process, when we assigned a value under $\Pi$ to the root $r$.
	Then, in the locked interface sequence $\Pi(r)$ every adjacent pairs of states are locked.
	As in Lemma \ref{Lemma:lfp}, we get that there is a word $u \in L(S)$ following the interface sequence $\Pi(r)$.
	Furthermore, we an construct the sequence $\mathit{ord}$ describing the order in which the threads take turns on $u$.
	From this we get the graph $G(u)$.
	
	In the sequence $\mathit{ord}$, for each two processes $v$ and $v'$, we have that $v'$ appears immediately after $v$ exactly $E(v,v')$ many times.
	From this, we actually get that $G(u) = G$.
\end{proof}
% -------------------------------------------------------------------------------------
% -------------------------------------------------------------------------------------
% -------------------------------------------------------------------------------------
% -------------------------------------------------------------------------------------
For the complexity, note that the iteration stops after $t$ steps.
Each step has to form at most $(m^{2(\sdim+1)})^2 = m^{4\sdim + 4}$ directed products of interface sequences.
Computing $\sigma \omerge{i}{k} \tau$ can be done similarly to the more general product $\mergeoppar{\sdim}$.
We seek through all $4^{\sdim}$ elements in $\sigma \SShuffle \tau$ and choose the $i+j$ positions where we need to contract.
Hence, the directed products can be computed in time $\bigOS{16^{\sdim}}$, which completes the complexity estimation stated in Theorem \ref{Theorem:bcslfix}.
% -------------------------------------------------------------------------------------
% -------------------------------------------------------------------------------------
% -------------------------------------------------------------------------------------
% -------------------------------------------------------------------------------------
\subsection{Lower Bound for Round Robin}
% -------------------------------------------------------------------------------------
% -------------------------------------------------------------------------------------
% -------------------------------------------------------------------------------------
% -------------------------------------------------------------------------------------
\begin{proof}[Proof of Lemma \ref{Theorem:RRLB}]
	We elaborate on the reduction from $\kkClique$ to $\bcslrr$.
	Our goal is to map an instance $(G,k)$ of $\kkClique$ to an instance $(S = \SMS{\Sigma}{M}{A}{i}{t}, \cs)$ of $\bcslrr$ such that $\cs = k$ and $m \leq 2 \cdot k^3$.
	Then a $2^{o(\cs \log(m))}$-time algorithm for $\bcslrr$ would yield an algorithm with runtime
	\begin{align*}
		2^{o( k \log(2 \cdot k^3) )} = 2^{o( 3 k \log(k) + k\log(2) )} = 2^{o( k \log(k) )}
	\end{align*}
	for $\kkClique$.
	This contradicts $\ETH$.
	
	We proceed in two phases:
	A guess-phase where we guess a vertex from each row.
	And a verification-phase where we verify that the guessed vertices induce a clique on $G$ by enumerating all the needed edges among the vertices.
	
	Assume that $V(G) = \Set{v_{ij}}{i,j \in [1..k]}$.
	Vertex $v_{ij}$ is the $j$-th node in row $i$.
	Set $\Sigma = \Set{(v,i), (\#,i)}{v \in V(G), i \in [1..k]}$
	We construct a process $A_i, i \in [1..k]$ for each row of $G$.
	The automaton $A_i$ has $k+1$ states, $q^i_0, \dots, q^i_k$, and the following transitions.
	\begin{itemize}
		\item To pick a vertex from row $i$: 
		$q^i_0 \xrightarrow{(v_{ij}, i)} q^i_j$ for $j \in [1..k]$.
		\item To enumerate edges containing the chosen node: 
		For each $i' < i$ and $j' \in [1..k]$ such that $v_{ij}$ and $v_{i'j'}$ share an edge in $G$, we get: $q^i_j \xrightarrow{(v_{i'j'}, i)} q^i_j$.
		\item For the trivial context: $q^i_j \xrightarrow{(\#, i)} q^i_j$.
	\end{itemize}
	
	We construct the memory automaton $M$ along the two aforementioned phases.
	In the first phase, $M$ runs through each of the $A_i$ and synchronizes on one of the letters $(v_{ij}, i)$, which amounts to picking a vertex from row $i$.
	To this end, $M$ has exactly $k+1$ states $\set{q_1,\dots q_{k+1} }$ and for all $i \in [1..k]$, the transitions: $q_i \xrightarrow{(v_{ij}, i)} q_{i+1}$, where $j \in [1..k]$.
	Thus, in the first phase $M$ reads a word of the form $(v_{1j_1}, 1) \dots (v_{kj_k}, k)$.
	The contribution of $A_i$ to the word is simply $(v_{ij_i}, i)$.
	
	In the second phase, $M$ performs exactly $k-1$ rounds. 
	In the $i$-th round, it first performs the trivial contexts and synchronizes on $(\#, i')$ with $A_{i'}, i' \in [1..i-1]$.
	Then $M$ stores $v_{ij}$ in its states and synchronizes on $(v_{ij}, i')$ with $A_{i'}, i' \in [i+1..k]$.
	Note that $A_{i'}$ can only synchronize with $M$ if the vertex chosen in row $i'$ and $v_{ij}$ share an edge.
	Furthermore, the synchronization in that step is in ascending order: $A_1, \dots, A_k$.
	Hence, as in the first phase, the schedule is round-robin.
	Formally, for round $i$, we have the states $\set{(p^i_1, \bot), \dots, (p^i_i, \bot)}$ for the trivial contexts and $\Set{(p^i_{i'}, j)}{ i' \in [i+1..k], j \in [1..k] }$ for enumerating the edges.
	We also need the last state in round $i$: $(p^i_{k+1}, \bot)$.
	Further, we set $q_{k+1} = (p^1_1, \bot)$ and $(p^{i-1}_{k+1},\bot) = (p^i_1,\bot)$ for $i \in [2..k-1]$ to connect the different rounds.
	The final state of $M$ is the last state in round $k-1$: $(p^{k-1}_{k+1}, \bot)$.
	In round $i$, we get the following transitions:
	\begin{itemize}
		\item To perform the trivial contexts, we get for $i' \in [1..i-1]$: $(p^i_{i'}, \bot) \xrightarrow{(\#, i')} (p^i_{i'+1})$.
		\item To remember the vertex chosen in row $i$, we get for all $j \in [1..k]$: $(p^i_i, \bot) \xrightarrow{(v_{ij}, i)} (p^i_{i+1}, j)$.
		\item For the actual enumeration of the edges, we have for each $i' \in [i+1..k]$ and $j \in [1..k-1]$ the transition $(p^i_{i'}, j) \xrightarrow{(v_{ij}, i')} (p_{i'+1}, j)$.
		\item For the last transition in round $i$, we get for each $j \in [1..k]$: $(p^i_k, j) \xrightarrow{(v_{ij}, k)} (p^i_{k+1}, \bot)$.
	\end{itemize}
	
	Now note that a word of the form $(\#, 1) \dots (\#, i-1) . (v_{ij}, i) \dots (v_{ij}, k)$ is accepted in round $i$ if and only if $v_{ij}$ is the chosen vertex from row $i$ and there is an edge to each of the vertices chosen from the rows $i+1, \dots, k$.
	Hence, vertices $v_{1j_1}, \dots, v_{kj_k}$ form a clique as desired if and only if the word $w = \textbf{Init}. \textbf{Ver}_1 \dots \textbf{Ver}_{k-1} \in L(S)$, where
	\begin{itemize}
		\item $\textbf{Init}$ = $(v_{1j_1}, 1) \dots (v_{kj_k}, k)$, and
		\item $\textbf{Ver}_i = (\#, 1) \dots (\#, i-1) . (v_{ij_i}, i) \dots (v_{ij_i},k)$ for $i \in [1..k-1]$.
	\end{itemize}
	
	Further, the words in $L(S)$ can only be obtained from $k$ rounds of the round-robin schedule and $M$ has at most $2k^3$ many states.
\end{proof}
\section{Proofs for Section~\ref{Section:Discussion}}
\label{Appendix:Discussion}

In Section \ref{Section:Discussion}, we consider the reachability problem in shared memory systems.
We show a hardness result for the parameterization of the problem by the number of threads, as well as an $\FPT$-result if we additionally parameterize by the size of the threads.
The presented $\FPT$-algorithm is optimal.
We show a lower bound based on $\kkClique$.
Formally, the problem is defined as follows:
% -------------------------------------------------------------------------------
% -------------------------------------------------------------------------------
% -------------------------------------------------------------------------------
% -------------------------------------------------------------------------------
\begin{problem}
	\problemtitle{Context Switching}
	\problemshort{($\CS$)}
	\probleminput{ An SMCP $S = \SMS{\Sigma}{M}{A}{i}{t}$.}
	\problemquestion{Is $L(S) \neq \emptyset$ ?}
\end{problem}
% -------------------------------------------------------------------------------
% -------------------------------------------------------------------------------
% -------------------------------------------------------------------------------
% -------------------------------------------------------------------------------
\subparagraph*{A Hardness Result.}
% -------------------------------------------------------------------------------------
% -------------------------------------------------------------------------------------
% -------------------------------------------------------------------------------------
% -------------------------------------------------------------------------------------
We show that parameterizing by the number of threads yields the problem $\CS(t)$, which is hard for any level of the $\W$-hierarchy. 
% -------------------------------------------------------------------------------------
% -------------------------------------------------------------------------------------
% -------------------------------------------------------------------------------------
% -------------------------------------------------------------------------------------
\begin{lemma}
	$\CS(t)$ is $\W[i]$-hard for any $i \geq 1$.
\end{lemma}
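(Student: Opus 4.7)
The plan is to reduce from the \emph{unbounded} DFA intersection non-emptiness problem parameterized by the number of automata. While the bounded variant $\BDFAI(m,n)$ introduced in the appendix is merely $\W[1]$-complete, the unbounded version (drop $m$ from the input and simply ask whether $\bigcap_i L(B_i) \neq \emptyset$), parameterized solely by the number $n$ of automata, was shown by Wareham~\cite{Wareham2000} to be hard for every level $\W[i]$ of the W-hierarchy. Transporting this hardness to $\CS(t)$ gives exactly the desired statement.

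The reduction is essentially the one from Lemma~\ref{Lemma:BDFAI}, but freed from the length constraint. Given DFAs $B_1, \dots, B_n$ over an alphabet $\Gamma$, we set $\Sigma = \Gamma \times [1..n]$ and, for each $i \in [1..n]$, define a thread automaton $A_i$ over $\Sigma \times \set{A_i}$ that is a copy of $B_i$ in which every transition $p \Move{a} p'$ of $B_i$ is replaced by $p \Move{(a,i)} p'$. The shared memory $M$ is the automaton whose language is
\[
\bigl(\Set{(a,1)(a,2)\cdots(a,n)}{a \in \Gamma}\bigr)^*,
\]
so that $M$ forces the threads to be scheduled in round-robin fashion, each reading the same $\Gamma$-letter in a block of $n$ consecutive symbols. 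This can be realised with a $\bigO{n}$-state automaton. Setting $S = \SMS{\Sigma}{M}{A}{i}{n}$, one checks as in Lemma~\ref{Lemma:BDFAI} that any $w \in L(S)$ projects to some $u \in \Gamma^*$ with $u \in \bigcap_{i=1}^n L(B_i)$, and conversely every such $u$ lifts to a word in $L(S)$. Hence $L(S) \neq \emptyset$ iff $\bigcap_{i=1}^n L(B_i) \neq \emptyset$.

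The reduction is computable in polynomial time, and the parameter of the produced instance is $t = n$, so it is a valid parameterized reduction from unbounded DFA intersection non-emptiness (parameter $n$) to $\CS(t)$. Since the former is $\W[i]$-hard for every $i \geq 1$, so is $\CS(t)$.

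The only delicate point is citing the correct version of Wareham's theorem: it is the removal of the length bound that raises the complexity beyond $\W[1]$, since without it the witness $u$ need not be short in the parameter. Given that result, nothing else in the reduction is subtle; the memory $M$ simply simulates a broadcast channel, which is a standard trick in this setting.
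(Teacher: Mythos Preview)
Your proposal is correct and follows essentially the same route as the paper: both reduce from DFA intersection non-emptiness via the construction of Lemma~\ref{Lemma:BDFAI}, turning each input DFA into a thread and letting the memory enforce that all threads read the same letter in lockstep. The only cosmetic difference is the source problem cited---the paper invokes $\BDFAI(n,\abs{\Sigma})$ while you invoke the unbounded variant parameterized by $n$ alone (replacing the $m$-fold iteration in $M$ by a Kleene star); both are $\W[i]$-hard for all $i$ by Wareham. One minor slip: your memory automaton needs $\bigO{n\cdot\abs{\Gamma}}$ states rather than $\bigO{n}$, since it must remember the chosen letter $a$ across each block, but this is still polynomial and does not affect the argument.
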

% -------------------------------------------------------------------------------------
% -------------------------------------------------------------------------------------
% -------------------------------------------------------------------------------------
% -------------------------------------------------------------------------------------
For the proof, we reduce from $\BDFAI(n,\abs{\Sigma})$, which is known to be hard for $\W[i], i\geq 1$ \cite{Wareham2000}.
Note that such a reduction is constructed in Lemma \ref{Lemma:BDFAI}.
In fact, the reduction does not change the number of threads and preserves the parameter.
\subparagraph*{Upper Bound.}
% -------------------------------------------------------------------------------------
% -------------------------------------------------------------------------------------
% -------------------------------------------------------------------------------------
% -------------------------------------------------------------------------------------
Now we add a further parameter, the maximal size of the threads: $a$.
% -------------------------------------------------------------------------------------
% -------------------------------------------------------------------------------------
% -------------------------------------------------------------------------------------
% -------------------------------------------------------------------------------------
\begin{lemma}
	$\CS(a,t)$ can be solved in time $\bigOS{a^t}$.
\end{lemma}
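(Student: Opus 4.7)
The plan is to construct a standard asynchronous product automaton $P$ that simulates $M$ together with all threads $A_1,\ldots,A_t$ simultaneously, and then reduce $\CS(a,t)$ to a reachability query in $P$.

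Concretely, I would define $P$ over state space $Q \times P_1 \times \cdots \times P_t$, where $Q$ is the set of states of $M$ and $P_i$ is the set of states of $A_i$. The initial state is $(q_0, p_0^1, \ldots, p_0^t)$ and the (unique) accepting state is $(q_f, p_f^1, \ldots, p_f^t)$. The transitions encode one interleaving step: from $(q, p_1, \ldots, p_t)$ there is a transition to $(q', p_1, \ldots, p_i', \ldots, p_t)$ whenever there exists $a \in \Sigma$ and an index $i \in [1..t]$ such that $p_i \xrightarrow{a} p_i'$ is a transition of $A_i$ and $q \xrightarrow{a} q'$ is a transition of $M$. Intuitively, each transition of $P$ picks a thread to schedule for one step and synchronises that step with $M$.

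The correctness is routine and follows the standard argument for asynchronous products: accepting runs of $P$ from the initial to the final state correspond bijectively (up to relabelling transitions by the thread index) to words in $L(M) \cap \SShuffle_{i\in[1..t]} L(A_i) = L(S)$. In particular, $L(S) \neq \emptyset$ if and only if the final state of $P$ is reachable from its initial state.

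For the complexity, the state space of $P$ has size at most $|M| \cdot \prod_{i=1}^{t} |A_i| \leq m \cdot a^{t}$, and the number of transitions is bounded by this times $t \cdot |\Sigma|$. Reachability in a directed graph of size $N$ is solvable in linear time $\bigO{N}$ by breadth-first search, so the whole algorithm runs in time $\bigO{m \cdot a^{t} \cdot t \cdot |\Sigma|} = \bigOS{a^{t}}$, where the $\bigOS{}$ notation absorbs the polynomial factors $m$, $t$, and $|\Sigma|$ that belong to the input size. There is no genuine obstacle here; the only care needed is to make sure the product is built lazily (on the fly during BFS) so that the construction itself does not exceed the claimed bound, and to observe that the alphabet of each thread $A_i$ is indexed by $i$, which is precisely what allows the asynchronous interleaving semantics to match the shuffle operator.
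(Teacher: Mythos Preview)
Your proposal is correct and follows essentially the same approach as the paper: build the asynchronous product of $M$ with all the $A_i$ and check reachability/non-emptiness in the resulting graph, whose state space has size $\bigOS{a^t}$. One minor caveat is that the paper's shuffle operator allows a thread's projection to be $\varepsilon$, so the accepting condition should technically permit each component to be either in $p_f^i$ or untouched in $p_0^i$; this does not affect the bound, and the paper's own proof elides the same detail.
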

% -------------------------------------------------------------------------------------
% -------------------------------------------------------------------------------------
% -------------------------------------------------------------------------------------
% -------------------------------------------------------------------------------------
\begin{proof}
The idea is to run the threads $A_i$ and the memory automaton $M$ concurrently on a product automaton.
The set of states of the product is the set $Q_{A_1} \times \dots \times Q_{A_t} \times Q_M $.
The transition relation is obtained as follows:
From any state $(p_1, \dots, p_i, \dots, p_t, q)$ of the product, we get a transition to $(p_1, \dots, p'_i, \dots, p_t, q')$, labeled by $a \in \Sigma$ if \mbox{$a \in \sync{q}{q'}{i}{p_i}{p'_i}$}.
This means that $A_i$ and $M$ synchronize on the letter $a$.
Note that the language of the product is non-empty if and only if $L(S) \neq \emptyset$.

The product can be build and checked for non-emptiness in $\bigOS{a^t}$ time.
\end{proof}
\subparagraph*{Lower Bound.}
% -------------------------------------------------------------------------------------
% -------------------------------------------------------------------------------------
% -------------------------------------------------------------------------------------
% -------------------------------------------------------------------------------------
We show the optimality of the above algorithm.
To this end, we give a reduction from $\kkClique$.
% -------------------------------------------------------------------------------------
% -------------------------------------------------------------------------------------
% -------------------------------------------------------------------------------------
% -------------------------------------------------------------------------------------
\begin{lemma}
 Assuming $\ETH$, $\CS$ cannot be solved in $2^{o(t\log(P))}$ time.
\end{lemma}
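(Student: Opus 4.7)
The plan is to reduce $\kkClique$ to $\CS$, mapping an instance with $k^2$ vertices (arranged in a $k \times k$ matrix) to an SMCP with $t = k$ threads, each of size polynomial in $k$. Since $\kkClique$ admits no $2^{o(k \log k)}$-time algorithm under $\ETH$ \cite{Lokshtanov2011}, a $2^{o(t \log P)}$-time algorithm for $\CS$ would give a $2^{o(k \log k)}$-time algorithm for $\kkClique$, a contradiction.

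For the reduction, given $(G,k)$ with $V(G) = \Set{v_{ij}}{i,j \in [1..k]}$, I would set $\Sigma = V(G)$ and introduce one thread $A_i$ per row. Each $A_i$ has states $q^i_0, q^i_1, \ldots, q^i_k$, an \emph{initial-choice} transition $q^i_0 \Move{v_{ij}} q^i_j$ for every $j \in [1..k]$, and a \emph{response} self-loop $q^i_j \Move{v_{ij}} q^i_j$ at each chosen state. All $q^i_j$ with $j \geq 1$ are accepting. The key design point is that once $A_i$ has fired its initial-choice transition to some $q^i_{j_i}$, it can only re-emit $v_{ij_i}$, so the chosen column is locked in for the rest of the computation.

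The memory $M$ orchestrates the edge verification. It iterates through all $\binom{k}{2}$ pairs $(i,i')$ with $i < i'$ in a fixed order. For each pair, it first consumes a letter $v_{ij}$ (which, by the alphabet design, can only come from $A_i$) and records $j$ in its state, then consumes a letter $v_{i'j'}$ (only from $A_{i'}$) and transitions to the next pair iff $(v_{ij},v_{i'j'}) \in E(G)$; after the last pair it enters the final state. The state space of $M$ is thus of the order $\binom{k}{2}\cdot (k+1)$, i.e.\ $\bigO{k^3}$. In particular, both $t$ and the maximal thread size $P$ are polynomial in $k$, so $t\log P = \bigO{k\log k}$.

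Correctness follows in both directions. If $G$ contains a $k \times k$-clique $v_{1j_1},\ldots,v_{kj_k}$, then letting $A_i$ pick $j_i$ and emit $v_{ij_i}$ exactly $k-1$ times yields a word in $L(S)$, since every pairwise edge test on the way through $M$ succeeds. Conversely, any word in $L(S)$ forces each $A_i$ to commit to a single column $j_i$ (by the self-loop structure), and each edge test passed by $M$ certifies $(v_{ij_i},v_{i'j_{i'}}) \in E(G)$, so $\set{v_{1j_1},\ldots,v_{kj_k}}$ is the required clique. The main point to check carefully is the shuffle semantics: although the interleaving of thread letters is unconstrained, each letter $v_{ij}$ belongs exclusively to $A_i$'s alphabet, so $M$'s deterministic query schedule fixes in which order the threads' emissions must be consumed. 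This closes the reduction and yields the claimed lower bound.
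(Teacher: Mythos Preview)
Your proposal is correct and follows the same high-level strategy as the paper: reduce from $\kkClique$ so that $t=k$ and each thread has $k+1$ states, giving $t\log P=\Theta(k\log k)$ and hence a contradiction with the known $\kkClique$ lower bound under $\ETH$.

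The paper does not build a fresh reduction here; it simply reuses the construction from the round-robin lower bound (Lemma~\ref{Theorem:RRLB}), observing that the memory there already enforces the required schedule, so the same instance works for unrestricted $\CS$. In that construction the edge test is encoded in the \emph{threads} (thread $A_{i'}$ can echo $v_{ij}$ only if its chosen vertex is adjacent to $v_{ij}$), whereas you place the edge test in the \emph{memory} and let each thread only re-emit its own chosen vertex. Your variant is self-contained and arguably cleaner for $\CS$, since you do not need the trivial-context symbol $\#$ or the round-robin bookkeeping; the paper's version has the advantage of literally recycling an earlier proof. Both yield identical parameter bounds, so the difference is purely in presentation.
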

% -------------------------------------------------------------------------------------
% -------------------------------------------------------------------------------------
% -------------------------------------------------------------------------------------
% -------------------------------------------------------------------------------------
\begin{proof}
	The reduction from Lemma \ref{Theorem:RRLB} also applies here.
	Note that the we construct $k$ threads with $k+1$ many states each.
	Furthermore, the memory $M$ enforces a round-robin schedule which can be simulated by $\CS$.
\end{proof}

\end{document}